\documentclass[pdflatex,sn-basic]{sn-jnl}% 
\usepackage{graphicx}%
\usepackage{multirow}%
\usepackage{amsmath,amssymb,amsfonts}%
\usepackage{amsthm}%
\usepackage[title]{appendix}%
\usepackage{xcolor}%
\usepackage{textcomp}%
\usepackage{manyfoot}%
\usepackage{booktabs}%
\usepackage{algorithm}%
\usepackage{algorithmicx}%
\usepackage{algpseudocode}%
\usepackage{listings}%
\usepackage{graphicx}
\usepackage{tikz}
\usepackage{empheq}
\usetikzlibrary{calc}

\usepackage{mathtools}
\usepackage{bookmark}

\theoremstyle{thmstyleone}%
\newtheorem{theorem}{Theorem}% 
\newtheorem{proposition}[theorem]{Proposition}% 

\newtheorem{corollary}[theorem]{Corollary}% 
\newtheorem{lemma}[theorem]{Lemma}% 

\theoremstyle{thmstyletwo}%
\newtheorem{remark}{Remark}%

\theoremstyle{thmstylethree}%

\newcommand{\ud}{\,d}%\mathrm{d}} % not common

\usepackage{amsmath,amssymb,mathtools}
\usepackage{xcolor}
\usepackage{tikz}
\usetikzlibrary{decorations.pathreplacing,calc,arrows.meta}
\usepackage[bottom]{footmisc}
\usepackage{soul} 
\usepackage{cancel}
 
\begin{document}

\title{Structured Population Models for Follicular Development}

\author[1]{\fnm{Edilbert} \sur{Christhuraj}}\email{edilbert.christhuraj@hs-anhalt.de}
\equalcont{These authors contributed equally to this work.}

\author[1]{\fnm{Claudio} \sur{Iuliano}}\email{claudio.iuliano@hs-anhalt.de}
\equalcont{These authors contributed equally to this work.}

\author*[1]{\fnm{Alexander} \sur{Lange}}\email{alexander.lange@hs-anhalt.de}
\equalcont{These authors contributed equally to this work.}

\affil*[1]{Anhalt University of Applied Sciences, Bernburger Str. 55, 06366 Köthen, Germany}

\abstract{
Ovarian follicle development has been modeled either by compartmental models, where developmental stages are represented by discrete compartments, or by physiologically structured population models (PSPMs), formulated as transport PDEs describing follicles along a continuous maturation variable. Compartmental models reproduce full-cycle hormonal and follicular dynamics but cannot describe cellular features. PSPMs capture multiscale behavior, yet lack a direct link to compartmental descriptions, making hormonal feedback difficult to incorporate. Here, we introduce a class of ODE systems derived as finite-dimensional reductions of PSPMs via a perturbative moment closure. These systems provide backbone approximations of the structured dynamics, retaining key nonlinear mechanisms governing recruitment, selection for dominance, and atresia while projecting maturation onto macroscopic variables. Zero- and first-order moments, derived from phenomenological assumptions regarding follicular maturation, allow us to realistically model follicle sizes and blood estradiol concentrations over one follicular wave. This application is supported by bovine data and likely generalizable to other mammalian species.}

\keywords{Structured Population Models, ODE Reduction, Ovarian Follicular Development, Follicle Sizes, Bovine Data, Estrous Cycle, Menstrual Cycle, Gonadotropic Hormones, Estradiol}

\maketitle

\section{Introduction}\label{intro}

Over the past decades, various mathematical models have been developed to describe the complex multiscale process leading primordial ovarian follicles into ovulatory follicles, known as folliculogenesis \citep{biswas22}. These models play a crucial role in advancing our understanding of ovarian development in physiological scenarios as well as guiding diagnostic and therapeutic strategies for menstrual disorders such as polyendocrine metabolic ovarian syndrome (PMOS) \citep{TeedeKhomamiMormanEtAl2026}.

The central tasks in modeling folliculogenesis are to identify and to describe the hormonal regulation and the follicle growth dynamics underlying ovarian development. These biological mechanisms are typically investigated either from macroscopic (follicular length scale \(\sim \) 5--20\(\mathrm{mm}\)) or microscopic perspectives (cellular length scale \(\sim10\mathrm{\mu m}\)) \citep{gougeon1996regulation,ackert2001intercellular}%; \cite{biswas22,reny25,Orozco-Galindo2025-br}
. 
At the macroscopic level, follicular maturation manifests itself through distinct morphological stages, such as primordial, primary, pre-antral, antral, and pre-ovulatory, each characterized by specific structural and functional transformations such as sensitivity to follicle-stimulating hormone (FSH) and luteinizing hormone (LH), development of an antrum, layer formation, etc. 
\citep{Palermo2007-rq}.
At the microscopic level, different cell populations go through different developmental processes. For example, early follicular growth involves the proliferation and differentiation of pre-granulosa cells, which gradually organize into one or more layers of cuboidal granulosa cells surrounding the oocyte \citep{mcgee2000initial,WANG2025103614}. As the follicle becomes antral, a fluid-filled cavity---the antrum---forms due to the secretion of follicular fluid rich in hormones \citep{edwards,rodgers2010morphology}. The theca interna cells begins to synthesize androgens, which are aromatized by granulosa cells into estradiol, thereby reinforcing follicular growth and coordinating the endocrine transition toward dominance and ovulation \citep{Franks2018-jd}.
From the antral to the ovulatory phase, the follicle undergoes rapid enlargement accompanied by intensified hormonal activity. Mural granulosa cells acquire luteinizing hormone (LH) receptors and respond to the LH surge by triggering cumulus expansion, resumption of oocyte meiosis, and the synthesis of enzymes that degrade the follicular wall. Eventually, the follicular wall ruptures, releasing the mature oocyte \citep{russell2007molecular, richards2018ovulation}. In parallel, surrounding cells differentiate into the theca interna and theca externa layers, providing both mechanical support and endocrine function through steroidogenesis, which is essential for sustaining granulosa cell development \citep{Magoffin2002}. The interplay between cellular populations and the hormonal environment drives the selection of dominant follicles and ultimately leads to ovulation \citep{baird1987model, edson2009the_ovary}.

To realistically capture
the dynamical coordination between cellular differentiation, follicular maturation, and hormonal feedback, 
models of folliculogenesis need to cover multiple scales.
Two main approaches have been followed:
\begin{enumerate}
\item {\em Ordinary differential equation (ODE) models}, which describe the development at
the follicular scale. Numerous formulations and applications have been proposed by many research groups over half a century, e.g., \cite{Bogumil1972-cw, Lacker1981-js, SOBOLEVA200045, Clark2003-oe, FISCHERHOLZHAUSEN2022111150, shilo} 
to name a few that below we cite for their detailed contribution. 
\item {\em Partial differential equation (PDE) models}, which describe the development at 
the histological scale. 
To our knowledge, those have extensively been employed by the SISYPHE project-team of the Inria Institute (France); e.g., \cite{ECHENIM200557,reach, philippe, aymard16, Monniaux2016-yt, dimred}.
Their contributions span a period of more than a decade.
\end{enumerate}
The two approaches provide complementary perspectives on the development of ovarian follicles and their interaction with the hormonal environment.

ODE models are well established and can describe the dynamics of the entire menstrual or estrous cycle, including endocrine feedback within the hypothalamic--pituitary--ovarian axis, while---at least in part---tracking follicular development \citep{Clark2003-oe,BOER20115987}.
This strategy uses compartmental models, which describe follicular development through discrete stages, distributing the total follicular mass in the ovaries from pre-antral to ovulatory compartments \citep{HENDRIX201431,Graham2017-ye,cells11233908}. These ODE models realize the stage-dependent hormonal sensitivity of follicles by associating the hormone production with the discrete follicular stages. Despite being accurate in producing the correct endocrine feedback loop, they do not explicitly represent individual follicles and therefore cannot account for inter-follicular competition.

Another strategy in the ODE setting is to track the dynamics of individual follicles and their competitive interactions \citep{Lacker1981-js,SOBOLEVA200045,Lange2018-xf,shilo}.
In that, it predicts the amounts of estradiol (E2) produced by each follicle or, in more recent work, the individual follicular sizes, yet this strategy typically lacks a structured representation of developmental stages.  When individual follicle dynamics are coupled with the full endocrine feedback loop, the absence of follicular stages makes the coupling between hormones and follicles less natural to formulate: \cite{FISCHERHOLZHAUSEN2022111150}, e.g., used time-dependent coefficients to synchronize the follicular E2-production and the production of progesterone (P4).

The mentioned examples highlight a key limitation of existing ODE models: capturing both the internal structure of follicular development and the competitive dynamics among follicles is a challenging task within a purely compartmental model.
In contrast, the PDE framework provides a natural setting for incorporating a continuous representation of the macroscopic as well as the microscopic states.
While the former is included through averages (allowing for a temporal description of follicular masses or sizes, similar to the ODE approach), the latter enters the formalism through an additional independent variable that characterizes the histological development of the follicular cells.

PDE models that introduce an auxiliary stage or maturity variable fall into the class of
{\em physiologically structured population models} (PSPMs),\footnote{Depending on the context, PSPMs are often classified as mesoscopic models, since they describe the evolution of population densities rather than tracking individual entities. In the present work, we adopt a slightly different terminology similar to \cite{aymard16}: we regard the physiological structure as carrying the relevant microscopic information, so that the PSPM effectively encodes microscopic dynamics.}
which have been used in studying populations in various biological systems \citep{metz, DeRoos1990, Diekmann1995PerturbingES, inaba2017age, Diekmann2020, Diekmann2020b}.
In the field of ovarian folliculogenesis, the SISYPHE  team \citep{aymard16, Monniaux2016-yt}
has been describing and analyzing follicles as separate granulosa cell populations.  
Their PSPM for granulosa cells exhibits several realistic features. 
First, it relates the macroscopic notion of follicular maturity to FSH consumption by coupling local cellular dynamics to systemic endocrine feedback. Second, the model encodes follicular competition through a variable representing follicular maturity. The resulting PSPM is conceptually similar to competitive systems studied in other contexts \citep{deroos08,hart11}.
However, despite having a natural coupling between FSH and the granulosa cell number, the discontinuous nature of the involved population densities in the SISYPHE model---due to distinct phases of cell proliferation, death, and differentiation---complicates the numerical implementation.

For a comprehensive description of the competitive follicular dynamics neither the ODE nor the PDE approach is sufficient on its own.
While \cite{HENDRIX201431}, for example, 
explored the link between elevated androgen sensitivity and PMOS, their model cannot address
the competitive interactions of individual follicles.
Although, \cite{FISCHERHOLZHAUSEN2022111150}
embedded a competitive model \citep{Lange2018-xf} within the full hormonal cycle, 
their follicles are missing a characterization of maturation different from size.
Follicles at different developmental stages
are treated on the same footing
regarding their hormonal responsiveness.
To correct for the lacking stage-dependence
in the hormonal modeling, the authors introduced, as mentioned above, time-dependent coefficients to synchronize the E2 and P4 production.
In their approach, the missing maturation dynamics is absorbed into the hormones rather than being modeled explicitly.
This is different in the PDE approach by the SISYPHE team, which involves a notion that captures the cellular maturation,
although their maturity parameter is not formally associated with (real) histological changes. Besides, their large number of parameters makes model calibration computationally demanding \citep{aymard16}.
This results in expensive simulations and, as a consequence, renders the underlying mechanisms of follicular development and competition difficult to interpret.

In this work, we bridge the gap between PSPMs and ODE approaches trying to combine the strengths of both. We show how 
the PSPM framework naturally extends competitive ODE models by incorporating additional information on the follicular development
while remaining amenable to coupling with
hormones. 
Technically, we perform a low order moment expansion to obtain integrals of the PDEs that
satisfy the modeling ODEs.
Besides being able to evaluate PSPMs more systematically when knowing about their moments, we find a simple way to couple individual follicles (i.e., their zeroth and first order moments) with involved hormones.
Following this route, we think that at some point, one will be able to incorporate the full endocrine feedback loop into structured models of follicular dynamics.

\section{Results}\label{results}

Starting with a set of macroscopic observables that is governed by a particular system of ODEs
(representing the time evolution of cell numbers or sizes of follicles as specified by the concrete application),
we constrain the structure of the PSPM by requiring that the macroscopic dynamics is obtained through a finite truncation of the moment hierarchy associated with the  underlying PDE.\footnote{Similar ODE reductions in the context of PSPMs can be found in \cite{Diekmann2020, Diekmann2020b} or \cite{kooi2003physiologically}. Other truncation techniques have been used in evolutionary dynamics of phenotype-structured models \citep{LORENZI2015166, chrisholm, Villa2021,Villa2025,almeida19,Ardaseva2019-ub}.}
We show how zeroth-moment dynamics, as given by the ODEs, can be reconstructed from a PSPM setting together with some first-moment dynamics, 
where the first moments characterize
the developmental stage (or level of maturity) of a follicle.
The zeroth moment dynamics can be used to implement information about available amounts of FSH, as done in the SISYPHE approach.
In addition, we demonstrate that the difference of the zero and first order moments can efficiently be coupled with E2 concentrations.  
The data set in \cite{Cummins2012-oa} will support these findings, explicitly for cows.

Higher-order moments encode more detailed microscopic (histological) information, 
which we expect to have sub-leading contributions to the macroscopic observables. Therefore, we omit higher-order moments  from the present analysis. The resulting time-independent, first order approximation will be sufficient, as yielding a tractable dynamical system involving all relevant mechanisms: follicular growth, competition, selection, and coupling with hormones.

For now, we model the macroscopic variables only during the final stages of the follicular development (from antral
to ovulatory/atretic), therefore we assume that the PSPMs are given by constant parameters, effectively neglecting the back-reaction with the hormonal environment. Similar assumptions are included in the SISYPHE model. In the Discussion, we will suggest possible extensions.

\subsection{From follicular PSPMs to ODEs} \label{pspmtoode}

Previous ODE approaches suffer from two main limitations: first, the lack of follicular and histological resolution, which complicates the analysis of follicular competition, and second, the absence of an explicit follicular stage variable, which hinders a natural coupling with the endocrine feedback loop.
Existing PDE frameworks that model follicular development, on the other hand, allow to couple both follicle sizes and stages but are often difficult to interpret and to relate to ODE-based models, including those describing endocrine dynamics. In other words, integrating structured population models with established ODE frameworks remains a significant challenge.

To keep the strengths of established modeling approaches, we present a PDE framework in which each follicle is described by a cellular population governed by a PSPM. 
We identify key structural properties the PDE has to satisfy in order to be reducible to a simpler system of ODEs connecting follicular mass and stage. Such a reduction allows to design and better interpret a PDE model, especially if connected to a well-developed ODE system that models parts of the hormonal cycle.

In our modeling approach we do not distinguish between granulosa and theca cells, or other microscopic structures, which---if being of particular interest---could always be included by coupled PSPMs describing their mutual interactions.
We consider the follicle as a macroscopic biological entity that collectively develops through successive stages of maturation. In this setting, the density function can either be interpreted as the portion of the follicle that has the same level of maturation, or, in case of a finer microscopic resolution, as the number of cells with a given maturity. The former interpretation implicitly assumes a coarse‑grained, radially symmetric structure, allowing to treat all follicular components (granulosa, theca, and fluid) as parts of a single effective population. 

\paragraph{The structured population model}

We model the follicular development (from antral
to ovulatory/atretic stages) by a system of transport PDEs,
\begin{align} \label{pdesystem}
\partial_t \phi_i + \partial_m (g_i \phi_i)= (-\lambda_i+p_i)\phi_i,\qquad i=1,2,\dots, N,
\end{align}
with $N$ being the number of follicles and coefficients,
$g_i,\lambda_i,p_i$, representing
non-local functions. The involved mathematical expressions are defined and interpreted as follows:
\begin{enumerate}
\item $m\in[0,1]$ 
is a (dimensionless) local variable defining the progress of maturation.
That is, $m$ indicates the degree of cellular differentiation towards maturity.
 \item \label{item2} $\phi_i=\phi_i(t,m)$ is the density function of the $i$-th follicle. 
 Its units are chosen in accordance with the modeling task.
 %For example, 
In Section \ref{aymardconn}, for example, the quantity $\phi_i(t,m)\ud m$ %is
represents the number of cells having cellular maturation between  $m$ and $m+\ud m$ at time
$t\in[0,\infty)$.
In Section \ref{langeconn}, $\phi_i(t,m)\ud m$ represents the (spherical) portion of the follicle (measured in mm, the length unit of the follicles' diameter) having the same maturity $m$ at a given time $t$;
\item \label{item3} The zero and the first raw moments
\begin{equation}\label{EqDefMassMaturity}
x_i(t)=\int_{0}^{1} \phi_i(t,m)\,\ud m,\qquad s_i(t)=\int_{0}^1 m\,\phi_i(t,m)\,\ud m
\end{equation}
are referred to as follicular {\em mass} and {\em stage}, respectively. Note that $x_i$ and $s_i$ have the same physical units: in Section \ref{aymardconn}, the zeroth moment represents the total number of cells in a follicle, measured in millions \citep{aymard16}; in Section \ref{langeconn}, the zeroth moment represents the diameter of the follicle, measured in $\mathrm{mm}$ \citep{Lange2018-xf}. 
Independent of its units, the developmental stage $s_i(t)$ of the $i$-th follicle (e.g., pre-antral, antral, dominant)
can be associated with the changing follicular maturity.
The moments of all follicles are denoted as vectors: $x=(x_1,\dots,x_N)^T\in\mathbb R^N$ and $s=(s_1,\dots,s_N)^T\in\mathbb R^N$.
\item\label{itemmaturation} 
The transport coefficient, $g_i=g_i\left(t,m,x,s\right)$,
which we call {\em maturation function},
encodes the microscopic mechanisms that drive the microscopic population of follicle $i$ to develop and progress through different levels of maturation $m$.
We assume that $g_i(t,m,x,s)$ is continuous in all its variables.

\item\label{itemgrowdeath} 
The coefficients,
$\lambda_i=\lambda_i\left(t,m,x,s\right)$ and $p_i=p_i\left(t,m,x,s\right)$, supposed to be continuous in their variables as well, denote the local death and proliferation rates within the $i$-th follicle, respectively. Biologically, they reflect the probability of cell loss and cell division at a given maturation level, respectively. They are regulated by both intrinsic factors and follicular interactions.
\end{enumerate}
Because of Items \ref{itemmaturation}-\ref{itemgrowdeath}, as we should note, our model---if spelled out completely---is a non-linear partial integro-differential equation. However, we will keep referring to it as PDE.  

Before introducing explicit functional forms for the different terms, we first  
relate our formulation to existing models of follicular and cellular population dynamics. While pointing at similarities and differences,
we discuss conceptual and technical difficulties in existing models and outline how to overcome them.

\paragraph{Relation with previous models}

Similar to the SISYPHE
population model for cellular development in ovarian follicles \citep{ECHENIM200557, aymard16}, we assume that the rate of maturation, $g_i$, depends on both the local maturity $m$ and the global follicular variables $(x,s)$. Likewise, the growth and decay rates of each follicle arise from internal cellular duplication and death processes that are modulated by these macroscopic quantities, thereby incorporating competition for shared resources. In this framework, each cell consumes environmental resources according to both the overall state and mass of its own follicle and, indirectly, that of its neighboring follicles. These assumptions imply that, from a mathematical viewpoint, the nonlinearities in the PDE system are non-local and depend only on the zero and first raw moments.

Although, conceptually, our model is related to the structured populations of the SISYPHE team and similar to classical cellular dynamics models \citep{Bell1967-vj}, it diverges from them in several aspects. The first key difference is that our cell density depends only on time, maturation and macroscopic quantities, omitting the explicit dependence on cellular age. In the formulation of SISYPHE  project-team, cell replication occurs at a specific age, where the cell density exhibits a discontinuity. This assumption, while biologically reasonable, introduces the undesirable feature that newborn cells begin their cycle with a positive age. In contrast, the framework of Bell and Anderson avoids this issue by resetting newborn cells to age zero, leading to a PDE of the form \eqref{pdesystem} but together with an additional rescaling term associated with volume structure.\footnote{In their setting, they obtain a volume‑structured rather than maturity‑structured model.}  In our approach, we ignore such age and volume terms and instead describe cellular processes through continuous functions of follicular mass and maturity. By smoothing local division events into a continuous description, we avoid discontinuities and rescaling effects that, in our view, have limited influence on the large-scale structures of growth dynamics. This yields a simpler, but biologically consistent representation of follicular mass growth. A second major difference concerns the functional forms of the maturation, decay, and growth rates. Especially, the forms proposed in e.g. \cite{ECHENIM200557,Monniaux2016-yt}, have a complex structure, which makes the derivation of a closed ODE system for the macroscopic quantities a difficult task. In contrast, we specify these functions as low‑order polynomials in the maturation variable $m$, allowing an analytically tractable reduction of the PDE to an ODE system governing follicular mass and maturity. These simplifications eliminate the need of introducing an additional PDE system for each follicle, as done in \cite{philippe} and in \cite{dimred}. At any rate, the ODE reductions provide a conceptually and computationally more accessible representation of follicular dynamics, even though we sacrifice some microscopic details. However, we regard this trade‑off as valuable: the reduced system offers intuition about the functional dependencies and parameter roles in the structured PDE, which can be refined in future work.

Analytical requirements in mind, we propose that $\phi_i\in C^1( [0,T]; L^1(0,1) )$, because well-posedness of the Cauchy problem for the SISYPHE model has already been established by \cite{shang2010cauchyproblemmultiscaleconservation} in the context $\phi_i\in C^0( [0,T]; L^1(0,1) )$, i.e., $\phi_i$ being continuous from $[0,T]$ to the set of integrable function in the $m$ variable. For technical purposes, we also assume that the initial profile $\phi_i(0,m)$ is supported in $m\in (0,1) $. We do not aim to prove existence and uniqueness of continuous and differentiable solutions in time but, under some natural requirements on the functions $g,\, \lambda, \, p$, one can use well-known results to improve the regularity on $\phi_i(t,m)$. 
In Sections \ref{math_sisyphe}-\ref{math_lange},
we study existence, uniqueness, boundedness, and regularity of integrated quantities of the PDE models discussed in this work.
Based on these results, well-posedness of PDE solutions can be obtained using, e.g., integral and/or operator techniques together with standard fixed point arguments \citep{Gripenberg_Londen_Staffans_1990, yamaguchi,Perthame2008}.

\paragraph{ODEs associated to the truncated PSPM}

In order to relate the PSPM to an ODE system for mass and maturity as defined in \eqref{EqDefMassMaturity}, we introduce an approximation scheme with respect to the maturity parameter $m$ (similar, e.g., to the one discussed in \cite{Villa2025}). In particular, we assume that
\begin{itemize}
\item[(a1)]\label{fulla1} at any time $t\in [0,T]$, the essential support of $\phi_i$ does not include the boundaries, i.e.,\footnote{
Being a local maturity variable, $m$ can be interpreted as the level of maturation of single cells. We model each cell in the population $\phi_i$ as characterized by a value in the interval $(0,1)$, close to $0$ for being a young cell, and close to $1$ for being a fully developed and differentiated cells. For example, if there were $N=30\times10^6$ cells with $m\approx1$ in a given follicle, then the follicular stage (or maturity) would be $s\approx30\times10^6$.}
\begin{equation} 
\mathrm{ess~supp}~\phi_i(t,\cdot)\subset(0,1)\label{a1};
\end{equation}

\item[(a2)]\label{fulla2} %the main
contributions to growth, decay, and drift come from expressions with the following continuous dependencies,
\begin{subequations}\label{a2}
\begin{align}
p_i&=p_{i,0}(t,x,s), \qquad \lambda=\lambda_{i,0}(t,x,s)\\
g_i&=g_{i,0}(t,x,s)+m\, g_{i,1}(t,x,s),~~\text{resp.}
\label{a2_b}
\end{align}
\end{subequations}
i.e., growth and death rates are constant with respect to $m$, and the maturation function is linear with respect to $m$. That is,  \eqref{a2} can be interpreted as a truncation or approximation to more general functions $g_i$, $\lambda_i$ and $p_i$; see also Theorem \ref{higherorder}.
\end{itemize}
With these assumptions, we find (cf.~Thm. \ref{odethm} in Methods) that mass and maturity associated to the density $\phi_i$ solve the ODE system
\begin{subequations}\label{odesysmain}
\begin{empheq}[left=\empheqlbrace]{align}
\dot x_i&= x_i\left[-\lambda_{i,0}(t,x,s)+p_{i,0}(t,x,s)\right]\\
\dot s_i&= s_i\left[-\lambda_{i,0}(t,x,s)+p_{i,0}(t,x,s)\right]+g_{i,0}(t,x,s) x_i+g_{i,1}(t,x,s) s_i.
\label{odesysmain-s}
\end{empheq} 
\end{subequations}
This result establishes an explicit connection between the microscopic scale of cell population development, described by the PSPM and the macroscopic scale of follicle growth, described by the ODE system of mass and stage. The macroscopic dynamics encoded in PSPMs illustrate how developmental dynamics (described by \( \dot{s}_i \)) and mass evolution (described by \( \dot{x}_i \)) interact with each other. 
Therefore, we can understand each follicle as  only described (macroscopically) by a couple $(x_i,s_i)\in \mathbb R^2$.

In Section \ref{langeconn}, we will show how this interaction addresses several critical limitations of existing mass and estradiol models, such as those in \cite{Lacker1981-js,SOBOLEVA200045,Lange2018-xf,shilo}, and how it naturally links mass dynamics with hormonal production models.   

Furthermore, we will be able to
simplify the structure of the system \eqref{odesysmain} by analyzing the follicle's average (level of) maturity,
\begin{equation} \label{defn:avstage}
\bar s_i:=\frac{s_i}{x_i}=\dfrac{\int_0^1 m\, \phi_i(t,m)\ud m}{\int_0^1 \phi_i(t,m)\ud m}.
\end{equation}
If $\phi_i\,\ud m$ is interpreted as the portion of maturation (see Section \ref{langeconn}), then $\bar{s}_i$ is also referred to \emph{relative maturity}, to stress the defining ratio between the follicular level of maturity $s_i$ and its geometric size $x_i$, both expressed in units of length. 
Importantly, since we are assuming that $\mathrm{supp}\,\phi_i(t,m)\subset(0,1)$ for all $t\in[0,T]$, it follows that $\bar s_i\in(0,1)$. That is, the average maturity can be understood as a normalized dimensionless indicator of the developmental stage of the follicle.

Due to Corollary \ref{evolutionsbar} and Corollary \ref{redsbar}, we can analyze a simpler system
\begin{subequations}\label{uncoupledmain}
\begin{empheq}[left=\empheqlbrace]{align}
\dot{x}_i
&= x_i\left[-\lambda_{i,0}(t,x,x\circ\bar{s})
+p_{i,0}(t,x,x\circ\bar{s})\right],
\label{uncoupledmain-x}
\\
\dot{\bar{s}}_i
&= g_{i,0}(t,x,x\circ\bar{s})
+\bar{s}_i g_{i,1}(t,x,x\circ\bar{s}).
\label{uncoupledmain-s}
\end{empheq}
\end{subequations}
where the maturity equation is replaced by a simpler one, modeling the average maturity (which we sometimes also call relative maturity). Here, the $\circ$ means the Hadamard product between vectors $x\circ \bar s=(x_1\bar s_1,\dots,x_N\bar s_N)$.
When looking at the equation for the average stage \eqref{uncoupledmain-s}, we note that it resembles the characteristic equation of the PDE \eqref{pdesystem}, namely
\begin{equation}
\dot m^{(i)}_{a}=g_i\left(t,m^{(i)}_{a},x,s\right)\equiv g_{i,0}(t,x,x \circ \bar s)+m^{(i)}_{a}g_{i,1}(t,x,x \circ \bar s),\label{explicitchar}
\end{equation}
where the subscript \(a\) identifies a particular solution (e.g., via the initial value); see Appendix \ref{app:characteristics}.
These characteristic curves $m_a^{(i)}(t)$ 
can be interpreted as the maturation trajectories of cells in the $i$-th follicle that start with initial maturities $m_a^{(i)}(0)$.

The correspondence between the average-maturity equation \eqref{uncoupledmain-s} and the characteristic equation \eqref{explicitchar} implies that, whenever
\begin{equation*}
m_a^{(i)}(0)=\bar s_i(0),
\end{equation*}
the corresponding solutions satisfy (under suitable regularity assumptions)
\begin{equation}\label{eq:correspondence}
m_a^{(i)}(t)=\bar s_i(t),\quad \text{for all } t\geq 0.
\end{equation}
This follows from the fact that
\begin{equation}
\frac{\ud}{\ud t}\left(\bar s_i-m_a^{(i)}\right)=\left(\bar s_i-m_a^{(i)}\right)g_{i,1}(t,x,x \circ \bar s).
\end{equation}
For sufficiently regular $g_i$, by the standard existence and uniqueness theorems for ODEs, $m_a^{(i)}(t)$ and $\bar s_i(t)$ coincide whenever their initial conditions coincide. This establishes that the evolution of the statistical variable $\bar s_i$ for the population density can be identified with the microscopic dynamics of the maturity along individual characteristic curves, provided that the two are initially equal.

Finally, when assuming that the functions $g_i$ are continuous, as proposed in \eqref{a2},
we observe that
points on the characteristic curves
\eqref{explicitchar} always propagate with finite speed, provided that $x(t)$ and $s(t)$ are also continuous in time.
Consequently, if $\mathrm{supp}\,\phi_i(0,\cdot)\subset(0,1)$, then there exists a time $T>0$ such that $\mathrm{supp}~\phi_i(t,\cdot)\subset(0,1)$ for all $t\in[0,T]$, as required by \eqref{a1}.

\paragraph{ODE invariance under maturation function transformation}

While the PSPM determines the time evolution of the total mass and maturity, the converse is not true. That is, the knowledge of mass and maturity is not enough to reconstruct the PSPM. Different PSPMs, however, can give rise to the same evolution of the macroscopic quantities. This freedom (i.e., different choices of the maturation function $g_i$) 
is a major advantage of our framework,
allowing to reproduce a broad spectrum of phenomena (cf.~Sect.~\ref{aymardconn} and \ref{langeconn}), despite the strict and somewhat artificial assumption (a2), including \eqref{a2}.

The involved assumptions are verified by considering two different maturation functions, 
\begin{align}
g_i=g_{i,0}+m\,g_{i,1}&,\quad \tilde g_i=\tilde g_{i,0}+m\,\tilde g_{i,1},
\intertext{satisfying (a2), for which in addition} 
\label{macroinv}
g_{i,0} x_i+g_{i,1} s_i&= \tilde g_{i,0} x_i+\tilde g_{i,1} s_i.
\end{align}
For them, the macroscopic ODE system for mass and maturity \eqref{uncoupledmain} remains unchanged.
Therefore, the knowledge of the mass-stage dynamics does not uniquely specify the microscopic development of the population.

Under a transformation $g_i\to \tilde g_i$ satisfying 
\eqref{macroinv}, the time evolution of the population density differs only
at the level of higher-order raw moments 
($\alpha\geq2$),
\begin{align}
s_i^{(\alpha)}=\int_0^1 m^\alpha \phi_i(t,m)\ud m\,,
\end{align}
as one confirms
when applying the transformations
\begin{equation}\begin{split}
&g_{i,0}(t,x,s)\to g_{i,0}(t,x,s)+s_i f_i(t,x,s), \\ &g_{i,1}(t,x,s)\to g_{i,1}(t,x,s)-x_i f_i(t,x,s).
\end{split}\label{gtransfo}
\end{equation}
Then one obtains
\begin{subequations}\label{transfo}
\begin{align}
 \dot s^{(\alpha)}_i&=s^{(\alpha)}_i \left[-\lambda_{i,0}+p_{i,0}\right]+\alpha\left[ g_{i,0} s_i^{(\alpha-1)} +  g_{i,1} s_i^{(\alpha)}\right]
\\ &\to ~ \dot s^{(\alpha)}_i=s^{(\alpha)}_i \left[-\lambda_{i,0}+p_{i,0}\right]+\alpha\left[ g_{i,0} s_i^{(\alpha-1)} +  g_{i,1} s_i^{(\alpha)}\right]+\alpha f_i\left[s^{(\alpha-1)}_i s_i-s^{(\alpha)}_i x_i\right].\label{transfo_b}
\end{align}
\end{subequations}
Since $s^{(0)}_i=x_i$ and $s_i^{(1)}=s_i$,
an additional
contribution 
$\alpha f_i [...]$
to the higher-moment dynamics in \eqref{transfo_b} appears only for $\alpha\geq2$. 

Determining higher moments would require sufficiently detailed histomorphological data, including their numerical formalization through a (maturity) variable. A scientific approach for such a purpose, which could involve cellular morphology or even bio-chemical data, has not been developed yet.
Consequently, the maturation function cannot be 
further specified or constrained by the information currently available to us.

In the following section, we propose maturation functions designed to capture the main mechanisms driving follicular development. 
We also apply our truncated framework to the PSPM approach developed by the SISYPHE project-team  (cf.~Introduction).
That is, we reconstruct the macroscopic observables, mass and maturity, 
of their PDE-model by only using the ODE-system 
\eqref{uncoupledmain}.
As an immediate application,
we can identify microscopic features 
such as cellular synchronization in dominant follicles near maximal levels of maturity.

\subsection{ODEs related to a truncated SISYPHE model} \label{aymardconn}

As a first application of our framework, we numerically reproduce the follicular dynamics of the SISYPHE model.
Among the extensive body of work produced by the SISYPHE team, we consider the PSPM calibrated by \cite{aymard16}, 
which describe follicular growth during the final developmental stage preceding ovulation.
The model is based on the equation
\begin{equation} \label{aymardeqn}
\partial_t \tilde \phi_i+\partial_a( \tilde g_i \tilde\phi_i )+\partial_m( g_i \tilde\phi_i)=-\lambda_i \tilde\phi_i,
\end{equation}
where $\tilde \phi_i=\tilde\phi_i(t,m,a)$
denotes the density function describing the age and maturation structure of the follicles, with $a$ and $m$ representing age and maturation, respectively.
In its original formulation, this model aims to study the evolution of the 
granulosa cell population, represented by
$\tilde \phi_i(t,m,a)\ud a \ud m$,
consisting of cells with age between $a$ and $a+\ud a$ and maturity between $m$ and $m+\ud m$,  as well as its relation with FSH levels. In this setting, the variable $m$ is an auxiliary variable that qualitatively represents cell maturity through its responsiveness to FSH. We emphasize that, in the original model, $m$ is not normalized;
see Aymard et al.~for further details.

In the present section, we introduce several simplifications and approximations in order to capture the microscopic mechanisms underlying the dynamics of the macroscopic observables, $x_i$ and $s_i$, while remaining consistent with the analysis performed by Aymard et al.

The first observation, as preliminarily discussed in \cite{philippe}, is that the coefficients are simple step functions
of the age variable $a$, with discontinuities at the transitions between different stages of the cell cycle. Within each stage, these functions are independent of $a$.
We propose that, under certain approximations, these age-dependent transitions can be eliminated, thereby allowing the full equation to be integrated with respect to \(a\). In particular, if \(\tilde \phi_i(t,m,a)\) is assumed to be differentiable with compact support in \(a \in (0, +\infty)\), the integration can be performed by parts. However, in the original formulation, the solution is discontinuous at specific values \(a = a^*\), 
corresponding to transitions between cellular growth and replication stages. 
At these points, a ``doubling flux condition'' is imposed: when a cell reaches the age \(a^*\), it instantaneously duplicates, resulting in a jump of the density according to $\tilde \phi_i\mapsto 2\tilde \phi_i$. One can eliminate this discontinuity by replacing the death term \(-\lambda_i\) with \(-\lambda_i + p_i\), where \(p_i\) represents the continuous replication rate
\begin{equation}\label{replication_rate}
p_i = \alpha_{0,i}(1 - \bar{s}_i) \ln(2),
\end{equation}
with  a constant  $\alpha_{0,i}>0$. Note that in \cite{aymard16}, follicle growth is terminated when cells enter the so-called ``differentiation zone'',
defined through a threshold and implemented as a hard cutoff in the computational domain. In contrast, we assume that the cell replication rate decreases as cells approach the maximal average maturity, reflecting their progressive transition toward their maximally differentiated functionality.

By making the cellular replication rate independent of age, we effectively bypass the discontinuities associated with replication. Moreover, since the replication rate has been smoothed by making it $a$-independent\footnote{A similar approach was taken by \cite{philippe}.}, the step function at the transition between the growth and replication stages is eliminated.  It is therefore natural, within the present formulation, to assume that $g_i$ is also independent of the age variable.\footnote{In the original model \citep{Monniaux2016-yt}, the maturation function includes a step function in the age variable at the transition between the cellular growth and replication stages.} We can then integrate \eqref{aymardeqn} over \(a \in (0, +\infty)\) to eliminate the age structure. Introducing
\begin{equation}
\phi_i(t,m) = \int_{0}^{\infty} \tilde{\phi}_i(t,m,a) \,\ud a,
\end{equation}
we obtain the following PDE system for $\phi_i$:
\begin{equation}
\partial_t \phi_i + \partial_m (g_i \phi_i) = (-\lambda_i + p_i) \phi_i, \quad i = 1, 2, \dots, N, \label{aymardpde}
\end{equation}
where we have used the assumption that $g_i$ is independent of $a$. Although several variants of the maturation function $g_i$ and the apoptotic terms have been considered by the SISYPHE team, none of these formulations directly satisfies the requirements in \eqref{a2} formulated in the previous section. We therefore seek to recover the phenomenology of their model while imposing the constraints specified in \eqref{a2}.
In doing so, we perform a first-order truncation of the SISYPHE model
as a perturbative expansion in $m$.

Let us start with the maturation function $g_i$. The SISYPHE models assume that the maturation function depends quadratically on the maturation variable $m$ and exponentially on the total ovarian maturity $\sum_j s_j$.  
As, so far, our approach (via Theorem \ref{odethm}) does not allow for quadratic maturation functions in $m$, we employ a relation between the characteristic equation for $m^{(i)}_{a}$ and the average $\bar s_i\equiv s_i/x_i$, as discussed in the paragraph below \eqref{explicitchar}.
In particular, we assume that
\begin{equation}
g_i\left(m,x,\bar s\right)=(1-c_{i,1})\theta\left(\bar s\right)+m\left[-\bar s_i+c_{i,1}\theta\left(\bar s \right)\right] , \qquad 0<c_{i,1}<1 \label{aymardmatlin}
\end{equation}
where $\theta$ is a function of $s\in\mathbb R^N$ such that
\begin{equation*}
\theta(0)=0, \qquad \lim_{\bar s\to 1}\theta(\bar s)=1.
\end{equation*}
With this specification, the average maturity follows the system of equations
\begin{equation}
\dot{\bar s}_i=-\bar s_i^2+\left[c_{i,1} \bar s_i -c_{i,1}+1\right]\theta\left(\bar s\right).
\end{equation}
In Theorem \ref{thm:aymard} we show that the average stage is bounded $\bar s_i(t)\subset (0,1)$ for all times, as expected from the statistical description, see comment after \eqref{defn:avstage}. 

We now emphasize a modeling choice that differs slightly from the one adopted by the SISYPHE team, but which we argue does not alter the core mechanisms underlying their proposed follicular dynamics. In their original model, the maturation function $g_i$ depends on an additional follicle-specific parameter, denoted $c_{i,2}$. Biologically, the pair $(c_{i,1}, c_{i,2})$ encodes follicle-specific sensitivities to hormonal signals. Mathematically, the combination of these follicle-dependent parameters determines a follicle-specific maximal average maturity $\bar{s}_i$.
Consequently, the selection of ovulatory follicles---which we expect to correspond to follicles attaining high average maturity\footnote{We recall that, in the original SISYPHE model, there is no natural scale for the local and follicular maturities $m$ and $s_i$. Consequently, distinguishing ovulatory from atretic follicles requires the introduction of additional thresholds \citep{aymard16}.}---could be driven not directly by follicular competition for environmental (hormonal) resources, but rather by intrinsic differences in the maximal average maturity attainable by individual follicles.

In Appendix~\ref{num}, using the model from \cite{aymard16}, we explicitly show that, for a fixed maximal FSH level, (i) the upper bound on the maximal average maturity depends on $(c_{i,1}, c_{i,2})$; (ii) for the calibrated parameters associated with atretic follicles, the equilibrium of the characteristic curves lies within, or numerically close to, the high-cell-loss regime; and (iii) the calibrated model admits scenarios in which every follicle reaches its follicle-specific upper bound for $\bar{s}_i$ (see also Figure~\ref{fig:aymardsoverx}).
In our view, these observations suggest that the calibrated follicular parameters $(c_{i,1},c_{i,2})$ may effectively pre-identify the qualitative fate of individual follicles, distinguishing ovulatory---or, more precisely, non-declining-mass---dynamics from atretic or declining-mass dynamics.
Thus, this distinction may be determined largely by intrinsic, follicle-specific parameters rather than emerging solely from follicular competition. See also \citep{aymard16,dimred}.

In our current model, we aim to reproduce the observation that follicles approach a maximal average maturity, while normalizing this value to be the same across all follicles. In particular, Theorem \ref{onestablefixed} shows that, in our model, all follicles can asymptotically approach the maximal average maturity $\bar{s}_i=1$. Thus, our modeling choice amounts to normalizing the follicle-dependent maximal average maturity by fixing $c_{i,2}=1-c_{i,1}$. This simple normalization of the maximal average maturity reproduces the macroscopic mass and maturity dynamics (see Fig.~\ref{fig:celldyn}), and we therefore do not expect it to affect the overall mechanisms underlying the calibrated dynamics.
With this choice, we eliminate the parameter $c_{i,2}$, thereby simplifying the dynamics, and ensure that all follicles approach their maximal average maturity, which we normalize to $\bar{s}=1$.\footnote{If one is interested in the effect of an extra parameter $c_{i,2}$, one can replace the maturation function as
\begin{equation}
g_i\left(m,x,s\right)=c_{i,2}\theta\left(s\right)+m\left[-\frac{s_i}{x_i}+c_{i,1}\theta\left(s\right)\right]\implies\dot{\bar s}_i=-\bar s_i^2+\left[c_{i,1} \bar s_i +c_{i,2}\right]\theta\left(s\right). \label{footnoteeqn}
\end{equation}}

Following the same argument used to replace nonlinear terms in $m$ with functions of $\bar{s}_i$ (see the comment following \eqref{explicitchar}), we model the apoptosis term as
\begin{equation} \label{apopaymard}
\lambda_{i,0}=K_i \exp\left[-\left(\frac{\bar s_i-s_{i,m}}{s_{i,v}}\right)^2\right]\left(1-\frac{U\left(\sum_j \bar s_j\right)}{U_M}\right),
\end{equation}
where $U\left(\sum_j\bar  s_j(t)\right)$  represents a sigmoid-like function with respect to time,
while $s_{i,v}$, $s_{i,m}$, and $K_i$ are constant parameters. At this point, we further simplify the functional forms of $\theta$ and $U$ by choosing
\begin{equation} \label{stangehill}
\theta(\bar s)=\frac{1}{N}\sum_j \bar s_j, \quad U\left(\sum_j \bar s_j\right)=U_M-\left(U_M-U_m\right)\frac{1}{N}\sum_j \bar s_j,
\end{equation}
where $N$ is the number of follicles, $U$ denotes the available FSH level, and $U_M$ and $U_m$ denote its maximal and minimal values, respectively. In what follows, we discuss to what extent our choice of $U$ is consistent with the sigmoid-like function used in \cite{aymard16}. An interesting feature of their model is that the average follicular maturity is directly related to FSH levels. 

With these choices, we obtain the system
\begin{subequations}\label{aymardode}
\begin{align}
&\dot x_i=x_i\left[\alpha_{0,i}(1-\bar s_i)\ln(2) -K_i \exp\left[-\left(\frac{\bar s_i-s_{i,m}}{s_{i,v}}\right)^2\right]\left(1-\frac{U_m}{U_M}\right)\frac{1}{N}\sum_j \bar s_j\right], \label{eqn:aymardode-x}\\
&\dot{\bar s}_i=-\bar s_i^2+\left[c_{i,1} \bar s_i -c_{i,1}+1\right]\frac{1}{N}\sum_j \bar s_j. \label{eqn:aymardode-s}
\end{align} 
\end{subequations}
Despite \eqref{aymardode} being a nonlinear system of $2N$ equations, the average maturity variables $\bar s_i$ in \eqref{eqn:aymardode-s} are independent of the mass variables $x_i$ and therefore satisfy a much simpler system of $N$ coupled equations. In Theorem \ref{thm:aymard} and Proposition \ref{prop:contaymard}, we prove the continuity of the average maturity equation.
Moreover, in Theorem \ref{onestablefixed}, we show that the maximal average maturity $\bar{s}_{\mathrm{max}}=1$ is the unique locally stable fixed point for all follicles.
Therefore, under suitable conditions, all follicles can asymptotically approach their corresponding maximal maturity, which in our normalized formulation is $\bar{s}_{\mathrm{max}}=1$, in agreement with the qualitative behavior of the calibrated model.

Since the equations for $\bar s_i$ are independent of the mass variables $x_i$, we can first solve \eqref{eqn:aymardode-s} for $\bar s_i$. Substituting this solution into \eqref{eqn:aymardode-x} then yields $x_i$, and hence the total maturity $s_i(t)=x_i(t)\bar s_i(t)$.
For instance, in the case $N=1$, the average maturity equation has the solution
\begin{equation} \label{s1follicle}
\bar s_1(t)=\frac{\bar s_1(0) e^t}{\bar s_1(0)e^t+(1-\bar s_1(0))e^{c_1 t}} \implies x_1(t)=x_1(0)\exp{\int_0^t} \left[-\lambda_1\left(\bar s_1(\tau)\right)+p_1\left(\bar s_1(\tau)\right)\right]\ud\tau.
\end{equation}
The average maturity in \eqref{s1follicle} has a sigmoid-like form, which motivates the functional form of our ansatz for $U$ in \eqref{stangehill}. Moreover, the follicle size $x_1$ at time $t$ depends on its initial size $x_1(0)$: at any finite time, the integral in the exponent in \eqref{s1follicle} is finite and independent of $x_1(0)$, so different initial sizes lead to different values of $x_1(t)$.

In the general case of $N$ follicles, near the locally stable point $\bar s_i=1$ for all $i$,\footnote{Note that $\bar s_i=1$ for all $i$ is the only locally stable equilibrium point for the average maturity equation. See Theorem \ref{thm:aymard} and Theorem \ref{onestablefixed}.} the characteristic equation behaves as
\begin{equation}\label{simaymardmat}
\dot m^{(i)}_{a}\big\vert_{\bar s_i=1\, \forall i}=g_i\left(m^{(i)}_{a},x,\bar s\right)\big\vert_{\bar s_i=1\, \forall i}=(1-m^{(i)}_{a})(1-c_{i,1})\implies m_a^{(i)}(t)=  1-(a-1) e^{-t( 1-c_{i,1} )}
\end{equation}
leading to characteristics curves $m^{(i)}_{a}(t)$ asymptotically converging to $m^{(i)}_{a}(t)\to 1$ with zero velocity. From the biological perspective, this indicates the synchronization of cellular maturity near maximal maturity and this phenomenon is also observed by the shrinking of $\mathrm{supp}\tilde \phi_i$ along the $m$-direction in Fig. A.3 in the Supplemental Material of \cite{aymard16}.

In Figure \ref{fig:celldyn}, we show the dynamics of the $10$ follicles, with random initial data and parameters summarized in the figure and its caption.
\begin{figure}[t]
\centering
\includegraphics[width= .32\textwidth]{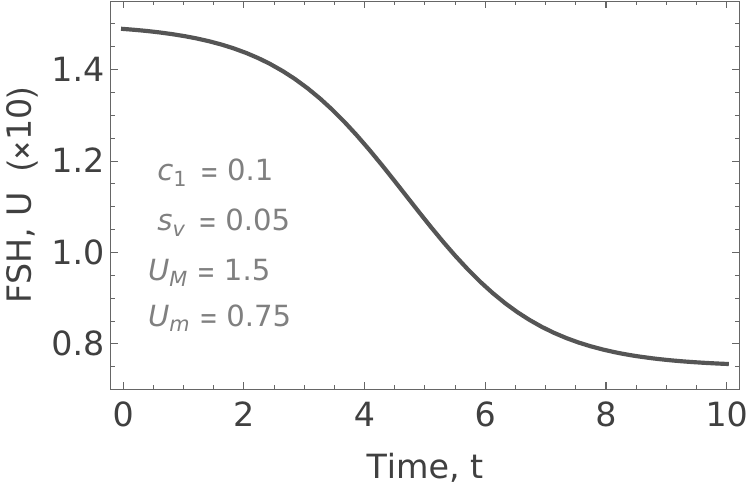}
\,
\includegraphics[width= .32\textwidth]{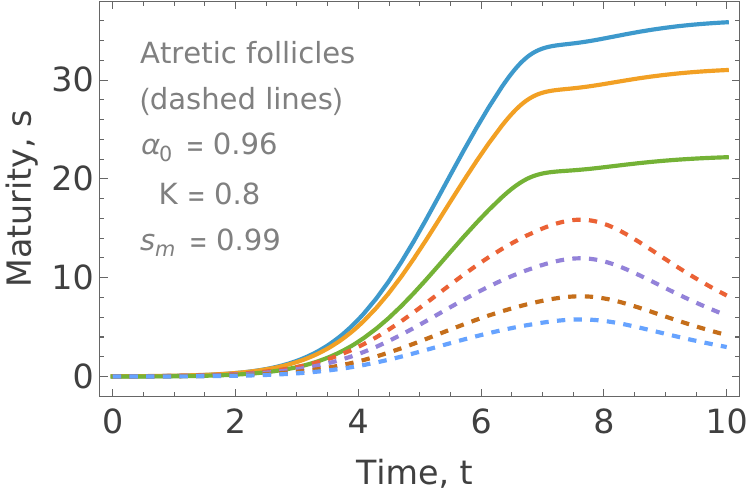}
\,
\includegraphics[width= .32\textwidth]{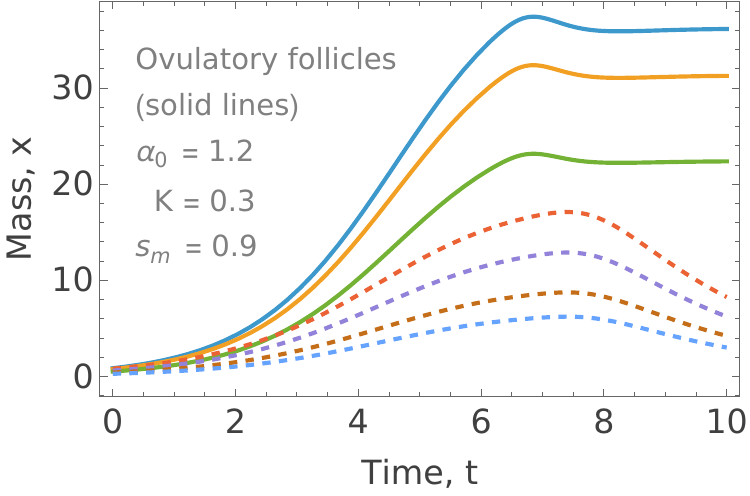}
\caption{Follicle dynamics of the truncated SISYPHE model. Depicted are $7$ follicles solving \eqref{aymardode}. Initial values are chosen randomly: $\bar s_i(0)\in (0,0.03)$ and $x_i(0)\in(0.2,0.9)$. The scales are arbitrary; a realistic mass scale, for example, would be defined by millions of cells. Parameter values used for ovulatory, atretic, and for all follicles are shown in the diagrams (right, center, and left, respectively). The FSH level (left) is given by the $U(t)$ function in \eqref{stangehill}. Notice the similarity of the resulting $m,x$-trajectories with those obtained by \cite{aymard16}, Fig.~6. 
}
\label{fig:celldyn}
\end{figure} 
We see that the resulting dynamics resembles the one obtained in the calibrated models for the macroscopic observables (FSH level, maturity and mass).

Let us emphasize some differences between the model presented here and the calibrated model. An important difference concerns the behavior of the average-stage trajectories. As discussed above, our model can reproduce the fact that all follicles reach the same maximal average stage, which we normalize to $\bar s_i=1$. As a consequence of this normalization, for all follicles sharing the same $c_{i,1}$, the corresponding average-stage trajectories cannot cross in our model. In contrast, in the SISYPHE model, the maximal stage is follicle-dependent through the parameters $(c_{i,1},c_{i,2})$. Moreover, each follicle is characterized by an additional maturation time scale $\tau_i$, see, e.g., \cite{aymard16} and \eqref{aymardcharfull}, which further contributes to the maturation velocity. The combination of a follicle-dependent maximal stage and maturation time scale allows crossings between average-stage trajectories, as illustrated in Figure~\ref{fig:aymardsoverx}.

A further difference concerns the so-called ``rescue'' dynamics observed in some cases in \cite{aymard16}, whereby increasing the initial mass of a follicle can cause a trajectory that would otherwise decline to become non-declining. We do not expect this mechanism to be reproduced by the simplified model considered here. Indeed, as shown in Theorem~\ref{thm:aymard} and Theorem~\ref{onestablefixed}, the average-stage equation admits only one locally stable fixed point. Consequently, all average-stage trajectories are attracted to the same equilibrium, which may lie either within the cell-loss regime or outside it. Thus, in the present model, changes in the initial mass cannot lead to qualitatively different long-term maturation states through convergence to different equilibria. A possible way to recover such rescue dynamics would therefore be to introduce additional locally stable fixed points into the average-stage dynamics, thereby allowing the long-term behavior to depend on the initial condition. Exploring such an extension is left for future work.

Finally, we note that, up to the follicle-specific dependencies on $i$, reproducing the macroscopic dynamics of the calibrated model requires only six parameters, namely $c_1$, $\alpha_0$, $K_0$, $s_m$, $s_v$, and the ratio $U_m/U_M$, compared with the fifteen parameters used in their model: twelve governing the different aging, maturation, and apoptosis functions, and three specifying the subdivision of the domain.

\paragraph{Microscopic behavior}

Beyond the obvious difference between the 1D transport equation considered here and the 2D model proposed by the SISYPHE team, an important distinction between their model and our approach lies in the density
distribution $\phi$.\footnote{The model in \cite{dimred} also analyzes a 1D version of \cite{aymard16,Monniaux2016-yt}; however, no closed ODE reduction is performed. It would be interesting to compare their PDE solutions with those presented here.}
For instance, while our distribution, for suitable initial conditions, only allows for cellular synchronization, the distribution in \cite{aymard16} can develop two or more maturation clusters due to the so-called ``waterproof boundary conditions'' (see Fig.~A.3 in their Supplementary Material). Mathematically, this means that starting from a distribution with connected support, their evolution can produce a distribution with disconnected support.

In our model, the microscopic information is encoded in the following equation:
\begin{equation}\label{aymard_simp}
\begin{split}
&\partial_t \phi_i +\partial_m \left[(1-c_{i,1})\theta\left(\bar s\right)+m(-\bar s_i+c_{i,1} \theta\left(\bar s \right))\right]\phi_i\\&=\phi_i\left[\alpha_{0,i}(1-\bar s_i)\ln(2) -K_i \exp\left[-\left(\frac{\bar s_i-s_{i,m}}{s_{i,v}}\right)^2\right]\left(1-\frac{U_m}{U_M}\right)\theta(\bar s)\right].
\end{split}
\end{equation}

In Figure~\ref{fig:aymardALL} (left panels), we plot the solution of the model with two follicles with the same initial condition
\begin{equation}\label{initpop}
\phi_1(0,m)=\phi_2(0,m)=30\chi_{[0.01,0.3]},
\end{equation}
where $\chi_{[a,b]}$ is the indicator function of the interval $[a,b]$, and with parameters as in Figure~\ref{fig:celldyn}. As anticipated by \eqref{simaymardmat}, we reproduce synchronization at late times through the narrowing of the distribution, reflected in the shrinkage of its support, similar to \cite{aymard16}. This narrowing follows from the dynamics of the characteristics near $\bar s_i= 1$. In this regime,
\begin{equation*}
\dot m^{(i)}_a\big\vert_{\bar s_i=1\, \forall i}=(1-m^{(i)}_a)(1-c_{i,1}),
\end{equation*}
so that maturation becomes increasingly concentrated around $m=1$. In other words, cells with different maturation levels are progressively driven toward the same value, resulting in cellular synchronization, independently of whether the follicle is ovulatory or atretic.

\paragraph{Microscopic behavior: a quadratic approximation}

Up to this point, we have considered a model with a maturation function $g_i$ given in \eqref{aymardmatlin} that is linear in $m$. We now compare its microscopic dynamics with those of a model in which $g_i$, given in \eqref{aymardmatquadratic}, is quadratic in $m$, while keeping the definitions of mass $x$ and follicular maturity $s$ unchanged. 
In doing so, we identify the transformation that keeps the macroscopic output unchanged, even when the maturation function is quadratic rather than linear in $m$. This analysis allows us to bridge the gap between our model and the model in \cite{aymard16}.

In the setting of Theorem~\ref{higherorder}, we can choose the following polynomial:
\begin{equation} \label{aymardmatquadratic}
g_i=g_{i,0} + m g_{i,1} +m^2 g_{i,2}= c_{i,2} \theta(\bar s) + m c_{i,1} \theta(\bar s)-m^2.
\end{equation}
This polynomial resembles the maturation function in the SISYPHE model with respect to its order and structure, up to the functional form of $\theta(\bar s)$ and a time-scale factor $\tau_i$; see \eqref{aymardcharfull}. In order to reproduce the macroscopic output defined in \eqref{aymardode}, we define
\begin{equation}\label{gammas}
\begin{split}
&\Gamma_{i,0}=\alpha_{0,i}(1-\bar s_i)\ln(2) -K_i \exp\left[-\left(\frac{\bar s_i-s_{i,m}}{s_{i,v}}\right)^2\right]\left(1-\frac{U_m}{U_M}\right)\theta(\bar s)-\bar s_i\\
&\Gamma_{i,1}=1.
\end{split}
\end{equation}
See \eqref{eqs:thm4} in Theorem~\ref{higherorder} for the relation  between $\Gamma_{i,j}$ and the growth and death rates. 
With these definitions, 
substituting \eqref{aymardmatquadratic} and \eqref{gammas} into \eqref{infinitesystem} of Theorem~\ref{higherorder},
and focusing only on the zero and first moments $(\alpha=0,1)$, one can verify that the dynamical system for the mass $(\alpha=0)$ \eqref{eqn:aymardode-x} remains unchanged, and that the equation for the average maturity ($\bar s_i=s_i^{(1)}/s_i^{(0)}$) is
\begin{equation*}
\dot{\bar s}_i= -\bar s_i^2 +(c_{i,1} \bar s_i +c_{i,2})\theta(\bar s),
\end{equation*}
which is equivalent to \eqref{eqn:aymardode-s} if choosing $c_{i,2}=1-c_{1,i}$ as proposed earlier.

We again normalize the maximal average maturity to $\bar s_i=1$ by fixing $c_{i,2}=1-c_{1,i}$. Therefore, the characteristic equation near $\bar s_i= 1$ reads
\begin{equation}
\begin{split}
\begin{dcases}
&\dot{m}_a^{(i)}\big\vert_{\bar s_i=1\, \forall i}= (1-c_{i,1}) -m_a^{(i)}(m_a^{(i)}-c_{i,1})\\
&m_a^{(i)}(0)=a
\end{dcases}
\end{split}
\end{equation}
which has an explicit solution given by
\begin{equation}
m_a^{(i)}(t)=\frac{(c_{i,1}-1) (a-1) e^{(c_{i,1}-2) t}+c_{i,1}-a-1}{(a-1) e^{(c_{i,1}-2) t}+c_{i,1}-a-1}.
\label{eqn:simaymardmat-square}
\end{equation}
Similar to \eqref{simaymardmat}, \eqref{eqn:simaymardmat-square} shows the convergence $m_a^{(i)}(t)\to 1$ as $t\to\infty$, i.e, synchronization effect towards maximal maturity.

\begin{figure}[t!]
\centering
\hspace*{.5em}
{\sf\small Linear maturation function} \hspace{1.3em}  {\sf\small Quadratic maturation function}
\vspace{.2cm}

\includegraphics[width=.62\textwidth]{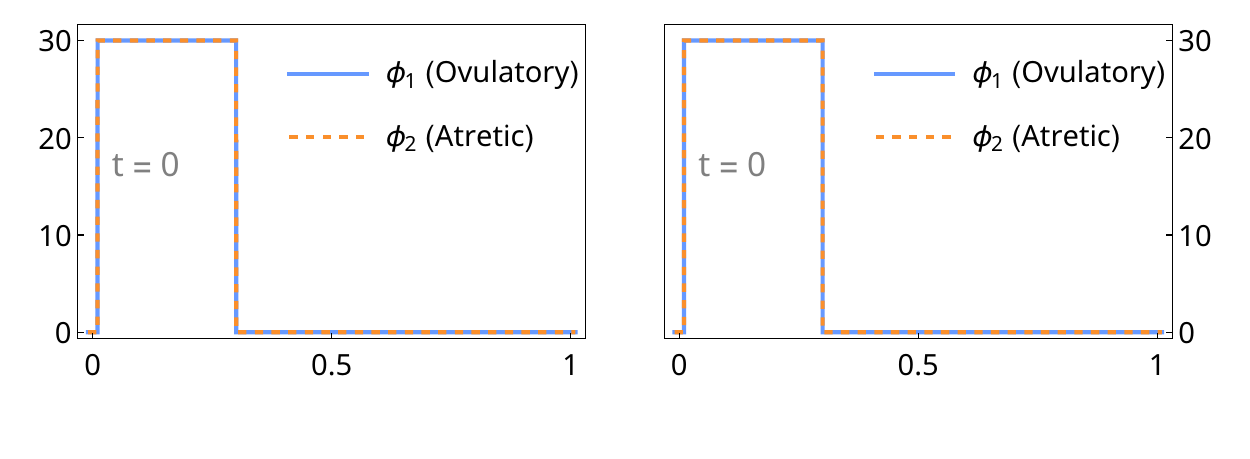}\\[-4ex]
\includegraphics[width=.68\textwidth]{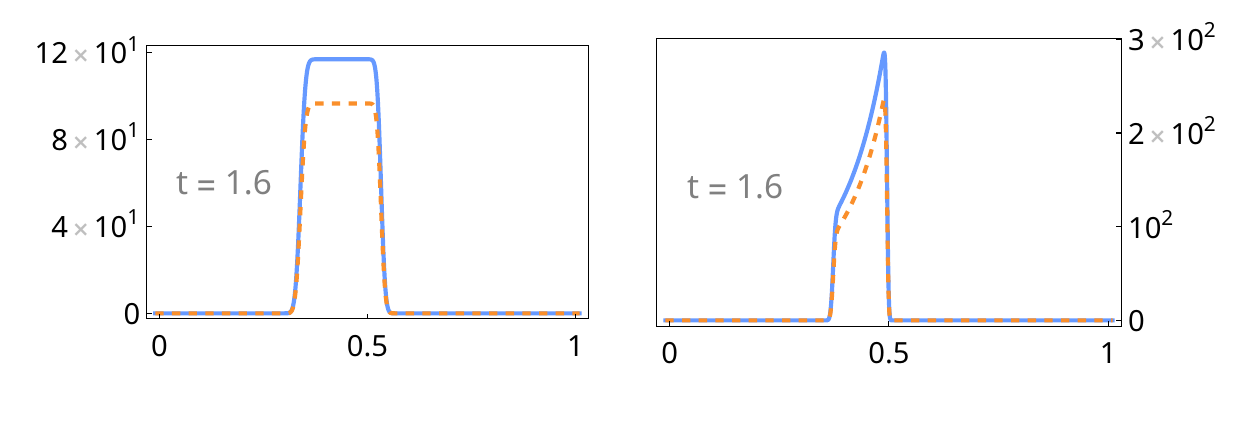}\\[-4ex]
\includegraphics[width=.7 \textwidth]{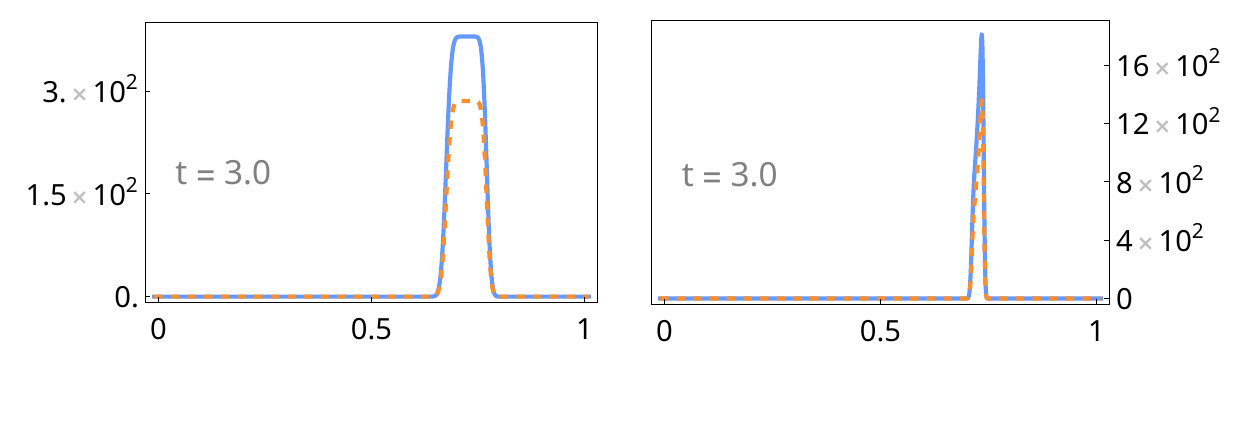}\\[-4ex]
\includegraphics[width=.7 \textwidth]{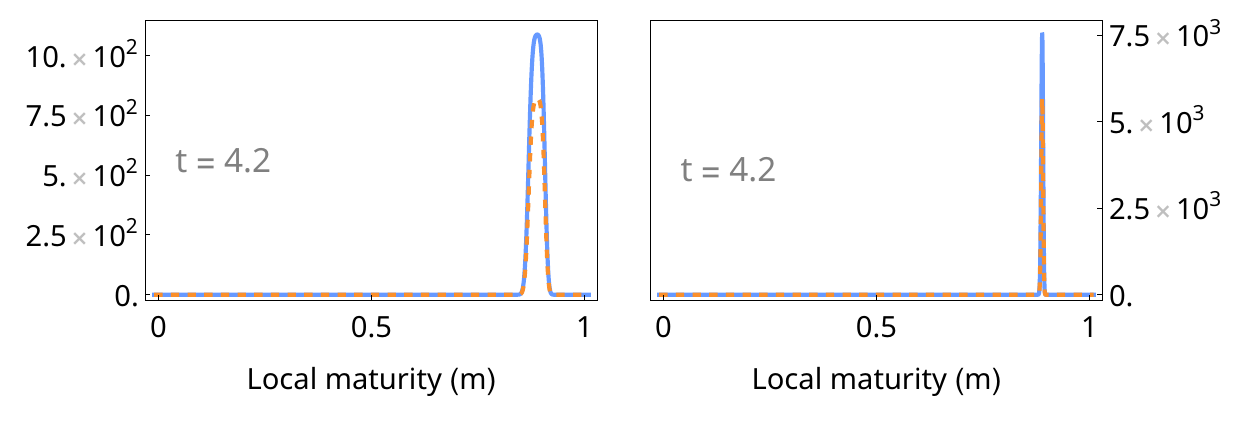}\\[2ex]
 \caption{Snapshots of the follicle maturity density at four consecutive time points. Each panel displays the densities of an atretic follicle ($\phi_2$, dashed red curve) and an ovulatory follicle ($\phi_1$, solid blue curve). Trajectories represent numerical solutions to \eqref{aymard_simp} obtained via a finite volume method paired with a fourth-order Runge-Kutta time-integration scheme (detailed in Section~\ref{num}). Left panels illustrate the dynamics under a linear maturity function \eqref{aymardmatlin}, while right panels show the dynamics for a quadratic maturity function \eqref{aymardmatquadratic}. The initial conditions and baseline parameters correspond to \eqref{initpop} and Fig.~\ref{fig:celldyn}, respectively. Time progresses chronologically from top to bottom (expressed in arbitrary units).}
\label{fig:aymardALL}
\end{figure}

Figure~\ref{fig:aymardALL} (right panels) presents numerical simulations for the quadratic maturation functions. In both linear and quadratic cases, the distributions narrow towards the final time, indicating cell synchronization. However, their evolutions at intermediate times differ between the two cases. In the case of the quadratic maturation function, the quadratic term $m^2$ breaks the symmetry of the distribution, predicting that highly mature subpopulations are more susceptible to reproduction than less developed cells. A rigorous comparison of the effects of linear and quadratic maturation functions on cellular synchronization, which may not be immediately apparent from Figure~\ref{fig:aymardALL}, is provided in Section~\ref{linvsquad}.

Despite the lack of histomorphological data, our framework is sufficiently versatile to simplify the PSPM proposed by \cite{Monniaux2016-yt}, while recovering both its macroscopic dynamics (i.e., the total number of granulosa cells) and some of its predicted histomorphological features, such as the trend toward maximal normalized local maturity, $m_a^{(i)}\to 1$, as well as cellular synchronization toward the maximal average stage.

\subsection{Modeling size and maturity} 
\label{langeconn}

In the literature, ODE models describing individual follicles usually have the form
\begin{equation}\label{genODE}
\dot x_i= x_i \,\Gamma_i\left(x_i, \sum x_j\right),
\end{equation}
where $\Gamma$ is a smooth function, and
$x_i$ either represents the estradiol produced by the $i$-th follicle or its size.
Examples in which $x_i$ models estradiol concentrations were proposed by \cite{Lacker1981-js,Mariana1994,SOBOLEVA200045}, while
examples in which $x_i$ represents follicular sizes were given by \cite{Lange2018-xf,shilo}.
Here, we follow the more recent models, and in concrete applications we identify $x_i$ with follicular diameter.

To construct a population model and understand how the PDE framework encodes information about follicular size, we start with the ansatz
\begin{equation}\label{PDEgenODE}
\partial_t \phi_i + \partial_m \left[g_i(m,x, s) \phi_i \right]=\Gamma_i\left(x_i, \sum x_j\right)\phi_i.
\end{equation}
Provided that solutions to \eqref{PDEgenODE} exist (a0) and that the assumptions (a1, a2) made in Section \ref{pspmtoode} are satisfied, one can immediately conclude---by substituting $\Gamma_i=p_i-\lambda_i$ and inspecting \eqref{finalmasseqn} in Theorem~\ref{odethm}---that the zeroth-order moment satisfies the ODE system \eqref{genODE} for the follicular sizes,
\begin{equation*}
x_i(t)=s_i^{(0)}(t)\equiv\int_0^1\phi_i(t,m)\ud m.
\end{equation*} 
Needless to say, information is lost in such a reduction, as we focus only on the zeroth-order moment rather than on the distribution of cells over the maturity variable $m$. 
In particular, even the average follicular maturity can no longer be inferred.

Nevertheless, models like
\eqref{genODE} work perfectly adequate if one is
only interested in macroscopic variables such as follicular sizes.
The model introduced by \cite{Lange2018-xf}, 
\begin{align} \label{langeeq}
\dot{x}_i = x_i (\xi - x_i) \left[ \gamma - \kappa \left( \eta x_i^\nu + \sum_{j \neq i} x_j^\nu \right) \right],
\end{align}
which we will employ here, is a good example  
where solutions to \eqref{langeeq} were found to fit bovine data surprisingly well; see Figures~\ref{fig:bovine419} and \ref{fig:bovine3674}.
The applicability of \eqref{langeeq} extends beyond mono-ovulatory mammals, as the ratio between growth $\gamma$ and competition $\kappa$  (i.e., $d=[\gamma/\kappa/\xi^\nu+1-\eta]$, cf.~Eq.~(10) in Lange et al.) determines the number of dominant follicles and can, of course, be greater than one.
Remarkably, 
this number is independent of both the total number of growing follicles and their initial sizes, in a mathematically precise manner. 

Even if this model
does not incorporate interaction with the endocrine system, it is utilized as part of
the model studied by \cite{FISCHERHOLZHAUSEN2022111150}, where
parameters are coupled to the hormonal environment. 
Despite accurately modeling the first follicular wave, the model by Lange et al.~fails to produce crossing trajectories,
a phenomenon we have observed in the analyzed bovine data (e.g., cow no.~419, second wave).
To obtain crossing trajectories, Fischer-Holzhausen and Röblitz, initialize follicular growth at random times and with random sensitivities to FSH.\footnote{On this note,
we have observed that the SISYPHE model and, more general, PSPMs allow for crossing trajectories, which we will explore in an upcoming work \citep{next}.}

When reproducing the ODE model by Lange et al.~from a PSPM, we interpret the density $\phi_i(t,m)\ud m$ as the portion of the follicle, measured on the scale of its diameter, whose local maturation level lies between $m$ and $m+\ud m$. The size dynamics in \eqref{langeeq} is recovered by assuming that the death and proliferation rates are independent of the follicular maturity $s_i$, such that
\begin{equation}\label{gamme_lange}
\Gamma_i\left(x_i,\sum x_j\right)=(\xi-x_i)\left[\gamma-\kappa\left(\eta x_i^\nu+\sum_{j\neq i} x_j^\nu\right)\right].
\end{equation}   
%As already discussed, 
The model by Lange et al.~lacks a maturity variable $s_i$ that would indicate the developmental stage of a follicle (preantral, antral, dominant, atretic, or somewhere in between), leaving the maturation function $g_i$ in the corresponding PSPM unspecified. We therefore proceed under the following modeling assumptions.
\begin{itemize}
\item  The average maturity equation of the $i$-th follicle does not directly depend on the variables describing the other follicles. Biologically, this reflects the assumption that cells belonging to different follicles do not directly compete or interact with one another. Instead, each cell population responds only to its local environment and to the state of the follicle to which it belongs. Mathematically, this 
means that the maturation function $g_i$ depends only on the size and maturity of the $i$-th follicle, i.e.
\begin{equation}
g_i=g_i(t,m,x_i,s_i).
\end{equation}
We note that such an assumption has not been made by the SISYPHE team.
\item Furthermore, we assume that the maturation coefficients do not depend on the size $x_i$, i.e., $g_i=g_i(t,m,s_i)$, and that $\bar s_i=0$ and $\bar s_i=1$ are equilibrium points for the average maturity variable. This can be ensured by choosing maturation functions satisfying
\begin{equation}\label{genmatfunct}
\begin{dcases}
&g_{i,0}(t,s_i) + \bar s_i g_{i,1}(t,s_i)= f_i(t,1-\bar s_i) h_i(t,s_i)
\\ &  g_{i,0}(t,0)\geq 0
\\
&f_i(t,0)=0, \quad f_i(t,r)\in C^\infty([0,T], [0,1]), \quad h_i(t,s_i)\in C^\infty([0,T], \mathbb R_+).
\end{dcases}
\end{equation}
Indeed, the first condition in \eqref{genmatfunct} implies that $\dot{\bar s}_i(t)=0$ whenever $\bar s_i=1$, while the second condition ensures that $\dot{\bar s}_i(t)\geq 0$ at $\bar s_i=0$. Hence, $\bar s_i(t)\in[0,1]$ for all times whenever $\bar s_i(0)\in[0,1]$.

As already noted, the assumption that the maturation dynamics are independent of the %total 
follicular
size also %appears in 
applies to the SISYPHE model.
\item For technical reasons, we assume that
\begin{equation}\label{eq:gInCInf}
g_{i,0}(t,s_i),\, g_{i,1}(t,s_i),\in C^{\infty}([0,T]\times[0,\xi]).
\end{equation}
Because of 
(1st) %(increasing)
monotonicity and boundedness of  $x_i(t)\in(0,\xi]$ (cf.~Lange et al.), which implies  Lipschitz continuity $x_i(t)\in C^1([0,T];(0,\xi])$, (2nd) continuity of the maturation functions \eqref{eq:gInCInf},
and (3rd) boundedness of $\bar s_i(t)\in[0,1]$,%
\footnote{Because of \eqref{genmatfunct}, the equation $$\bar s_i= g_0(t,s) + \bar s\,g_1(t,s)= g_0(t,x\bar s) + \bar s\, g_1(t,x\bar s)=\tilde g_0(t,\bar s) + \bar s\, \tilde g_1(t,\bar s)$$ leads to continuous and bounded $\bar s_i$, which makes $\bar s$ bounded, and $\tilde g_i$ are continuous functions of $(t,\bar s)$.
}
we have ensured
that $\bar s_i(t)\in C^1([0,T];[0,1])$ 
is continuous as well and $s_i(t)\in C^1([0,T];[0,\xi])$, for all $t\geq0$.
By bootstrap arguments, i.e., analyzing higher order derivatives of $x_i(t)$ and $s_i(t)$, the regularity results for size and maturity can be improved. 
\end{itemize}

Before we discuss explicit examples, we recall that both the follicular radius $x_i$ and the follicular maturity $s_i$ have the dimension of length. While this is obvious for the radius, this is not so clear
for the maturity $s_i$. From a general perspective, the maturity $s_i$ indicates the developmental stage of a given follicle, i.e., its functional, structural, and morphological properties. In the present setting, however, $s_i$ can be interpreted as an ``effective radius": unlike the geometric quantity $x_i$, it encodes the stage of the follicle and thus provides a more physiologically appropriate quantity to couple with the endocrine system. For instance, there can be large but atretic follicles whose influence would be overestimated if the coupling were based solely on the geometric size. In other words, the stage variable $s_i$ tracks the physiological structure of the follicle, rather than its mere geometrical size.

In following two subsections, we study
two simple examples of maturation functions $g_i$ and show what type of stage dynamics they encode.

\paragraph{Maturity-independent cellular evolution} 

We start with the simplest possible example,    
where the maturation functions (i.e., the coefficients of the population model \eqref{pdesystem}) are constants, 
\begin{equation}\label{simple_lange}
g_{i,0}(x,s)=\alpha, \qquad g_{i,1}(x,s)=-\alpha.
\end{equation}
With this choice, the characteristic equation \eqref{explicitchar} and the average stage equation \eqref{uncoupledmain-s}  read,
\begin{equation*}
\dot m^{(i)}_{a}= g_i(m^{(i)}_{a},x,s)=\alpha(1-m^{(i)}_{a}), \quad \dot{\bar s}_i=\alpha(1-\bar s_i).
\end{equation*}
These equations imply that the characteristic curves reach the boundary only at infinite time.
This model mimics the cellular maturation dynamics in the final stages of follicular development as proposed by the SISYPHE model, where the cell population is not regressing at a local maturation level. 
All cells mature towards a normalized maximal average stage, $\bar s_i\to1$. The follicle's death is only modeled by the loss of size, i.e., the decay in the density $\phi$, and not via cell regression. 
This means that even if all follicles have maximal average maturation, not all follicles will have the correct ovulatory size $x_i=\xi$. 
    
Therefore, the maturation \eqref{simple_lange} together with \eqref{gamme_lange}, lead to the PSPM
\begin{equation*}
\partial_t \phi_i + \partial_m\left[(1-m)  \phi_i\right]= (\xi-x_i)\left[\gamma-\kappa\left(\eta x_i^\nu+\sum_{j\neq i} x_j^\nu\right)\right]\phi_i,\qquad i=1,2,\dots, N,
\end{equation*}
which recovers
the size equation of
Lange et al.~%\cite{Lange2018-xf}
but yields a new stage equation,
\begin{equation*} 
\begin{cases}
\dot x_i&= x_i(\xi-x_i)\left[\gamma-\kappa\left(\eta x_i^\nu+\sum_{j\neq i} x_j^\nu\right)\right]\\
\dot s_i&= s_i(\xi-x_i)\left[\gamma-\kappa\left(\eta x_i^\nu+\sum_{j\neq i} x_j^\nu\right)\right]+ \alpha(x_i- s_i).
\end{cases}
\end{equation*}
We recall that the relevant equilibrium points for the size equation correspond to individuals of maximal size \( x_i = \xi \) and vanishing size \( x_i = 0 \). These sizes are associated with the maximal stage \( s_i = \xi \) and the vanishing stage \( s_i = 0 \), respectively.

\paragraph{Maturity-dependent cellular evolution}  

Another simple example is given by linear maturation functions,
\begin{equation}\label{nicematfunction}
g_{i,0}(x,s)=\alpha s_i, \qquad g_{i,1}(x,s)=-\alpha s_i.
\end{equation}
The corresponding characteristic and average maturity equations read
\begin{equation} \label{nicechar}
\dot m^{(i)}_{a}= g_i(m^{(i)}_{a},s_i)=\alpha(1-m^{(i)}_{a})s_i \qquad \dot{\bar s}_i=\alpha (1-\bar s_i) s_i.
\end{equation}
When starting with reasonable initial conditions (for the average maturity, i.e., $0<s_i(0)/x_i(0)\equiv\bar s_i(0)\le1$), one can prove that $0<s_i(t)<x_i(t)<\xi$, for all $t\ge0$;
see Proposition \ref{propstage}.
Furthermore, one concludes that for an atretic follicle (i.e., $s_i\to 0$), its cell population stops developing and the average stage monotonically approaches a constant value, 
$\bar s_i(t)\to\bar s_i^c$ as $t\to\infty$. That is, in this model, atretic follicles 
lead to a constant average maturity,
which is similar to the SISYPHE model (considering the cell population), but---in contrast to the SISYPHE model---its value is not necessarily maximal. This captures the common understanding  that the microscopic structures ``dies in place'' rather than regressing to an ``earlier'' developmental stage. From a histomorphological perspective, the population retains its developed structure, even though individual cells are continuously removed from the system by different clearing mechanisms.
See \cite{regan18,Stringer2023-nz} and the references therein for different apoptotic and atretic mechanisms. 

If a follicle reaches the equilibrium at $s_i=\xi$, or its maturity does not vanish, $s_i\neq0$, for a sufficiently long time, then the cells in that follicle
approach the maximal value, $m^{(i)}_{a}=1$.
Due to the defining maturation functions \eqref{nicematfunction}, many follicles can
approach $\bar s_i=1$, even if $x_i<\xi$. This means that for such small-sized follicles the average maturity is maximal, but the their small size---consistent with a low number of cells---makes these follicles atretic.
As a consequence, ovulatory (or dominant) follicles are only characterized by their total size, which has to be close to $x_i=\xi$.       

The PDE for this model is given by
\begin{equation} 
\partial_t \phi_i + \partial_m\left[\alpha(1-m)s_i  \phi_i\right]= (\xi-x_i)\left[\gamma-\kappa\left(\eta x_i^\nu+\sum_{j\neq i} x_j^\nu\right)\right]\phi_i,\qquad i=1,2,\dots, N, \label{langepspm}
\end{equation}
which leads to 
\begin{subequations} \label{langebas}
\begin{empheq}[left=\empheqlbrace]{align}\label{langebas-m}
\dot x_i&= x_i(\xi-x_i)\left[\gamma-\kappa\left(\eta x_i^\nu+
\textstyle \sum_{j\neq i} x_j^\nu\right)\right]\\
\dot s_i&= s_i(\xi-x_i)\left[\gamma-\kappa\left(\eta x_i^\nu+\textstyle \sum_{j\neq i} x_j^\nu\right)\right]+ \alpha s_i (x_i- s_i).
\label{langebas-b}
\end{empheq}
\end{subequations}

\begin{figure}[t]
\centering
\includegraphics[width=0.45 \textwidth]{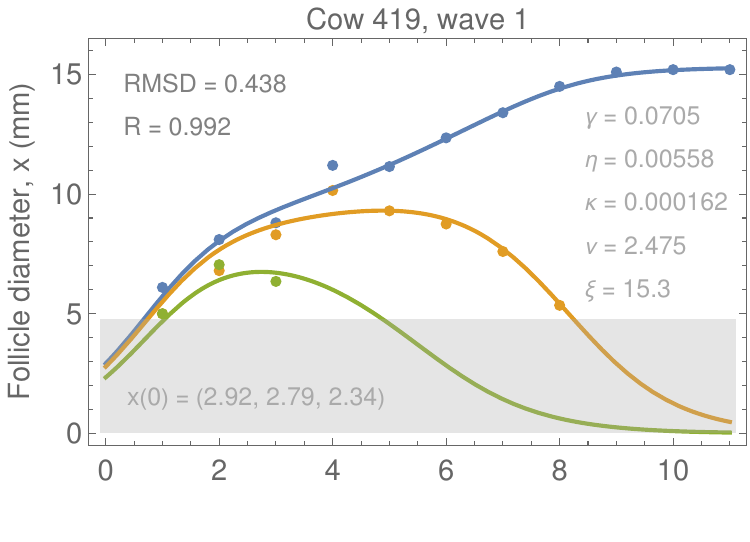} \quad
\includegraphics[width=0.45 \textwidth]{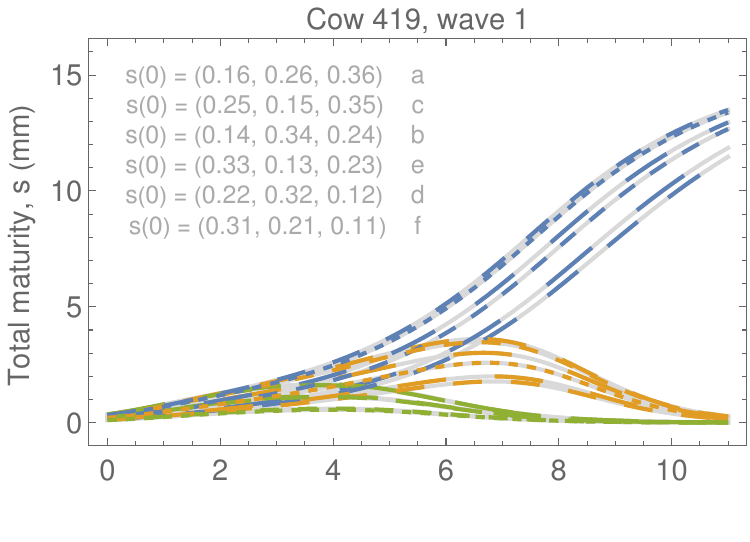}
\\[-4ex]
\includegraphics[width=0.45 \textwidth]{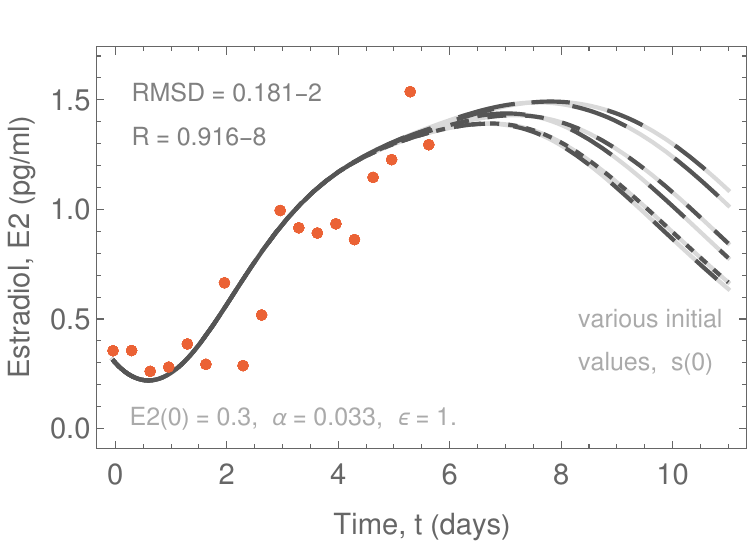} \quad
\includegraphics[width=0.45 \textwidth]{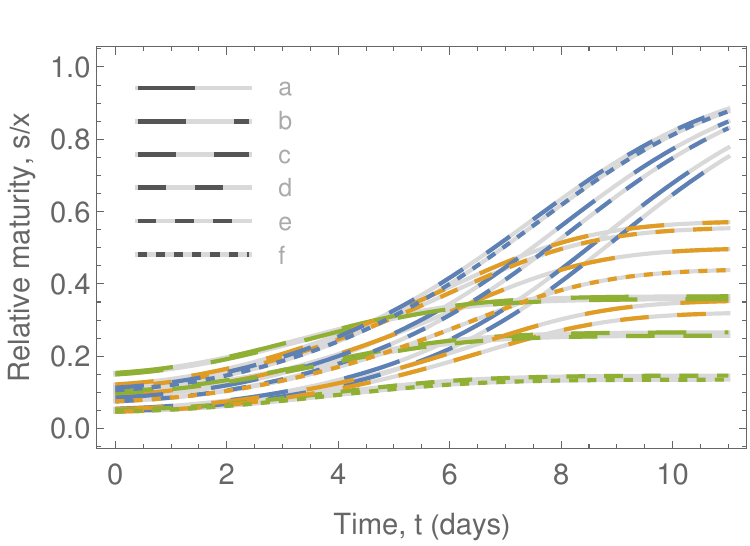} 
\\[2ex]
\caption{Fitting ovarian follicle sizes and blood estradiol concentrations to cow data. Depicted is the rising part of the first wave past ovulation in cow no.~419 from \cite{Cummins2012-oa}.
Follicle sizes $x_i(t)$ over time (depicted in the top left panel) have already been analyzed by \cite{Lange2018-xf}, yet with additional depletion terms covering a longer time period; fitted parameters (printed next to the curves) therefore do not coincide, and the quality of the fit, evaluated through the rooted mean square deviations (RMSD) and the correlation (R), appears slightly better here.
Based on the associated follicular maturity curves, $s_i(t)$ or equivalently $\bar s_i(t)$ (top and bottom right, respectively), simulated for six  (a,b,...,f) initial value triples (cf.~top right panel),
estradiol blood data (left bottom) have been fitted via an estimate for the effective concentration of androgens \eqref{EqEffAndro} and the equation and parameters \eqref{EqE2BovCycle} modeling estradiol 
in the bovine estrous cycle
of \cite{STOTZEL20121415}.
In Appendix \ref{additionalcow}, we provide data and similar fitting results for another cow (Fig.~\ref{fig:bovine3674}).
In contrast to the follicle size (top left), one observes that the curves representing the total and the relative cellular maturation (top and bottom right, respectively) may cross each other.
Note that the modeled follicle sizes are independent of the corresponding maturation. 
The gray area indicates diameters where follicles cannot be detected by ultrasound.}
\label{fig:bovine419}
\end{figure}    

In Figure \ref{fig:bovine419}, we present an example of three bovine follicles (forming the first wave after ovulation) based on the data in \cite{Cummins2012-oa}. 
As anticipated, the size dynamics is unchanged despite the extra follicular maturity variable, $\dot{ s}_i$. 
The maturity equation \eqref{langebas-b} shows that the dominant follicle, %with $x_i=\xi$, 
for which $x_i\to\xi$,
approaches the maximal maturity, %$\bar s_i=1$.
$\bar s_i\to1$.
In this limit, \eqref{langebas-b} reduces to
\begin{equation*}
\dot s_i= \alpha s_i(\xi-s_i),
\end{equation*}
which is a logistic growth equation with characteristic timescale $1/(\alpha\xi)$. The other atretic follicles stop developing at smaller values of average maturity, modeling atresia as a coordinated process in which both geometric and functional components of the follicle degenerate in parallel. 
     
We would like to mention
that for the size dynamics, %as
adopted from the model in Lange et al., a ``no-crossing" property holds for the resulting
size trajectories.
Therefore, dominant and atretic follicles can be predicted entirely by their initial sizes, independently of the initial values of the maturity variables. 
In an upcoming work \citep{next}, we will elaborate on how the size equation can be improved by including knowledge about the level of maturity. 
We are optimistic that refinements of this kind will, besides improving the data fit, allow for 
modeling clinical phenomena like PMOS (in the case of women) via the fixed point structure of the equations.

\paragraph{Microscopical behavior: desyncronization effects}

Now we will have a look at the microscopical content of the PSPM associated with the size-stage dynamics in \eqref{langebas}. 
To this end, we study the solutions of
\begin{equation} \label{langepde}
\begin{split}
\partial_t \phi_i + &\partial_m\big\{\left[\alpha(1-m)s_i -\beta\left(\xi-x_i\right)(\bar s_i- m  )\right]\phi_i\big\}\\&\hspace{1cm}= (\xi-x_i)\left[\gamma-\kappa\left(\eta x_i^\nu+\textstyle \sum_{j\neq i} x_j^\nu\right)\right]\phi_i,\qquad i=1,2,\dots, N,
\end{split}
\end{equation}
which is a modified version of \eqref{nicechar}, where the microscopic dynamics is changed via
the transformation \eqref{gtransfo} with
\begin{equation*}
f_i=-\beta\left(\frac{\xi}{x_i}-1\right),
\end{equation*}
but the macroscopic (size-maturity) dynamics \eqref{langebas} stays the same.
Furthermore, we will analyze different synchronization and  desynchronization effects 
of the density function $\phi_i$ and its support
induced by the parameter $\beta$.

In the case of $\beta=0$, for atretic follicles (i.e., $s_i\to0$), the characteristic equation reduces to
\begin{equation*}
\dot m^{(i)}_a=0,
\end{equation*}
so that the support remains constant with respect to the maturation variable and only partial synchronization occurs. In contrast, for 
dominant follicles, the support of $\phi$ contracts towards large time and maturity $m\to1$, leading to full cellular synchronization. Thus, while atretic follicles retain their dispersion in maturation as they undergo atresia, cells in dominant follicles progressively synchronize toward maximal maturity.
 
\begin{figure}[t]
\centering
\includegraphics[width= .9\textwidth]{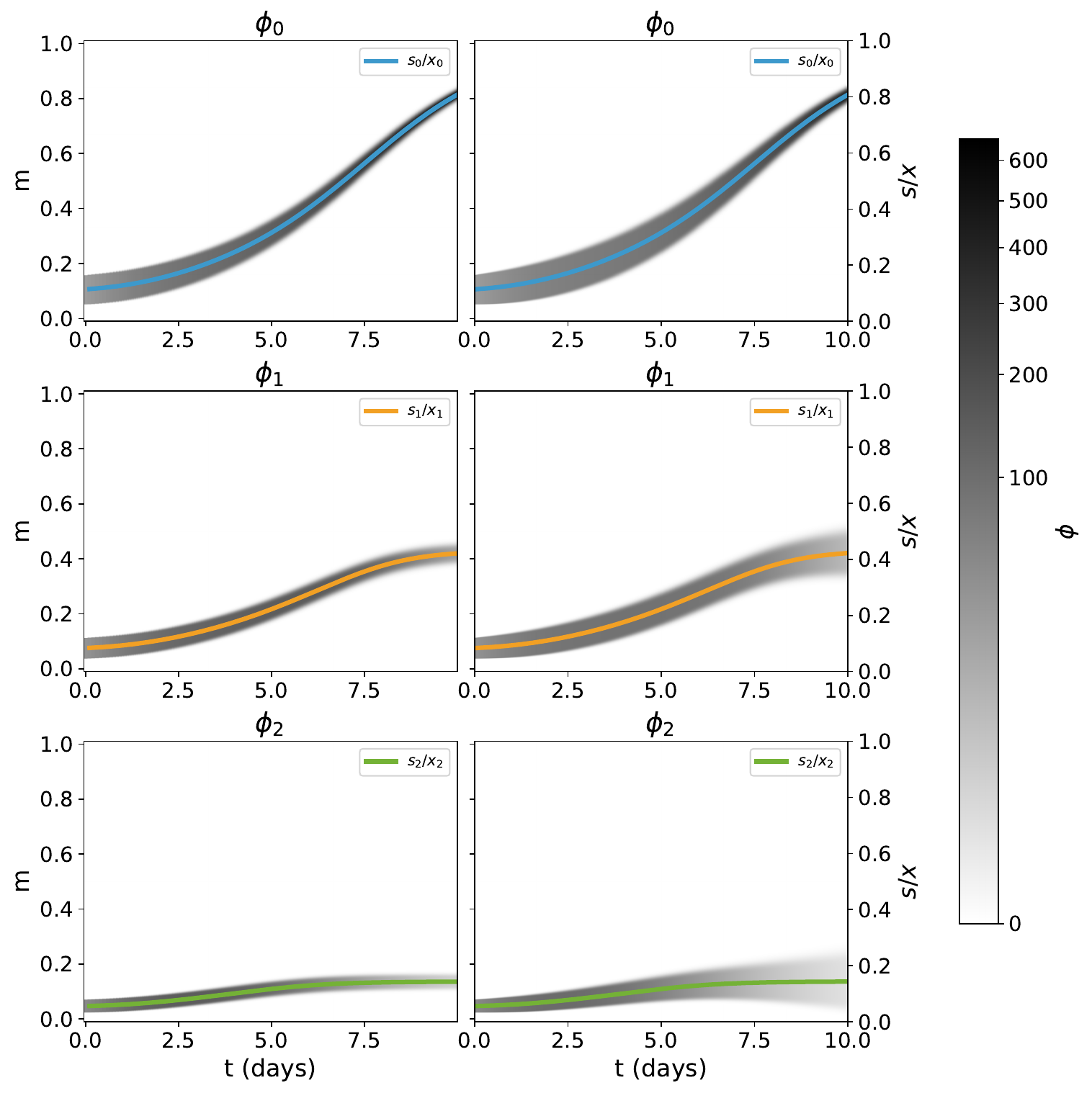}
\caption{Heat maps of three-follicle dynamics in \eqref{langepde}, with parameters as in Figure \ref{fig:bovine419} and initial conditions given by \eqref{init_pde_lange}. For the left panels, $\beta=0$; for the right panels, $\beta=1.2\times 10^{-2}>0$.}
\label{fig:lange}
\end{figure} 
 
On the other hand, for $\beta>0$, the characteristic equation satisfies
\begin{equation}\label{betachar}
\dot m^{(i)}_a=
\alpha\bigl(1-m^{(i)}_a\bigr)s_i
-\beta(\xi-x_i)\bigl(\bar{s}_i-m^{(i)}_a\bigr).
\end{equation}
For each fixed characteristic parameter $\xi$, this is a time-dependent
linear ODE for $m^{(i)}_a$. Since
$x_i(t)$, $s_i(t)$, and $\bar{s}_i(t)$ are bounded over time, the
coefficients of \eqref{betachar} are bounded. Consequently, its solution
$m^{(i)}_a(t)$ is continuous in time.

For dominant follicles (i.e., $x_i\to\xi$, $s_i\to1$), in the limit where $x_i$ and $s_i$ are close to their equilibrium points, the characteristic equation \eqref{betachar} reduces to%
\footnote{Here, '$\sim$' means up to order $O(\epsilon_i)+O(\sigma_i)$ where $0<\epsilon_i=x_i^*-x_i(t)\ll1$ and $0<\sigma_i=s_i^*-s_i(t)\ll1$, with $x_i^*$ and $s_i^*$ being stable equilibrium points.}
\begin{equation*}
\dot m^{(i)}_a\sim\alpha(1-m^{(i)}_a)\ \ \implies \ \ m^{(i)}_a\to1.
\end{equation*}
In contrast, for atretic follicles 
(i.e., $x_i\to0$, $s_i\to0$) satisfying $\bar s_i\to\bar s_i^c<1$, there exists a time $t_0>0$ for which we can approximate the characteristic equation \eqref{betachar} as
\begin{equation}
\dot m^{(i)}_a\sim-\beta\xi(\bar s_i^c-m^{(i)}_a)\implies m^{(i)}_a(t)\sim \bar s_i^c+ e^{\beta\xi t}\left[m_a^{(i)}(t_0)-\bar s_i^c\right].\label{atreticbeta}
\end{equation}
Thus, assuming the support of $\phi_i$ for an atretic follicle remains finite during the final phase, cells are driven away from the average maturation level $\bar s_i^c$, resulting in desynchronization and a progressive expansion of $\operatorname{supp}\phi_i(t,\cdot)$. Because $\beta>0$, the characteristic $m^{(i)}_a$ are not constrained to be bounded between $[0,1]$, which means that in presence of atretic follicles with $\bar s_i^c<1$ the PDE only admits a solution up to a finite time,
$T=\min \big\{t\in \mathbb R_+: \exists j ~m^{(j)}_a(t)\in\{0,1\} \wedge \forall i~m^{(i)}_a(0)\in \mathrm{supp}~\phi_i(0,\cdot)\big\}>0$.

In Figure~\ref{fig:lange}, we show the population dynamics of the three follicles for different values of $\beta$, using initial conditions of the form
\begin{equation}\label{init_pde_lange}
\phi_i(0,m)=\mu_i\chi_{[a_i/3,a_i]},
\end{equation}
where $\chi_{[a,b]}$ denotes the indicator function of the interval $[a,b]$, and
\begin{equation*}
\mu_i=\frac{x_i^2(0)}{s_i(0)},\qquad
a_i=\frac{3s_i(0)}{2x_i(0)}.
\end{equation*}
The numerical simulations were performed using the finite-volume scheme described in Appendix~\ref{finite-volume}, with the corresponding C++ implementation available in the source code repository~\cite{github_follicular-dynamics}.
The initial sizes $x_i(0)$ are taken from the values listed in Figure~\ref{fig:bovine419} (top left), while the initial maturities $s_i(0)$ are taken from list (f) in Figure~\ref{fig:bovine419} (top right).
We observe that \( \beta\neq 0 \) does not significantly affect the dynamics of the dominant follicle (top row), which, at later stages, exhibits clear cellular synchronization. In contrast, for the atretic follicles (middle and bottom rows), a greater dispersion in cellular maturity is evident.
In particular, the middle follicle shows a milder degree of desynchronization compared to the bottom follicle, which can be explained by its relatively high average maturity prior to atresia. Conversely, the bottom follicle remains small and underdeveloped throughout the simulation, with pronounced cellular desynchronization, reflecting its unhealthy physiological state. 

Finally, we observe that the colored lines in Figure \ref{fig:lange}, represent the average maturity $\bar s_i$. In Section \ref{error} we compute the difference between the average stage computed from \eqref{langebas}, using the Mathematica ODE solver, and our numerical scheme for the integration of the PDE \eqref{langepde}. %As expected
We find a good match between both approaches, which is validating our numerical schemes and the ODE reduction.

\subsection{Hormonal feedback: the production of estradiol} \label{sect:BovCycle}

The early mathematical studies on follicular growth, \cite{Lacker1981-js} and \cite{LackerAkin1988}, 
for example, have modeled 
the concentration of
the hormone estradiol 
secreted by individual follicles, not the sizes of individual follicles as we have been doing in the previous sections.
These studies start with a linear ODE of the form,
\begin{equation}\label{Eq1Lacker}
\frac{d}{dt} E2(t) = \sum_{i\le N} \sigma_i(E2(t),t) - c_{E2} E2(t),
\end{equation}
where $\sigma_i$ represents the secretion rate of the $i$-th follicle, $E2(t)$ the resulting estradiol concentration in the blood at time $t$, and $c_{E2}$ a corresponding clearance rate.
Later in their analyses, the authors propose that $\sigma_i$ represents a maturity marker of the $i$-th follicle and, after implementing certain estimates, derive an explicit equation for the estradiol concentration
over time, without $\sigma_i$.

In this section, we as well aim at modeling the estradiol dynamics using an equation similar to \eqref{Eq1Lacker}. 
However, having proposed explicit equations for the size $x_i(t)$ and the maturity $s_i(t)$, our approach follows a different route. 
Both these variables have been claimed to contribute to the secretion rate: $x$ by \cite{Lange2018-xf} and $s$, as mentioned, by Lacker. The model by Lange et al.~requires a discontinuous and rather artificial dependence on $x$ to stop secretion once $x_i(t)$ reaches its maximum, but also the implementation of the maturity concept by Lacker fails to predict the pre-ovulatory estradiol peak and its subsequent drop at ovulation; cf.~\cite{Lu2018-vs}, \cite{LINDNER202686}. To overcome these limitations, one may consider the difference of the two variables.

Given the explicit equations for $x_i$ and $s_i$, one can employ two properties of the difference,
\begin{equation*}
0<x_i(t)-s_i(t)=x_i(t)\Big(1-\bar s_i(t)\Big)\to_{t\to\infty}0\,,
\end{equation*}
positivity and the limiting behavior at large times (cf.~Prop. \ref{propstage}, below).
The two properties make this difference a unimodal function over time, which initially increases (exactly as the size of the follicle) but,
as soon as competition is over and follicle $i$ has either turned dominant ($x_i\to\xi$, $s_i\to\xi$) or atretic ($x_i\to0$, $s_i\to0$),
starts to decrease and converges to zero 
(via $\xi-\xi$ or $0-0$).
At least phenomenologically, this is how secretion of estradiol can be described 
(i.e., via $\sigma_i\propto x_i-s_i$) when both size and maturity, the two variables our formalism \eqref{langebas}
is built on, are employed.   
Before committing to a slightly more realistic expression
(as the pure difference is certainly the simplest possible characterization),
we are looking at other hormones involved.

Depending on the presence of LH and FSH in the ovaries, 
follicles produce androgens and estrogens, in layers of theca and granulosa cells, respectively \citep{hillier1981modulation}. 
Induced by LH, which is synthesized in the pituitary and transported to the ovaries through the blood, theca cells convert cholesterol into androgens.
This conversion is ongoing in all developmental stages of a follicle, although more intensely in later stages \citep{Magoffin2002,young2010theca}.
The produced androgens diffuse into the granulosa cells of the same follicle, and likely to other follicles as well \citep{mcnatty1976concentration}.
In the granulosa cells, the androgens are converted into estrogens such as E2. This only happens, however, with the help of the enzyme aromatase, which is expressed in the granulosa cells through FSH, which again is synthesized in the pituitary in accordance with the regulatory hormonal feedback \citep{dorrington1975estradiol,simpson2002aromatase}.
Without being converted into estrogens, androgens are detrimental to the granulosa cells; certain accumulated amounts induce apoptosis.
That is, without FSH being sufficiently available, the androgen concentration increases and the follicle undergoes atresia \citep{billig1993estrogens, kaipia1997regulation}.
The amount of FSH needed to keep a follicle developing is inversely related to its size and maturity. More developed follicles are more sensitive to FSH and thus function properly also at smaller FSH concentrations \citep{brown1978pituitary}.
Many authors, the SISYPHE team included, argue in terms of the limited resource FSH when explaining competition for dominance between follicles.
We think it is more natural to assign this function to the androgens, because their accumulation directly suppresses follicular survival and, importantly, the accumulated amount informs about the number and maturity of the surrounding follicles, as all of them produce androgens.%
\footnote{
In contrast, FSH is only produced in the pituitary and not by the follicles, which would allow for competition only with a large time delay (on the scale of days).}

Moreover, as the production of androgens and consequently their presence is directly related to the follicle size, one can express follicular competition
in terms of a single equation \eqref{langeeq}, only modeling size. 
The contribution therein representing the involved suppression of growth is given by the sum over all sizes,
where the size of the focal follicle (with index $i$) is included with a smaller weight ($\eta<1$).
This sum can be understood as an effective description of the hormones involved,
presumably representing harming androgens
as discussed by \cite{Lange2018-xf}.
The function of androgens in this context has in fact been attributed to transmitting information about the size and, as we will add here, the maturity of the surrounding follicles.
In particular, the change of the androgen concentration due to follicle $i$ has been associated by Lange et al.~with a positive change in its size $(\dot x_i^\nu)_+$.
When including maturity as a limiting factor for growth, one can express the androgens (produced by the theca cells of follicle $i$) through a version of the continuous difference operation,
\begin{equation}\label{eqAi}
 A_i=\frac{1}{\xi^\nu}\,(x_i^\nu-s_i^\nu),  
\end{equation}
which we have suggested for the secretion of estradiol (at the beginning of this section).
Here we raise both variables, size $x_i$ and maturity $s_i$, to some power $\nu$. This $\nu$ is the same power as used in the equation \eqref{langeeq} for the competing sizes,
$\sum_ix_i^\nu$.
We expect values for $\nu$ to lie between 1 and 3, to represent a volume of fractal dimension where androgen production happens.\footnote{Data of two cows led to values of  2.475 and 1.65 for $\nu$, cf.~Fig.~\ref{fig:bovine419} and Fig.~\ref{fig:bovine3674}, resp.}
We normalize these follicular androgen contributions, $A_i\le1$, via the parameter $\xi$, encoding the maximal values of size and maturity.

The rate at which estradiol is secreted by follicle $i$,
resulting from direct conversion of androgens
in its granulosa cells, can be modeled by a multiple of \eqref{eqAi},
\begin{align}
\sigma_i=c_{Foll}^{E2}\,\varepsilon^\nu A_i,
\end{align}
where the first factor $c_{Foll}^{E2}$ is taken from the literature (containing all relevant units) and the second (hence dimensionless) factor $\varepsilon$ is adjusted in accordance with data.
Again, its form is plausible as larger follicles convert more androgens than smaller ones, but when they become mature they either ovulate or degenerate and stop converting androgens.

The secretion rate parameter $c_{Foll}^{E2}$ is taken from \cite{STOTZEL20121415}. In fact, we aim at employing their bovine estrous cycle, in particular their estradiol equation,
\begin{equation}\label{EqE2BovCycle}
\frac{d}{dt}E2(t)=c_\text{\sl Foll}^\text{\sl E2}~\text{\sl Foll\,}^2(t)-c_\text{\sl E2}~E2(t),
\qquad\begin{cases}
c_\text{\sl Foll}^\text{\sl E2}\hspace*{-1em}&=~2.19,\\
c_\text{\sl E2}&=~1.23,
\end{cases}
\end{equation}
to compare our model against E2 data in cows.
Despite hormone concentrations and other state variables being normalized (in artificial units with maximal values around one), 
their time scales are realistic, with period lengths around 23 days and multiple waves.
The explicit form (of the square) of their state variable, 
\begin{equation}\label{EqEffAndro}
\text{\sl Foll\,}^2=
\left(\frac{\varepsilon}{\xi}\right)^\nu
\sum_{i\le N}\,(x_i^\nu-s_i^\nu),
\end{equation}
supposed to model all growing follicles effectively in one variable, can now be interpreted as the total amount of androgens convertible to estradiol at a given time.

Three parameters have been adjusted to fit E2 data, which---in addition to the follicle sizes---we have taken from \cite{Cummins2012-oa}; see Figure~\ref{fig:bovine419}.
These parameters are the maturation rate $\alpha$,
the dimensionless E2 secretion rate multiplier $\varepsilon$, 
and the initial estradiol concentration $E2(0)$.
For the first few days into a follicular wave, all possible $E2(t)$ trajectories turn out to be virtually the same curve, independent of the initial values $s_i(0)$ of the underlying maturity. During those first days, the estradiol concentration increases. Only when the concentration starts to decrease again, the E2 trajectories resulting from different initial maturities separate and follow different paths.

As an immediate application, this behavior could be employed to numerically determine the initial maturities, $s_i(0)$, starting the wave. 
Let us recall that even if in Item \ref{item3},  Section \ref{results}, we associate the follicular maturity $s_i(t)$ with morphological and developmental stages,\footnote{similar to the discrete compartmental stages in, e.g., \cite{MARGOLSKEE201321} or \cite{HENDRIX201431}}
there is no proper data or scale available, at least up to now.  
This is also the reason why there are no data points on the right hand side panels of Figure~\ref{fig:bovine419} and only a bunch of (possible) curves. That is, in contrast to the directly observable follicle size, maturity is kind of a hidden variable.
However, if E2 data was available over the whole wave,
one could narrow down the initial values of the maturities and hence reconstruct their full temporal course via the modeling ODEs \eqref{langebas}.

Alternatively, one can compare the multiple  $E2(t)$ curves in Figure~\ref{fig:bovine419} with simulated E2 data. 
\cite{cells11233908} have performed such a simulation, based on follicle data while relying on a classification of discrete follicular stages with different E2-contributions. Importantly, they use data from the same cow, no. 419 in \cite{Cummins2012-oa}.
The model by \cite{STOTZEL20121415} also produces $E2(t)$ curves, but their curves are idealized and do not reflect data of a particular cow. A similar description applies to a simulation by \cite{BoerEtAl2011}, which has been used for comparison by \cite{cells11233908}.
Figure~\ref{fig:McEvoy} shows the simulated data by McEvoy et al., Boer et al.,~and Stötzel et al., as well as an $E2(t)$ curve following our construction with hand-picked initial data, $s(0)$.
All four curves are comparable in shape, although the simulation by Boer et al. and Stötzel et al.~deviate from the other two by up to two days.

\begin{figure}[t]
\centering
\hspace*{-5em}
\begin{minipage}[c]{20em}
\includegraphics[width= 1.05\textwidth]{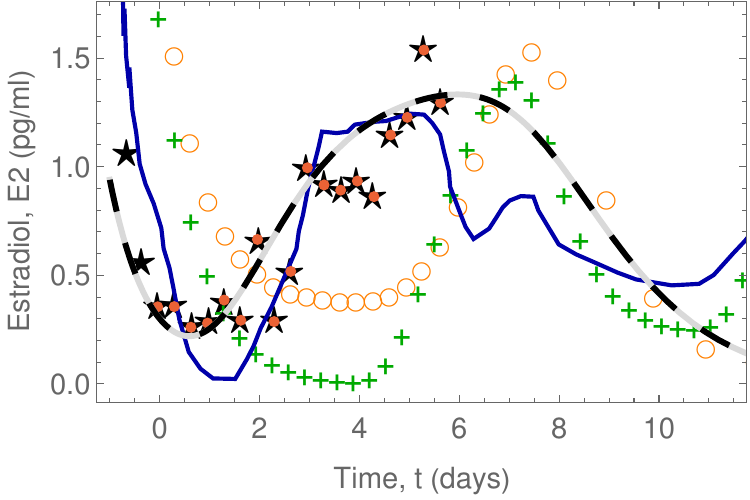}       
\end{minipage}
\qquad~
\begin{minipage}[c]{10em}
\includegraphics[width=1.5\textwidth]{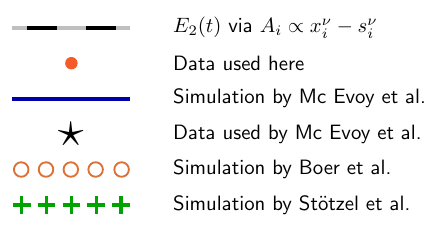}
\end{minipage}\\[-2ex]
\hspace*{-2em}
\begin{minipage}[b]{20em}
%\vspace*{1ex}
\footnotesize
\begin{align*}
\intertext{\sf Fitting:}
RMSD&=0.183~({\rm mm})\\
R&=0.913\\
\intertext{\sf Parameters:}
E2(0)&=0.3~({\rm pg/ml})\\
s(0)&=(0.3,0.2,0.1)~({\rm mm})\\
\alpha&=0.05~({\rm 1/mm/days})\\
\varepsilon&=1\\
\end{align*}
\vspace*{-.5ex}
\end{minipage}
\vspace*{-1em}
\begin{minipage}[b]{10em}
\hspace*{-2em}
 \includegraphics[width= 1.6\textwidth]{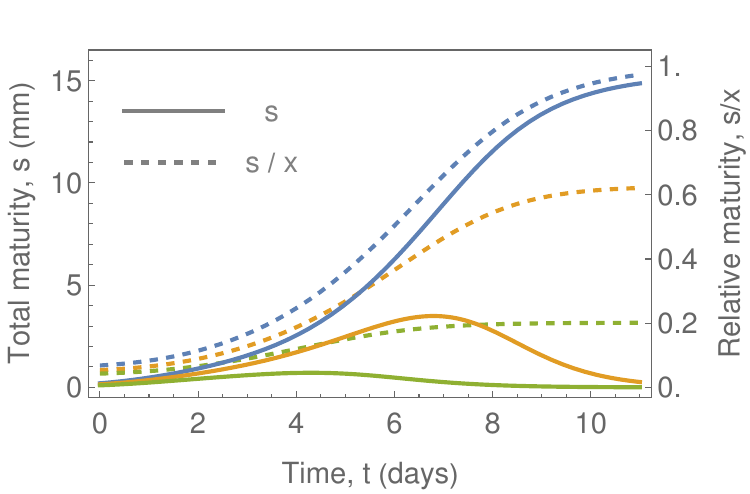}\\[1ex]
\end{minipage}
\caption{Comparison of various simulated blood estradiol concentrations over time. The underlying data (black stars, as used by \cite{cells11233908} and red dots, two data points less, as used here) are associated with Cow 419 from \cite{Cummins2012-oa}. 
Two simulated estradiol concentrations (blue line and orange circles in the upper left panel) are taken from Fig.~13 of McEvoy et al. and transformed onto a realistic scale (pg/ml). Parameters for our $E2(t)$ estimation (via $A_i\propto x_i^\nu-s_i^\nu=x_i^\nu(1-\bar s_i^\nu)$, black dashed line) are chosen to fit the data and the simulated data best. Follicle sizes $x_i(t)$ and associated parameters are chosen as in Fig.~\ref{fig:bovine419}. Model-based simulations (orange circles and green pluses), generated for idealized cows, are taken from \cite{BoerEtAl2011} and \cite{STOTZEL20121415}, resp. Maturity levels, total $s_i$ and relative $\bar s_i\equiv s_i/x_i$, underlying our computations of $E2$ are shown as well over the duration of the simulated first wave of Cow 419 
(lower right panel).}
\label{fig:McEvoy}
\end{figure} 

\section{Discussion }\label{sect:discussion}

Inspired by earlier work on physiologically structured population models (PSPMs) and on stage-structured models,
we have introduced a PDE framework for describing follicular development by explicitly incorporating the concept of maturity on both the level of cells and follicles 
(encoded by the variables $m$ and $s$, respectively).
We have proposed a set of five assumptions to connect the histomorphological information encoded by the PDE to macroscopic observables described by ODE models through so-called moment-closure relations (e.g., identifying $s$ to be the first moment with respect to $m$).

For now, our approach yields a closed first-order moment system without explicitly tracking higher-order moments.
In that we assume that the first two moments, representing size (or cell count) and maturity, provide a sufficiently accurate description of the biological system that can be 
adapted to account for the different 
mechanisms underlying follicular development.%
\footnote{A similar assumption underlies the SISYPHE model, in which follicular mass and maturity are the primary variables used to characterize a follicle as ovulatory or non-ovulatory.
Note that the maturation function $g_i$ in \eqref{aymardcharfull} depends only on the local maturity $m$ and the follicular maturity $s_i$; also see the discussion in Appendix \ref{num}.}

The first-order closure can be interpreted as a mean-field approximation, in which heterogeneity in cellular maturity plays a secondary role. Mathematically, this amounts to assuming that the leading contributions to the maturation, death, and proliferation functions ($g_i,\lambda_i,p_i$, resp.) are captured by a finite polynomial expansion in their
maturity $(m)$, see e.g., \eqref{a2_b} or \eqref{aymardmatquadratic} and \eqref{gammas}.

However, the resulting ODE framework  
captures the leading-order dynamics of follicular cohorts while preserving  
information on their underlying histological structure. Despite its simplicity, the framework is able to reproduce key qualitative features observed in both theoretical and experimental studies.
As a first successful example, we can name the implementation of the estradiol production, which we have demonstrated for bovine data.

The proposed closure may not be regarded as an intrinsic limitation of the framework
but as a natural lowest-order approximation. Whenever additional information on maturity heterogeneity becomes available, the same PDE-to-ODE framework can be systematically refined by incorporating higher-order moments and suitable closure relations.
That is, the framework provides a route toward a more detailed description of cellular maturity connecting microscopic population structure and macroscopic follicular dynamics.

\paragraph{Structured populations replacing compartmental models}
The structured population model 
\eqref{pdesystem}
gives rise to an ODE system
that couples follicle sizes with developmental stages in an explicit form \eqref{odesysmain}. This construction removes the need for discrete ovarian stages and allows to track the mass and stage (i.e., size and maturity) of individual follicles continuously in time. Even if the reconstruction of the underlying 
histomorphological dynamics 
is not uniquely determined, the structure of \eqref{odesysmain} can be used to constrain
admissible maturation mechanisms (through maturation functions $g_i$) and competitive interactions (through death and proliferation functions, $\lambda_i$ and $p_i$).

The application of this framework has clarified
the relationship between early PDE approaches involving cellular maturity and discrete compartmental models. In the original 
approach by the SISYPHE team \citep{Monniaux2016-yt},  
the maturity variable and the total maturation are introduced as auxiliary quantities linked to FSH levels, without providing an independent physiological or morphological interpretation. 
Their definition of atretic and ovulatory follicles requires the introduction of
of a maturity threshold to distinguish between the two fates. In Section~\ref{aymardconn}, we argue that, in the calibrated SISYPHE models  \citep{aymard16,dimred}, most of the follicles can reach a non-normalized, follicle-dependent 
maximal  average maturity $\bar s_i$  depending on
two parameters. This feature makes the distinct roles of local $m$ versus follicular maturity $s_i$, and consequently average maturity $\bar s_i$ difficult to disentangle in these models.%
\footnote{Note that we do not claim that follicular fate is generally predetermined in the SISYPHE models. These models allow for rescue dynamics, in which the fate of a follicle depends on its initial mass relative to that of the other follicles, with a sufficiently large initial mass favoring rescue.}

In our framework, the local maturity variable $m$, the follicular size $x_i$, 
and the follicular maturity $s_i$
are normalized and can have a direct interpretation in terms of physiological, morphological, and structural properties  
\citep{Gougeon1986-xc,Faddy1995-yh,Clement1997-op,Da_Silva-Buttkus2008-kz,Richard2024-pz}. 
Furthermore, our explicit model \eqref{langepde} provides a natural description of atretic follicles, for which the %average
relative maturity $\bar s_i\equiv s_i/x_i$ remains approximately constant at 
large times, while both
size and maturity
vanish. Biologically, this suggests that apoptotic cells may preserve their structural properties until they are removed from the system, rather than undergoing a regression in maturation.

\paragraph{Maturity controlling follicular estradiol production}

When only interested in the time evolution of a
single parameter related to follicular size or secretion of estradiol, it may suffice to formulate an ODE model for just that 
parameter.
Such an approach has been  
carried out by many authors, including \cite{Lacker1981-js,Mariana1994,SOBOLEVA200045,Lange2018-xf,shilo}, all of which successfully model 
the time evolution of at least one
follicular wave.
To extend these models and incorporate microscopic features of the cell population, 
one can utilize our PSPM \eqref{PDEgenODE} as 
being constructed to ensure the
follicular development
to be consistent with the underlying
ODE dynamics \eqref{genODE}.
This particular connection between PDE and ODE descriptions underscores the 
potential for further applications.

The maturity variable $s_i$ allows us to overcome the ``no-crossing'' property inherent in the older ODE models.
In PSPMs with stage-independent growth and death rates, e.g.~\eqref{simple_lange}, crossings of trajectories $x_i(t)$ cannot occur. Consistently with other PDE models, including those proposed by the SISYPHE team, the stage dependence, as introduced in \eqref{nicematfunction}, removes this restriction, as we will explore in an upcoming work \citep{next}.

Additionally, the introduction of the maturity variable allows for implementing
hormonal production,
which has been utilized by SISYPHE team to couple cellular and FSH dynamics via an algebraic relation \eqref{stangehill}.
Through our reduction, the implementation of hormonal feedback 
via the maturity variable
extends to the corresponding ODE models.
In Section \ref{sect:BovCycle}, we have demonstrated that
maturity-extended ODEs can
model estradiol production,
thereby extending previous approaches 
that exclusively couple follicular sizes with hormone levels \citep{FISCHERHOLZHAUSEN2022111150}.
In particular, we proposed the secretion of estradiol to be proportional to roughly the difference between follicular size and maturity,
and we tested this ansatz against available estradiol data in cows \citep{Cummins2012-oa},
reproducing estradiol concentrations via a bovine model \citep{STOTZEL20121415}.
Our approach turned out to be consistent with the data while appearing much simpler than the estradiol production model proposed in \cite{cells11233908}.
However, the limited temporal coverage, with data points spanning  only half a follicular wave, cannot confirm the functional form with full confidence.

Our reduced ODE model suggests that maturity $s_i(t)$ is delayed with respect to size $x_i(t)$; see Figure~\ref{fig:McEvoy}. 
Therefore, our estradiol model correctly predicts that ovulatory follicles
(for which $s_i/x_i \approx 1$)
produce relatively little estradiol, thereby phenomenologically reproducing the estradiol drops observed in literature.

\paragraph{Coupling follicles with the full cycle}

A natural continuation of this work is the inclusion of the follicular dynamics, formulated by PSPMs, in a fully functional endocrine model of the hormone cycle. Although we have given a hormonal interpretation of maturity when identifying it with estradiol production, the proposed follicular dynamics remains only partially coupled to the endocrine system. The follicle development
is not yet explicitly linked to the key regulatory hormones such as FSH, LH, progesterone, and estradiol. Nevertheless, the ODE reduction of the proposed PSPM provides a robust basis for coupling follicular development at the cellular level with hormonal feedback, as demonstrated in \cite{fischerb,FISCHERHOLZHAUSEN2022111150} and in the full endocrine-cycle models of \cite{Clark2003-oe,MARGOLSKEE201321,HENDRIX201431}.
Following the approach of \cite{BOER20115987,STOTZEL20121415}, and \cite{FISCHERHOLZHAUSEN2022111150}, one may replace the constant coefficients in \eqref{langepde} by time-dependent functions driven by the hormonal cycle. In 
such an approach, progesterone $P_4(t)$ should be modeled through the dynamics of the corpus luteum, which we suggest to associate 
either with the declining phase of the
size--maturity dynamics, 
which the current model is lacking, 
or with the plateau phase of the average maturity dynamics; see Figure \ref{fig:bovine419} (bottom-right).

After estradiol and progesterone have been introduced as functions of the maturity variable, the remaining hormones, 
such as FSH, LH, and inhibins, can be modeled by standard feedback-delay equations or Hill-type functions, as in the aforementioned endocrine ODE frameworks.
The parameters appearing in the mass equation \eqref{langebas} can then be linked as
in \cite{FISCHERHOLZHAUSEN2022111150}, while the maturity variable serves as a central coupling term between the follicular state and the different hormones. In particular, the additional parameter $\alpha$ in the maturity equation might be encoded as
positively correlated with FSH and LH levels, since elevated concentrations of these hormones increase the rate of follicular development. The implementation of a fully coupled endocrine–ovarian model, together with the investigation of de/-synchronization, multimodal maturity distributions, and PMOS scenarios, will be addressed in future work, thereby extending the present framework toward a more comprehensive description of the ovarian cycle.

\paragraph{Higher order maturation functions and future histomorphological data}

The maturity variable $m$ has been motivated to serve as an index for the histological state of the cell population, encoding structural and functional changes that accompany growth and atresia of a follicle; for different microscopic mechanisms see \cite{Salmon2004-ij,DIAZ200691,GOUGEON2010132,sugiura}.
Within this interpretation, the maturation function in \eqref{langepde} captures de-/synchronization effects as demonstrated via the parameter $\beta$, but it does not resolve finer features of the underlying cell development. Higher-order moments or diffusion terms of the form $\partial_m^2 \phi_i$ would allow one to better represent mechanisms such as differential susceptibility to atresia, cohort desynchronization, and nonlinear hormonal responses. In the present work, we treat these effects as subleading during follicular development, but one could
incorporate them systematically via a perturbative expansion.How to do this expansion in detail remains an open question for now.

The main obstacle in constructing an accurate multiscale model of follicular development is the lack of structured histomorphological data that can be directly linked to macroscopic observables, such as follicular mass and maturity. A more comprehensive and biologically realistic formulation would require quantitative measurements of cell populations throughout all stages of follicular growth and degeneration. Such data would %also
make it possible to identify and quantify the role of higher-order moments in follicular dynamics.
In this sense, the present framework provides 
a mathematically tractable baseline for future refinements when more detailed histological and endocrine data become available.

\section{Methods}\label{math}

Here, we prove the mathematical statements required for the Results section.

\subsection{Mathematical background}

Starting from the transport PDE \eqref{pdesystem}, we derive a system of ODEs governing the associated moments. 
The moments of order zero and one (i.e., mass and stage) are determined by
\begin{theorem}[ODE reduction]\label{odethm}
Let  $T>0$. Assume
\begin{enumerate}%[label=(\text{h}\arabic*), ref=\text{h}\arabic*]
\item[(a0)] \label{h0} 
$\phi_i\in C^1([0,T];L^1(0,1))$ weakly
solves
\begin{equation} \label{pdesystemmath}
\partial_t \phi_i + \partial_m (g_i \phi_i)= (-\lambda_i+p_i)\phi_i,\qquad i=1,2,\dots, N,
\end{equation}
and
\begin{equation}
x_i(0)=\int_0^1 \phi_i(0,m)\ud m\in\Omega,\quad s_i(0)=\int_0^1  m\, \phi_i(0,m) \ud m\in\Omega,
\end{equation}
where $\Omega=(0,\xi)$ for some $\xi>0$;
\item[(a1)] at any time $t\in [0,T]$, 
\begin{equation} 
\mathrm{ess~supp}~\phi_i(t,\cdot)\subset(0,1);
\label{a1math}
\end{equation}
\item[(a2)] \label{h2} growth, decay, and maturation are continuous real functions on $[0,T]\times\Omega^{2N}$ that can be written as follows,
\begin{subequations} \label{a2math}
\begin{align} \label{a2math-a}
p_i&=p_{i,0}(t,x,s),\quad \lambda=\lambda_{i,0}(t,x,s),\quad\text{and}\\
g_i&=g_{i,0}(t,x,s)+m\, g_{i,1}(t,x,s),\quad\text{resp.} \label{a2math-b}
\end{align}
\end{subequations}
\end{enumerate} 
Then the raw moments
\begin{equation}
x_i(t)=\int_0^1 \phi_i(t,m) \ud m, \quad s_i(t)=\int_0^1 m\, \phi_i(t,m) \ud m \label{defnmoments}
\end{equation}
satisfy the ODE system
\begin{subequations} \label{odesys}
\begin{align}
\dot x_i&= x_i\left[-\lambda_{i,0}(t,x,s)+p_{i,0}(t,x,s)\right] \label{finalmasseqn}\\
\dot s_i&= s_i\left[-\lambda_{i,0}(t,x,s)+p_{i,0}(t,x,s)\right]+g_{i,0}(t,x,s) x_i+g_{i,1}(t,x,s) s_i\label{finalstageeqn}
\end{align} 
\end{subequations}
where $x_i(t)$ and $s_i(t)$ are continuous function on $[0,T_{\min})$ with values in $\Omega$, 
for some $T_{\min}$ with $0< T_{\min}\leq T$.
\begin{proof} 
By assuming that a solution $\phi_i(t,m)$ to the PDE \eqref{pdesystemmath} exists, being continuous in $t$ and integrable with respect to $m$ as proposed in \textit{(a0)}, we integrate the three additive terms of the PDE,
\begin{align*}
\partial_{t}\int_0^1\phi_i(t,m)\ud m
+ \int_0^1\partial_{m}\bigl[g_i(m,x(t),s(t))\,\phi_i(t,m)\bigr]\ud m
&= \int_0^1(-\lambda_i+p_i)\,\phi_i(t,m)\ud m,
\end{align*}    
apply the definition \eqref{defnmoments} of the raw moment $x_i$, perform an integration by parts, use the functional independence of $\lambda_i$ and $p_i$ from $m$ as proposed in \eqref{a2math-a}, and end up with an ODE in the variable $t$,
\begin{align*}
\frac{d}{dt}\,x_i(t)
+ g_i(m,x(t),s(t))\,\phi_i(t,m)\Big|_{m=0}^{m=1}
&= (-\lambda_i+p_i)\,x_i(t).
\end{align*}
By employing the support property  \eqref{a1math} to the boundary term we finally obtain an ODE for the zeroth moment,
\begin{align*}
\dot x_i + 0 
&=x_i[-\lambda_i+p_i],
\end{align*}
which is in fact \eqref{finalmasseqn}.
To derive the ODE \eqref{finalstageeqn}
for the first moment, we apply the same steps as for the zeroth moment, now applied to the PDE \eqref{pdesystemmath} multiplied by $m$, i.e.,
\begin{align*}
\partial_{t}\int_0^1m\,\phi_i\ud m
- \int_0^1\!\bigl[\,g_{i,0} + m\, g_{i,1}\,\bigr]\phi_i\ud m
&= (-\lambda_i+p_i)\int_0^1 m\,\phi_i\ud m,\\
\intertext{while employing $g_i$ represented by a linear function of $m$, as proposed in \eqref{a2math-b}, which leads to}
\dot s_i
- \bigl[ x_i g_{i,0} + s_i\,g_{i,1}\,\bigr]
&= s_i[-\lambda_i+p_i].%\label{stage11}
\end{align*}
Since the right-hand sides of \eqref{odesys} are continuous on $[0,T]\times\Omega^2$, it follows that the solutions $(x_i(t),s_i(t))$ are continuous on an interval $[0, T_{\min})$, where $T_{\min}>0$ is defined as
\begin{align*}%\label{eq_Tmin}
T_{\min} = \sup \left\{ t \in (0,T) \colon (x_i(t),s_i(t)) \in \Omega^2, \forall i \right\}.
\end{align*}
\end{proof}
\end{theorem}

\begin{remark}
The assumptions (a1) and (a2) in the theorem coincide with those made in Section \ref{pspmtoode}. 
\end{remark}

\begin{remark}
In a classical setting with $\phi_i\in C^k([0,T]\times[0,1])$ and $k\geq 1$, the assumption in \eqref{a1math} can be replaced by  Dirichlet boundary condition $\phi_i(t,0)=\phi_i(t,1)=0$. 
\end{remark}

\begin{remark}
In the theorem, we assume the existence of a weak solution. Establishing local existence and uniqueness for the PSPM solution requires conditions for the specific maturation, growth, and decay functions. 
We do not intent to specify those,
therefore we cannot pursue well-posedness in full generality.
Instead, we outline one possible strategy for obtaining existence and uniqueness if Dirichlet boundary conditions were imposed. 
Starting from the problem
\begin{subequations}
\begin{empheq}[left=\empheqlbrace]{align}
&\hspace{.2cm}\partial_t \phi_i
+ \partial_m \left[
\bigl(g_{i,0}(x,s) + m g_{i,1}(x,s)\bigr)\phi_i
\right]
= \bigl(-\lambda_i(x,s) + p_i(x,s)\bigr)\phi_i\label{pdefull}
\\[0.5em]
&\hspace{.2cm}\dot{x}_i = x_i\bigl(-\lambda_i(x,s) + p_i(x,s)\bigr) \label{ode-1}
\\[0.5em]
&\hspace{.2cm}\dot{s}_i
= s_i\bigl(-\lambda_i(x,s) + p_i(x,s)\bigr)
+ x_i g_{i,0}(x,s)
+ s_i g_{i,1}(x,s) \label{ode-2}
\\[0.5em]
&\hspace{.2cm}s_i(0)
= \displaystyle\int_0^1 m\,\phi_i(0,m)\,\ud m
\leq x_i(0)
= \displaystyle\int_0^1 \phi_i(0,m)\,\ud m
< \xi \label{init_cond}
\\[0.5em]
& \hspace{.2cm} \phi_i(t,0)=\phi_i(t,1)=0,
\end{empheq}
\end{subequations}
and assuming suitable regularity and local Lipschitz conditions on
$g_i$, $p_i$, and $\lambda_i$, one can first establish the existence and
uniqueness of bounded solutions $(x_i(t),s_i(t))$ to the ODE system \eqref{ode-1}-\eqref{ode-2}, at
least on a sufficiently small time interval.
Substituting these ODE solutions into the PDE \eqref{pdefull} yields
\begin{equation}
\partial_t\phi_i
+ A_i(t,m)\,\partial_m\phi_i
= f_i(t)\,\phi_i,
\end{equation}
where
\begin{equation}
A_i(t,m)
= g_{i,0}\Big(x(t),s(t)\Big)
+ m g_{i,1}\Big(x(t),s(t)\Big)
\end{equation}
and
\begin{equation}
f_i(t)
= -\lambda_i\Big(x(t),s(t)\Big)
+ p_i\Big(x(t),s(t)\Big)
- g_{i,1}\Big(x(t),s(t)\Big).
\end{equation}
Thus, after solving the ODE system, one obtains a family of linear,
time-dependent transport equations for the functions $\phi_i$. Local
existence and uniqueness for $\phi_i$ can then be established, for
example, by the method of characteristics in the case of classical
solutions or by fixed-point arguments in the case of weak solutions. Having obtained the existence and uniqueness results for the ODE system \eqref{ode-1} and \eqref{ode-2} and the PDE \eqref{pdefull}, we have to identify the first two moments of $\phi_i$ with the variables $x_i$ and $s_i$. Integrating the PDE as shown in Theorem \ref{odethm} yields the evolution equations for the zeroth and first raw moments, respectively. 
These equations are
\begin{subequations}\label{fullproblem}
\begin{empheq}[left=\empheqlbrace]{align}
&\frac{\ud}{\ud t} \int_0^1 \phi_i(t,m)\,\ud m = \bigl[-\lambda_i(x,s) + p_i(x,s)\bigr] \int_0^1 \phi_i(t,m)\,\ud m, \label{moment_0} 
\\[.5em]
&\frac{\ud}{\ud t} \int_0^1 m\phi_i(t,m)\,\ud m
= \bigl[-\lambda_i(x,s) + p_i(x,s)\bigr] \int_0^1 m\phi_i(t,m)\,\ud m
\nonumber  \\&\hspace{2cm}+  g_{i,0}(x,s) \int_0^1 \phi_i(t,m)\,\ud m
+  g_{i,1}(x,s) \int_0^1 m\phi_i(t,m)\,\ud m, \label{moment_1}
\end{empheq}
\end{subequations}
which are comparable to \eqref{ode-1}-\eqref{ode-2}, up to the identification between the raw moments and $x_i$, $s_i$. To finally achieve the relations between raw moments and $x_i,s_i$, we first subtract \eqref{moment_0} from \eqref{ode-1} and using initial conditions \eqref{init_cond}, by existence and uniqueness, it follows
\begin{equation}
x_i(t)=\int_0^1 \phi_i(t,m)\ud m,\ \ \text{for } t\geq0.
\end{equation}
Similarly, subtracting \eqref{moment_1} from \eqref{ode-2}, using initial conditions \eqref{init_cond}, by existence and uniqueness is follows
\begin{equation}
s_i(t)=\int_0^1 m\phi_i(t,m)\ud m,\ \ \text{for } t\geq0.
\end{equation}
\end{remark}

For $x_i(t)>0$, we define the average stage (or {\em relative maturity}) by
\begin{equation*}
\bar s_i(t):=\frac{s_i(t)}{x_i(t)}.
\end{equation*}
It follows
\begin{corollary}\label{evolutionsbar}
Let \((x_i(t),s_i(t))\) for $i \in \{1, 2, \dots, N\}$ satisfy \eqref{odesys} and $x_i(t)>0$,
then the average‑stage
$\bar s_i(t)$
obeys
\begin{equation}\label{xsrelation}
\dot{\bar s}_i
\;=\;
g_{i,0}\!\bigl(t,x,x\circ\bar s\bigr)
\;+\;
\bar s_i\,g_{i,1}\!\bigl(t,x,x\circ\bar s\bigr).
\end{equation}
\begin{proof}
Since the pair \((x_i(t),s_i(t))\) satisfies the ODE system \eqref{odesys},
we have \(x_i(t)>0\) on the interval of interest, so \(\bar s_i\) is well defined.
A direct differentiation gives
\begin{align}
\frac{\mathrm d}{\mathrm d t}\!\bigl(\bar s_i\bigr)
&=\frac{\mathrm d}{\mathrm d t}\!\left(\frac{s_i}{x_i}\right)
=\frac{1}{x_i^{2}}\bigl(\dot s_i\,x_i-s_i\,\dot x_i\bigr) \nonumber
\label{eq:deriv1}\\[4pt]
&=\frac{1}{x_i^{2}}\Bigl\{
\bigl[\,s_i\bigl(-\lambda_{i,0}+p_{i,0}\bigr)
+g_{i,0}\,x_i+g_{i,1}\,s_i\bigr]\,x_i
-s_i\,x_i\bigl(-\lambda_{i,0}+p_{i,0}\bigr)
\Bigr\} \nonumber\\[4pt]
&= g_{i,0}\bigl(t,x,x\circ\bar s\bigr)
+\bar s_i\,g_{i,1}\bigl(t,x,x\circ\bar s\bigr).
\end{align}
In the second line above we substituted \(\dot x_i\) and \(\dot s_i\) from
\eqref{finalmasseqn}--\eqref{finalstageeqn} and used the identity
\(s_i=x_i\bar s_i\).
Equation \eqref{eq:deriv1} coincides with \eqref{xsrelation},
which proves the claimed evolution equation for the average‑stage.
\end{proof}
\end{corollary}

It is natural to assume that the average‑stage dynamics of each population are independent of those of the other populations. Under this hypothesis the system \eqref{odesys} can be simplified, as stated in the following corollary.

\begin{corollary}\label{redsbar}
Assume that, for every $i\in\{1,\dots,N\}$, the maturation functions depend only on the
own population’s mass and stage, i.e.
\begin{equation}\label{hyp:local-g}
g_{i,0}(t,x,s)=g_{i,0}\!\bigl(t,x_i,s_i\bigr),\qquad
g_{i,1}(t,x,s)=g_{i,1}\!\bigl(t,x_i,s_i\bigr)
\quad\text{for all }t\in[0,T].
\end{equation}
Then the ODE system \eqref{odesys} is equivalent to the decoupled system
\begin{subequations}\label{uncoupled}
\begin{align}
\dot x_i &= x_i\Bigl[-\lambda_{i,0}\!\bigl(t,x,x\circ\bar s\bigr)
+p_{i,0}\!\bigl(t,x,x\circ\bar s\bigr)\Bigr],
\label{uncoupled-x}\\[4pt]
\dot{\bar s}_i &= 
g_{i,0}\!\bigl(t,x_i,x_i\bar s_i\bigr)
\;+\;
\bar s_i\,g_{i,1}\!\bigl(t,x_i,x_i\bar s_i\bigr),
\label{uncoupled-g}
\end{align}
\end{subequations}
where $\displaystyle\bar s_i(t) =\frac{s_i(t)}{x_i(t)}$ with $x_i(t)>0$.
\end{corollary}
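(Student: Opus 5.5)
The plan is to prove Corollary \ref{redsbar} by exhibiting an explicit change of variables and checking that it carries solutions of one system to solutions of the other, in both directions. On the set $\{x_i>0\}$ we use the map $(x_i,s_i)\mapsto(x_i,\bar s_i)=(x_i,s_i/x_i)$. It is a $C^1$ diffeomorphism of $(0,\infty)\times\mathbb R$ onto itself, with inverse $(x_i,\bar s_i)\mapsto(x_i,x_i\bar s_i)$. Since time is untouched, solutions of \eqref{odesys} with $x_i>0$ should correspond exactly to solutions of \eqref{uncoupled}.

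\emph{Forward direction.} Take a solution $(x,s)$ of \eqref{odesys} with $x_i(t)>0$ on the interval considered.
\begin{enumerate}
\item Equation \eqref{uncoupled-x} is just \eqref{finalmasseqn} with $s$ replaced by $x\circ\bar s$, which is an identity because $s=x\circ\bar s$.
\item Corollary \ref{evolutionsbar} already gives $\dot{\bar s}_i=g_{i,0}(t,x,x\circ\bar s)+\bar s_i g_{i,1}(t,x,x\circ\bar s)$.
\item Hypothesis \eqref{hyp:local-g} says the coefficients depend only on the $i$-th components. They therefore collapse to $g_{i,\cdot}(t,x_i,x_i\bar s_i)$, which is \eqref{uncoupled-g}.
\end{enumerate}

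\emph{Converse direction.} Take a solution $(x,\bar s)$ of \eqref{uncoupled} with $x_i>0$, and set $s_i:=x_i\bar s_i$. By the product rule,
\[
\dot s_i=\dot x_i\bar s_i+x_i\dot{\bar s}_i
=s_i\bigl[-\lambda_{i,0}+p_{i,0}\bigr]+x_i g_{i,0}(t,x_i,s_i)+s_i g_{i,1}(t,x_i,s_i).
\]
Using \eqref{hyp:local-g} once more, this is exactly \eqref{finalstageeqn}. The equation for $x_i$ again holds trivially. So the two systems have the same solution sets, and matching initial data correspond via $\bar s_i(0)=s_i(0)/x_i(0)$.

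The only genuine point needing care is that $x_i(t)$ stays positive on the whole interval, so that $\bar s_i$ is defined throughout. I would settle this from \eqref{finalmasseqn}: it is linear in $x_i$ with a coefficient that is continuous along the solution, so
\[
x_i(t)=x_i(0)\exp\!\int_0^t\bigl(-\lambda_{i,0}+p_{i,0}\bigr)\ud\tau>0
\]
whenever $x_i(0)>0$. On $[0,T_{\min})$ from Theorem \ref{odethm} the integrand is bounded on compact subintervals, so this positivity holds there. The same formula applies to \eqref{uncoupled-x}, so positivity is preserved in both directions and the equivalence holds on the common existence interval.
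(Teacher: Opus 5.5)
Your proof is correct and takes essentially the paper's route. The paper gives no separate proof of Corollary \ref{redsbar}; it treats the result as an immediate consequence of Corollary \ref{evolutionsbar} once \eqref{hyp:local-g} collapses the arguments of $g_{i,0},g_{i,1}$ to $(t,x_i,x_i\bar s_i)$, which is exactly your forward direction. Your product-rule converse and the positivity check make explicit the equivalence and the well-definedness of $\bar s_i$, which the paper leaves unstated (on $[0,T_{\min})$ positivity is already built into the definition of $T_{\min}$, since $x_i\in\Omega=(0,\xi)$ there).
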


These results can be seen as a perturbative analysis of PSPMs,
bridging the gap between multiscale models formulated as PDE systems and macroscopic models described by ODE systems. They will allow us to better understand the competition between populations, equilibrium points, bifurcations, and, eventually, the calibration of microscopic dynamics from macroscopic outputs.

Let us define the macroscopic variables
\begin{equation*}
s_i^{(\alpha)}= \int_0^1 m^\alpha \phi_i \, dm,
\end{equation*}
representing the $\alpha$-th moment.
Then, we have the following 
\begin{theorem} \label{higherorder} 
Assume that hypotheses (a0) and (a1) of Theorem \ref{odethm} hold, and let 
\begin{equation*}
s^{(\alpha)}(t) = \left(s_1^{(\alpha)}(t), \dots, s_N^{(\alpha)}(t)\right)
\end{equation*}
be a sequence of macroscopic variables indexed by $\alpha \in \mathbb{N}$, with initial conditions satisfying $s_i^{(\alpha)}(0) \in \Omega$ for all $i \in \{1, \dots, N\}$.    
Furthermore, assume that the growth $p$, decay $\lambda$, and maturation functions $g$ are real-valued, continuous on $[0,T] \times [0,1] \times \Omega^{r_M+1}$, depend on a finite number of macroscopic variables, and admit the power series expansions
\begin{subequations}\label{eqs:thm4}
\begin{align}
g_i\left(t, m, \{s^{(r)}\}_{r\in R}\right) &= \sum_{j=0}^{\infty} m^j g_{i,j}\left(t, \{s^{(r)}\}_{r\in R}\right), \\
-\lambda_i\left(t, m, \{s^{(r)}\}_{r\in R}\right) + p_i\left(t, m, \{s^{(r)}\}_{r\in R}\right) &= \sum_{j=0}^{\infty} m^j \Gamma_{i,j}\left(t, \{s^{(r)}\}_{r\in R}\right),
\end{align}
\end{subequations}
where $R = \{0, 1, \dots, r_M\}$ for a finite integer $r_M \geq 0$. 
Then, on the time interval $[0, T_m] \subseteq [0, T]$, the macroscopic variables $s_i^{(\alpha)}$ satisfy the infinite-dimensional dynamical system
\begin{equation} \label{infinitesystem}
\dot{s}_i^{(\alpha)} = \sum_{n=0}^{\infty} \left[ \Gamma_{i,n} s_i^{(n+\alpha)} + \alpha g_{i,n} s_i^{(n-1+\alpha)} \right].
\end{equation}
\begin{proof}
The proof follows a similar argument to that of Theorem \ref{odethm}. Multiplying \eqref{pdesystemmath} by $m^\alpha$, integrating over the domain, and applying the support properties \eqref{a1math}, yields the claim.
\end{proof}
\end{theorem}

Under generic assumptions for a perturbative expansion (e.g., small higher-order moments or small values of $g_{i,j}$ and $\Gamma_{i,j}$ as $j$ becomes large), the infinite-dimensional system \eqref{infinitesystem} can be truncated into a finite one. See also \cite{Villa2025} for alternative closure and truncation strategies.

\subsection{Proofs related to our rephrased SISYPHE model}\label{math_sisyphe}

Here, we derive various results regarding the truncated version of the SISYPHE model presented in Section \ref{aymardconn}. We start with the following the theorem.
\begin{theorem} \label{thm:aymard}
Let $0 < c_{i,1} < 1$ for all $i \in \{1, \dots, N\}$. For any initial condition 
$\bar{s}(0) = \big(\bar{s}_1(0), \dots, \bar{s}_N(0)\big) \in (0,1)^N$, 
the system of differential equations
\begin{equation} \label{avmaturityaymard}
\dot{\bar{s}}_i = -\bar{s}_i^2 + (c_{i,1}\bar{s}_i - c_{i,1} + 1) \frac{1}{N} \sum_{j=1}^N \bar{s}_j
\end{equation}
admits a unique, continuous solution $\bar{s}(t)$ defined for all $t \geq 0$. 
Moreover, the unit hypercube $[0,1]^N$ is positively invariant; i.e., 
$\bar{s}_i(t) \in [0,1]$ for all $i$ and all $t \geq 0$.   
Furthermore, %within the state space $[0,1]^N$:
\begin{enumerate}
\item The origin $\bar{s}^* = (0, \dots, 0)$ is an unstable equilibrium point.
\item The point $\bar{s}^* = (1, \dots, 1)$ is a locally asymptotically stable equilibrium in $(0,1]^N$.
\end{enumerate}
\begin{proof}
We first prove boundedness in the closed hypercube $[0,1]^N$.
Assume that there exists a first 
time $t_0 > 0$ such that $\bar{s}_{i_0}(t_0) = 0$ for indices $i_0 \in I_0 \subset \{1, \dots, N\}$, while $\bar{s}_i(t_0) \in (0,1)$ 
for all $i \in \{1, \dots, N\} \setminus I_0$. Evaluating \eqref{avmaturityaymard} at $t = t_0$, we obtain
\begin{equation*}
\dot{\bar{s}}_{i_0}(t_0) = (1 - c_{i_0,1}) \frac{1}{N} \sum_{j=1}^N \bar{s}_j(t_0) > 0,
\end{equation*}
since $0 < c_{i_0,1} < 1$ and $\bar{s}_j(t_0) \geq 0$ for all $j$, with at least one strictly positive component associated with $j \in \{1, \dots, N\} \setminus I_0$. Therefore, $\bar{s}_{i_0}$ is strictly increasing at $t=t_0$, and hence $\bar{s}_i(t)$ cannot cross the lower boundary of the interval $[0,1]$.
\newline
Similarly, assume there exists a time $t_1 > 0$ such that $\bar{s}_{i_1}(t_1) = 1$ for indices $i_1 \in I_1 \subset \{1, \dots, N\}$, while $\bar{s}_i(t_1) \in (0,1)$ for all $i \in \{1, \dots, N\} \setminus I_1$. Evaluating \eqref{avmaturityaymard} at $t = t_1$ yields
\begin{equation*}
\dot{\bar{s}}_{i_1}(t_1) = -1 + \frac{1}{N} \sum_{j=1}^N \bar{s}_j(t_1) < 0,
\end{equation*}
since $\bar{s}_j(t_1) \leq 1$ for all $j$.
Thus, $\bar{s}_{i_1}$ is strictly decreasing at $t = t_1$, preventing trajectories from crossing through the upper boundary. This establishes that if $\bar{s}(0) \in [0,1]^N$, then $\bar{s}(t) \in (0,1)^N$ for all $t \geq 0$.\newline
Local existence and uniqueness of a solution around $t = 0$ follow directly from the Picard--Lindelöf theorem, as the right-hand side of \eqref{avmaturityaymard} is locally Lipschitz continuous. Global-in-time existence follows from the boundedness of the vector field on the invariant set $[0,1]^N$, where we have
\begin{equation*}
|\dot{\bar{s}}_i(t)| \leq M_i := \max_{y \in [0,1]^N} \left| -y_i^2 + (c_{i,1}y_i - c_{i,1} + 1) \frac{1}{N} \sum_{j=1}^N y_j \right| < +\infty.
\end{equation*}
The continuity of the solution $\bar{s}(t)$ is thereby guaranteed for all finite times $t \geq 0$.\newline
To analyze the stability of the equilibrium points, we first establish that the origin $\bar{s}^* = (0, \dots, 0)$ is unstable. Indeed, introducing the parametrization $\bar{s}_i(t) = \epsilon \tilde{\sigma}_i(t)$, where $0 < \epsilon \ll 1$, $\tilde{\sigma}_i = O(1)$, and substituting this expression into \eqref{avmaturityaymard}, by neglecting higher-order contributions in $\epsilon$, we obtain  the linearized system
\begin{equation*}
\dot{\tilde{\sigma}}_i = (1 - c_{i,1}) \frac{1}{N} \sum_{j=1}^N \tilde{\sigma}_j > 0,\end{equation*}
which forces trajectories away from the origin.
\newline
To evaluate the local stability of the equilibrium point $\bar{s}^* = (1, \dots, 1)$, we introduce the perturbation variables $\sigma_i(t)$ defined via $\bar{s}_i(t) = 1 - \epsilon \sigma_i(t)$, where $0 < \epsilon \ll 1$ and $\sigma_i(t) = O(1)$. Substituting this expression into \eqref{avmaturityaymard} and neglecting higher-order contributions yields the linearized system
\begin{equation} \label{eq:linearized}
\dot{\sigma}_i = -(2 - c_{i,1})\sigma_i + \frac{1}{N} \sum_{j=1}^N \sigma_j.
\end{equation}
We determine the stability of the trivial equilibrium $\sigma = (0, \dots, 0)$ using the candidate Lyapunov function 
\begin{equation*}
V(\sigma) = \frac{1}{2} \sum_{i=1}^N \sigma_i^2.
\end{equation*}
Taking the time derivative along the trajectories of \eqref{eq:linearized}, we obtain
\begin{equation} \label{lyap}
\begin{split}
\dot{V}(\sigma) &= \frac{1}{N} \left( \sum_{i=1}^N \sigma_i \right)^2 - \sum_{i=1}^N (2 - c_{i,1})\sigma_i^2 \\
&\leq \sum_{i=1}^N \sigma_i^2 - \sum_{i=1}^N (2 - c_{i,1})\sigma_i^2 \\
&= - \sum_{i=1}^N (1 - c_{i,1})\sigma_i^2 \\&\leq 0,
\end{split}
\end{equation}
where we used the property that
\begin{equation*}
\frac{1}{N}\left(\sum_{i=1}^N \sigma_i\right)^2 \leq \frac{1}{N}\left(\sum_{i=1}^N \sigma_i^2\right)\left(\sum_{i=1}^N 1\right) = \sum_{i=1}^N \sigma_i^2.
\end{equation*}
Since $0 < c_{i,1} < 1$, the right-hand side of \eqref{lyap} is strictly negative for all $\sigma_i \neq 0$, and equality holds if and only if $\sigma = (0, \dots, 0)$. By Lyapunov's direct method, the origin of the system is asymptotically stable, implying that $\bar{s}^* = (1, \dots, 1)$ is locally asymptotically stable.
\end{proof}
\end{theorem}

This result leads to the following proposition.

\begin{proposition} \label{prop:contaymard}
Under the same assumptions as in Theorem \ref{thm:aymard}, together with the initial conditions 
$x_i(0) > 0$ for all $i \in \{1, \dots, N\}$, the solution $x_i(t)$ to the mass equation in \eqref{aymardode} remains non-negative and continuous for all finite times $t \geq 0$.
\begin{proof}
Non-negativity follows from the fact that if $x_i(t_0)=0$ then $\dot x_i(t_0)=0$, therefore, the function $x_i$ is not decreasing near $x_i(t_0)=0$. On the other hand, we have that
\begin{equation*}
\left\vert \dot x_i \right\vert = \left\vert x_i\right\vert \left\vert-\lambda_i(\bar s(t)) + p_i(\bar s(t)) \right\vert\leq \left\vert x_i\right\vert C_i
\end{equation*}
where 
\begin{equation*}
C_i=\max_{\bar s\in [0,1]^N} \left\vert-\lambda_i(\bar s) + p_i(\bar s) \right\vert<\infty.
\end{equation*}
The existence of $C_i$ follows from the continuity of $\lambda_i$, $p_i$ and $\bar s_i$. Local continuity follows from the Picard-Lindel\"of theorem, while global existence from the Gr\"onwall's lemma. Indeed, the function $x_i(t)$ is bounded at any finite time by
\begin{equation*}
x_i(t)\leq x_i(0) e^{C_i t}.
\end{equation*}
Continuity at any finite times follows again from the  Picard-Lindel\"of theorem and global existence.
\end{proof}
\end{proposition}

Finally, we are in a position to prove the following result.

\begin{theorem} \label{onestablefixed}
Under the assumptions of Theorem~\ref{thm:aymard}, the unique locally asymptotically stable equilibrium point $\bar{s} \in [0,1]^N$ for the average maturity dynamics in \eqref{aymardode} is given by $\bar{s}_i = 1$ for all $i \in \{1, \dots, N\}$.
\begin{proof} 
We have already shown that the origin $\bar s^*=(0,\dots,0)$ is unstable and that the equilibrium point $\bar s^*=(1,\dots,1)$ is locally asymptotically stable. It remains to show that there are no other equilibrium points in the open hypercube $(0,1)^N$. This is equivalent to proving that the system of equations
\begin{equation}\label{eqcondition1}
\bar s_i^2 - (c_{i,1} \bar s_i -c_{i,1}+1)\frac{1}{N}\sum_{j=1}^N \bar s_j=0, \quad i = 1, \dots, N,
\end{equation}
does not admit any solutions in $(0,1)^N$. To this end, let us define the polynomial
\begin{equation*}
p_i(\bar s_i,\theta):=  \bar s_i^2 - (c_{i,1} \bar s_i -c_{i,1}+1)\theta
\end{equation*}
and observe that \eqref{eqcondition1} is equivalent to the system
\begin{equation*}
\begin{dcases}
\theta = \frac{1}{N}\sum_{j=1}^N \bar s_j, \\
p_i\left(\bar s_i,\theta\right)=0.
\end{dcases}
\end{equation*}
We aim to analyze the zero set of $p_i$ under the restrictions $\bar s_i\in[0,1]$ and $\theta\in(0,1]$, corresponding to the requirement that
\begin{equation*}
\frac{1}{N}\sum_{j=1}^N \bar s_j\in(0,1].
\end{equation*}
Note that we exclude the trivial solution $\bar s=(0,\dots,0)$ since it has already been shown to be unstable.
For a fixed value of $\theta$, equation \eqref{eqcondition1} is quadratic with respect to $\bar s_i$, meaning it admits exactly two roots. Evaluating the polynomial at the boundaries of the domain $[0,1]$ yields
\begin{equation} \label{posneg1}
p_i(0,\theta)=  (c_{i,1}-1)\theta < 0, \quad p_i(1,\theta)=1 - \theta \geq 0.
\end{equation}
By the Intermediate Value Theorem, there exists at least one solution $\bar s_i\in(0,1]$ satisfying $p_i(\bar s_i,\theta)=0$ for any fixed $\theta\in(0,1]$. Furthermore, if $\bar s^{(1)}_i$ and $\bar s^{(2)}_i$ denote the two roots of $p_i(\bar s_i,\theta)=0$, Vieta's formulas imply that
\begin{equation*}
\bar s^{(1)}_i \bar s^{(2)}_i=(c_{i,1}-1) \theta < 0,
\end{equation*}
which indicates that the roots must have opposite signs for $\theta\in(0,1]$. This guarantees that there is exactly one unique solution $\bar s_i\in(0,1]$ satisfying $p_i(\bar s_i,\theta)=0$ given $\theta\in(0,1]$ and $c_{i,1}\in(0,1)$.
Having established the existence of a unique positive solution in $(0,1]$, we can further restrict its range. Indeed, evaluating the polynomial at $\bar s_i = \theta$ gives
\begin{equation}
p_i(\theta,\theta)=\theta(\theta  -1)(1- c_{i,1})\leq0, \label{mcondition1}
\end{equation}
where equality holds if and only if $\theta=1$. This inequality implies that the unique positive solution to $p_i(\bar s_i,\theta)=0$ for a fixed $\theta$ must satisfy $\bar s_i\in [\theta,1]$.
Explicitly solving for this root yields
\begin{equation*}
f_i(\theta)= \frac{1}{2} \left(c_{i,1} \theta+\sqrt{ c_{i,1}^2 \theta^2+4\theta(1-c_{i,1})}\right).
\end{equation*} By construction, the function $f_i(\theta)$ satisfies
\begin{equation*}
p_i\Bigl(f_i(\theta),\theta\Bigr)=0, \quad \text{and} \quad f_i(\theta)\geq \theta \quad \text{for } \theta\in(0,1],
\end{equation*}
where equality holds exclusively at the upper boundary, namely,
\begin{equation} \label{graterm1}
f_i(\theta)=\theta \implies \theta=1.
\end{equation}
To determine the full equilibrium state, we impose the  consistency condition:
\begin{equation*}
\bar s_i = f_i(\theta) \implies \theta = \frac{1}{N}\sum_{j=1}^N \bar s_j = \frac{1}{N}\sum_{j=1}^N f_j(\theta) =: F(\theta).
\end{equation*}
Thus, the problem reduces to finding the fixed points of the mapping
\begin{equation*}
\theta=F(\theta), \quad \theta\in(0,1].
\end{equation*}
By definition, at the upper boundary we find
\begin{equation*}
F(1)=\frac{1}{N}\sum_{j=1}^N f_j(1)=\frac{1}{N}\sum_{j=1}^N 1 = 1.
\end{equation*}
As expected, $\theta=1$ is a fixed point of $F$, corresponding to the equilibrium state $\bar s_i=1$ for all $i$. To prove that no other solutions exist in the open interval $\theta\in(0,1)$, we utilize the strict inequality
\begin{equation*}
f_i(\theta) > \theta \quad \text{for } \theta\in(0,1),
\end{equation*}
which follows directly from \eqref{mcondition1} and \eqref{graterm1}. Consequently, for any $\theta \in (0,1)$, we have
\begin{equation*}
F(\theta)=\frac{1}{N}\sum_{j=1}^N f_j(\theta) > \frac{1}{N}\sum_{j=1}^N \theta = \theta,
\end{equation*}
which strictly rules out the existence of any other fixed points in $(0,1)$.
This concludes the proof that $\bar s^*=(1,\dots,1)$ is the unique, locally asymptotically semi-stable fixed point for the average maturity dynamics in \eqref{aymardode} within the set $[0,1]^N$.
\end{proof}
\end{theorem}

Several results presented here, as we would like to point out, have counterparts in the $1\text{D}$ model discussed in the appendix of \cite{dimred}. This leaves open the question of which aspects of follicle maturation are preserved or lost when integrating over maturity (as in \citep{dimred}) rather than over age (as in our current setting). We leave a systematic comparison of similarities and differences to future work.

\paragraph{Linear versus quadratic maturation functions}\label{linvsquad}

Here, we explicitly show the effect of the different maturation functions on the support of the solution. While this distinction is unclear from the snapshots in Figure~\ref{fig:aymardALL}, in this section we analytically demonstrate the main difference. 

To simplify the discussion, we again assume a system of two follicles characterized by the parameters in Figure~\ref{fig:celldyn} and identical initial conditions:
\begin{equation}\label{initlinquad}
\phi_1(0,m) = \phi_2(0,m) = \phi(0,m) = \varphi \chi_{[a,b]},
\end{equation}
where $\varphi > 0$ and $\chi_{[a,b]}$ denotes the indicator function of the interval $[a,b] \subset (0,1)$. 
With this choice, we have
\begin{equation} \label{initsimp}
\begin{split}
x_1(0) &= x_2(0) = x_0 = \varphi (b-a), \\
s_1(0) &= s_2(0) = s_0 = \frac{\varphi}{2} (b^2-a^2), \\
\bar s_1(0) &= \bar s_2(0) = \bar s_0 = \frac{1}{2}(b+a).
\end{split}
\end{equation}
With this setting, we establish the following lemma.

\begin{lemma}\label{lem:unique_maturity}
If $c_{i,1} = c_{j,1} = c$ and $\bar s_i(0) = \bar s_j(0)$, then
\begin{equation*}
\bar s_i(t) = \bar s_j(t) = \bar s(t), \quad \forall t \geq 0,
\end{equation*}
for every pair $\bar s_i(t)$ and $\bar s_j(t)$ solving \eqref{avmaturityaymard}.

\begin{proof}
Starting from \eqref{avmaturityaymard}, we have the system
\begin{equation*}
\begin{dcases}
\dot{\bar s}_i = -\bar s_i^2 + (c\bar s_i - c + 1) \frac{1}{N}\sum_{n=1}^N \bar s_n, \\
\dot{\bar s}_j = -\bar s_j^2 + (c\bar s_j - c + 1) \frac{1}{N}\sum_{n=1}^N \bar s_n.
\end{dcases}
\end{equation*}
Therefore, the variable
\begin{equation*}
\tilde s_{ij} = \frac{\bar s_i}{\bar s_j}
\end{equation*}
satisfies the differential equation
\begin{equation} \label{ratio_dynamics}
\dot{\tilde s}_{ij} = -\frac{(\tilde s_{ij}-1)}{\bar s_j} \left[ \bar s_j^2 \tilde s_{ij} + (1-c)\frac{1}{N}\sum_{n=1}^N \bar s_n \right],
\end{equation}
which possesses an equilibrium point at $\tilde s_{ij} = 1$. Since the initial conditions satisfy $\tilde s_{ij}(0) = 1$, uniqueness of solutions implies that $\tilde s_{ij}(t) = 1$, and consequently, $\bar s_i(t) = \bar s_j(t)$ for all $t \geq 0$.
\end{proof}
\end{lemma}

Using Lemma \eqref{lem:unique_maturity}, we can easily analyze the different maturation functions. In particular, under the initial conditions \eqref{initsimp}, the functions in \eqref{aymardmatlin} and \eqref{aymardmatquadratic} reduce to \eqref{linearg} and \eqref{quadraticg}, respectively:
\begin{subequations}
\begin{align}
g_i^{(\mathrm{lin})}\left(m,x,\bar s\right) &= (1-c)\bar s + m c \bar s - m\bar s, \label{linearg} \\
g_i^{(\mathrm{quad})}\left(m,x,\bar s\right) &= (1-c)\bar s + m c \bar s - m^2. \label{quadraticg}
\end{align}
\end{subequations}
The average stage admits an exact solution since in both cases we have
\begin{equation} \label{solutions}
\dot{\bar s} = (1-c)\bar s\left(1-\bar s\right) \implies \bar s(t) = \frac{\bar s_0 e^t}{\bar s_0 e^t - (\bar s_0-1) e^{c t}}.
\end{equation}
Substituting the solution \eqref{solutions} into \eqref{linearg} and solving the characteristic equation yields
\begin{equation}\label{charsolution}
\dot{m}_a^{(i)}(t) = g_i^{(\mathrm{lin})}\left(m_a^{(i)},x,\bar s\right) \implies m_a^{(i)}(t) = \frac{e^{c t} (\bar s_0-a) - \bar s_0 e^t}{(\bar s_0-1) e^{c t} - \bar s_0 e^t}.
\end{equation}

In contrast, no closed-form analytical solution exists when \eqref{solutions} is substituted into \eqref{quadraticg}, due to the nonlinearity of the equation with respect to $m$. To solve this equation, we rely on the numerical ODE solver in Mathematica.
\begin{figure}
\centering
\includegraphics[width=.45\textwidth]{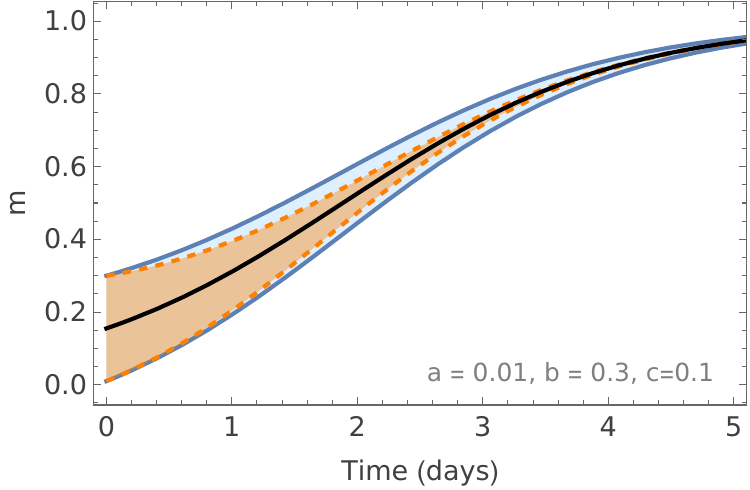}
\quad
\includegraphics[width=.45\textwidth]{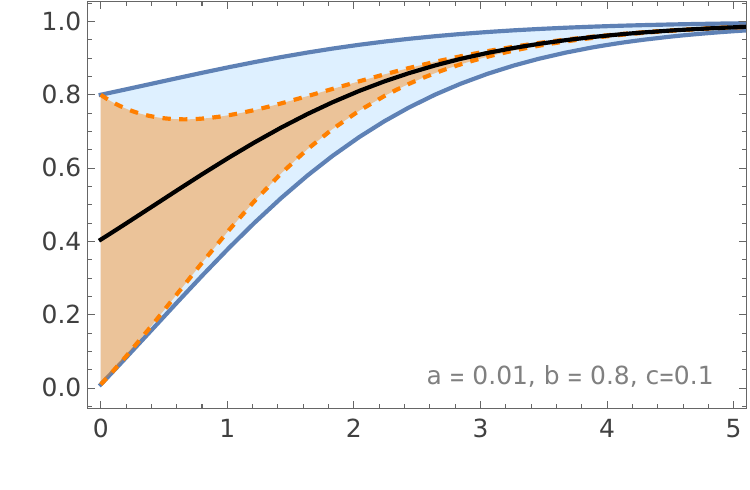}
\caption{Evolution of the support for $\phi(t,m)$ under the initial conditions \eqref{initlinquad}, with varying parameters $a$ and $b$ but an identical value of $c=0.1$. The blue area corresponds to the evolution associated with the linear maturation function, whereas the orange area corresponds to the evolution associated with the quadratic maturation function. The black curve represents $\bar s(t)$ in \eqref{solutions}.}
\label{fig:linvsinit}
\end{figure}

The behavior of the support of $\phi(t,m)$ from \eqref{initlinquad} is plotted in Figure~\ref{fig:linvsinit}. The solid blue lines correspond to the solutions of
\begin{equation}\label{linearchar}
\dot{m}_{a/b}^{(i)}(t) = g_i^{(\mathrm{lin})}\left(m_{a/b}^{(i)},x,\bar s\right) \quad \text{(bottom/top boundaries)},
\end{equation}
while the dashed orange curves correspond to
\begin{equation}\label{quadchar}
\dot{m}_{a/b}^{(i)}(t) = g_i^{(\mathrm{quad})}\left(m_{a/b}^{(i)},x,\bar s\right) \quad \text{(bottom/top boundaries)}.
\end{equation}

In general, we can observe a faster convergence to the average value $\bar s(t)$ in the case of the quadratic maturation function. Note that this faster shrinkage is associated with a higher peak in the PDE solution to preserve the values of $x_i$ and $s_i$ (see Fig.~\ref{fig:aymardALL}).\footnote{Recall that the linear model and the quadratic model with a modified $\Gamma_{i,j}$ as in \eqref{gammas} lead to the same mass-maturity dynamics \eqref{aymardode}.} However, for a small initial support, this difference is somewhat minimal, supporting the idea that the linear (first-order perturbative) approximation of the maturation function introduces a negligible error during the evolution. In particular, if the support is sufficiently small, we can approximate $m^{(i)}_{a/b} = \bar s \pm \varepsilon(t)$, with a positive $\varepsilon(t) \ll 1$, which implies 
\begin{equation*}
\begin{split}
g_i^{(\mathrm{quad})}\left(\bar s \pm \varepsilon, x, \bar s\right) &= (1-c)\bar s + (\bar s \pm \varepsilon) c \bar s - (\bar s \pm \varepsilon)^2 \\
&= (1-c)\bar s + (\bar s \pm \varepsilon) c \bar s - (\bar s^2 \pm 2\varepsilon \bar s) + \mathcal{O}(\varepsilon^2) \\
&= g_i^{(\mathrm{lin})}\left(\bar s \pm \varepsilon, x, \bar s\right) \mp \varepsilon \bar s + \mathcal{O}(\varepsilon^2).
\end{split}
\end{equation*}
Neglecting higher-order terms in $\varepsilon$, the quadratic characteristic equation \eqref{quadchar} can be approximated by the linear one \eqref{linearchar}.

On the other hand, for a large support, the difference becomes evident. In the quadratic case, cells that are initialized with a sufficiently high maturation level (with respect to the average maturity $\bar s_i(0)$) are more susceptible to death. Analytically, this is seen from the fact that for $b = m_{b}^{(i)}(0) = 0.8$ and $\bar s_0$ given as in \eqref{initsimp}, we have
\begin{equation*}
\dot{m}_{b}^{(i)}(t)\Big\vert_{t=0} = g_i^{(\mathrm{quad})}\left(m_{b}^{(i)},x,\bar s\right)\Big\vert_{t=0} = (1-c)\bar s_0 + b c \bar s_0 - b^2 \approx -0.2 < 0.
\end{equation*}

Finally, from the right panel of Figure~\ref{fig:linvsinit}, we can observe the stronger asymmetric evolution of the distribution discussed %in the main text
in Figure~\ref{fig:aymardALL} (Right). Indeed, during the late stages of follicular development, the average maturity $\bar{s}(t)$ approaches the upper boundary $m_b^{(i)}(t)$ faster than the lower boundary $m_a^{(i)}(t)$. This indicates that the statistical average of the distribution shifts toward the upper end of its support rather than remaining centered. On the other hand, for the linear maturation function, the support is transported according to the solution of the characteristic equation \eqref{charsolution},
\begin{equation*}
\mathrm{supp}~\phi(0,\cdot)=[a,b]\mapsto \mathrm{supp}~\phi(t,\cdot)
= \left[
\frac{e^{c t} (\bar s_0-a)-\bar s_0 e^t}
{(\bar s_0-1) e^{c t}-\bar s_0 e^t},
\frac{e^{c t} (\bar s_0-b)-\bar s_0 e^t}
{(\bar s_0-1) e^{c t}-\bar s_0 e^t}
\right]=:[a(t),b(t)]
\end{equation*}
from which we find that the average maturity $\bar{s}(t)$ in \eqref{solutions} coincides with the midpoint of the support,
\begin{equation*}
\frac{a(t)+b(t)}{2}=\bar{s}(t).
\end{equation*}

\subsection{Proofs related to the stage-dependent maturation function}\label{math_lange}

Here, we analyze the coupled equations \eqref{langebas} in Section~\ref{langeconn}. 
Since the mass equation is independent of $s_i$, we can use the results obtained by \cite{Lange2018-xf}. In particular, we recall the following result
\begin{theorem} \label{langethm}
Let $0<\eta<1\leq\nu$ and $n\geq2$. Then, any solution of the mass equation \eqref{langebas-m} satisfying  initial conditions $\xi>x_1(0)>\cdots > x_n(0)$, is continuous and converges to a stable fixed point, 
\begin{equation*}
x_i(t)\to x_i^*\in[0,\xi], ~~ \text{as}~~ t\to+\infty.
\end{equation*}
\end{theorem}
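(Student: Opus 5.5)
We plan to analyze the mass equation of \eqref{langebas} as the competitive Lotka–Volterra-type system
\begin{equation*}
\dot x_i = x_i(\xi-x_i)\,G_i(x),\qquad G_i(x)=\gamma-\kappa\Bigl(\eta x_i^\nu+\textstyle\sum_{j\neq i}x_j^\nu\Bigr),
\end{equation*}
following the strategy of \cite{Lange2018-xf}. \textbf{Step 1 (well-posedness and invariance).} The right-hand side is locally Lipschitz on $[0,\xi]^n$ (as $\nu\ge1$), so Picard--Lindel\"of gives local solutions. The faces $x_i=0$ and $x_i=\xi$ are invariant, since the factor $x_i(\xi-x_i)$ vanishes there. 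Hence, by uniqueness, $x_i(t)\in(0,\xi)$ whenever $x_i(0)\in(0,\xi)$. Boundedness on this compact box then yields global existence and continuity for all $t\ge0$.

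\textbf{Step 2 (ordering is preserved).} For $i<k$ we compare $x_i$ and $x_k$. Since $G_i-G_k=\kappa(1-\eta)(x_k^\nu-x_i^\nu)$, the only coupling is through a common term. Writing $\dot x_i-\dot x_k$ as a continuous multiple of $(x_i-x_k)$ along the solution shows that $x_i-x_k$ cannot change sign; equivalently, $x_i=x_k$ is an invariant hyperplane. So the strict ordering $\xi>x_1(t)>\dots>x_n(t)>0$ persists for all times (the ``no-crossing'' property).

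\textbf{Step 3 (eventual monotonicity and convergence).} This is the main obstacle, and we plan to handle it by a sign argument on $G_i$. Because $G_i$ is strictly decreasing in each $x_j$, and since $0<\eta<1$ and the $x$'s are ordered, we have $G_1(x)\ge G_2(x)\ge\dots\ge G_n(x)$ at every time. Each $\dot x_i$ has the sign of $G_i$.

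We first aim to show that each $G_i(x(t))$ changes sign at most finitely often, ideally at most once from $+$ to $-$. If the signs were not eventually constant, some $x_i$ would oscillate. We would rule this out in one of two ways:
\begin{enumerate}
\item using that $\sum_j x_j^\nu$ enters every $G_i$ identically, and analyzing $\frac{d}{dt}G_i$ at a zero of $G_i$ to show the crossing is transversal in one direction only;
\item as a fallback, invoking Hirsch's theorem on competitive systems, whose off-diagonal Jacobian entries $-\kappa\nu x_i(\xi-x_i)x_j^{\nu-1}$ are nonpositive, combined with the absence of periodic orbits.
\end{enumerate}
Once each $\dot x_i$ has an eventually constant sign, every $x_i$ is eventually monotone and bounded in $[0,\xi]$, so $x_i(t)\to x_i^*\in[0,\xi]$.

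\textbf{Step 4 (the limit is a stable fixed point).} By continuity, $x^*$ is an equilibrium, so each component satisfies $x_i^*\in\{0,\xi\}$ or $G_i(x^*)=0$. We would then classify equilibria via the Jacobian, which is triangular-like by the ordering, and show the following:
\begin{itemize}
\item interior equilibria with two or more coordinates on the level set $G_i=0$ are saddles, because of $\eta<1$;
\item the limit reached from a positive ordered initial state must be linearly stable, since unstable ones attract only lower-dimensional sets, which we would exclude using the transversality established in Step 3.
\end{itemize}
The result $d=[\gamma/\kappa/\xi^\nu+1-\eta]$ of \cite{Lange2018-xf} identifies which coordinates tend to $\xi$ and which tend to $0$.
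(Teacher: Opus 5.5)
The paper does not prove this statement; it quotes it from \cite{Lange2018-xf}, so your reconstruction has to stand on its own. Steps 1 and 2 are sound. There is a sign slip: in fact $G_i-G_k=\kappa(1-\eta)(x_i^\nu-x_k^\nu)$, and this is what gives $G_1>G_2>\dots>G_n$. Indeed, all $x_i$ solve the same scalar equation $\dot y=y(\xi-y)\bigl(A(t)+\kappa(1-\eta)y^\nu\bigr)$, where $A=\gamma-\kappa S$ and $S=\sum_j x_j^\nu$.

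\textbf{The gap is Step 3: convergence is never obtained.}

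\emph{Option 1 fails.} At a zero of $G_i$ you have $\dot x_i=0$, so $\dot G_i=-\kappa\dot S=-\kappa\nu\sum_j x_j^{\nu}(\xi-x_j)G_j$. The summands are positive for $j<i$ and negative for $j>i$, so the sign is undetermined. Neither a one-directional crossing nor a bound on the number of sign changes follows.

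\emph{Option 2 does not close the gap either.} Hirsch's theorem only says that limit sets of a competitive system behave like those of an $(n-1)$-dimensional flow. For $n\ge4$ such flows can be chaotic (Smale's embedding). For $n=3$ you would still have to exclude periodic orbits \emph{and} heteroclinic cycles, and you do neither.

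\textbf{The missing idea is that the system is gradient-like.} In the variables $u_i=x_i^\nu$ it is a Lotka--Volterra-type system with a \emph{symmetric} interaction matrix, so a MacArthur-type Lyapunov function exists. Define
\[
W(x)=\gamma S-\frac{\kappa}{2}S^2+\frac{\kappa(1-\eta)}{2}\sum_j x_j^{2\nu}.
\]
Then $\partial_{x_i}W=\nu x_i^{\nu-1}G_i$, hence $\dot W=\nu\sum_i x_i^\nu(\xi-x_i)G_i^2\ge0$, with equality exactly at equilibria.

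The equilibria are finitely many. All coordinates with $x_i\in(0,\xi)$ and $G_i=0$ share a single value $y$, fixed by $(c-1+\eta)y^\nu=\gamma/\kappa-a\xi^\nu$, where $c$ is the number of such coordinates and $a$ the number at $\xi$. LaSalle's principle on the invariant box, together with connectedness of the $\omega$-limit set, then gives $x(t)\to x^*$.

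\textbf{Step 4 also needs repair.} ``Unstable equilibria attract only lower-dimensional sets'' is a measure-zero statement and cannot cover \emph{every} ordered solution. The transversality you appeal to was never established. Argue directly instead:
\begin{itemize}
\item $x_i^*=0$ with $G_i(x^*)>0$, or $x_i^*=\xi$ with $G_i(x^*)<0$, is impossible. Eventually the sign of $\dot x_i$ pushes $x_i$ away from that value.
\item Two coordinates tending to a common interior level $y$ is impossible. Here $\frac{d}{dt}(x_i-x_k)=c(t)(x_i-x_k)$ with $c(t)\to\kappa(1-\eta)\nu y^\nu(\xi-y)>0$, so the nonzero gap from Step 2 would grow.
\end{itemize}
Every remaining candidate has a block-triangular Jacobian, because rows of boundary coordinates are diagonal. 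Its eigenvalues are $\xi G_i<0$, $-\xi G_i<0$, and $-\kappa\eta\nu y^\nu(\xi-y)<0$, so the limit is linearly stable. The exception is non-generic parameters where $G_i(x^*)=0$ at a boundary coordinate.
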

\noindent
See Lange et al.~for the precise statement and the derivation of the stable fixed points. 
For the stage equation, we obtain

\begin{proposition}\label{propstage}
Assume parameters and initial conditions as in Theorem \ref{langethm}. Moreover, let $\alpha>0$ and $0<s_i(0)<x_i(0)$. Then:
\begin{enumerate}
\item $s_i(t)$ and $\bar s_i(t)$ are continuous;
\item $\bar s_i(t)\in (0,1]$ for all $t\geq0$;
\item the fixed point of the stage dynamics coincides with $s_i^*=x_i^*$, as in Theorem \ref{langethm}.
\end{enumerate}
\begin{proof}
Statements 1 and 2 follow from the time-dependent logistic growth of the average stage equation,
\begin{equation*}
\dot{\bar s}_i= \alpha x_i (1-\bar s_i) \bar s_i,
\end{equation*}
and the continuity and boundedness of $x_i$.

Statement 3 can be proved as follows. If $x_i \to x_i^*$ with $0 < x_i^* \leq \xi$, then the average stage equation implies that $\bar s_i \to 1$, which means $s_i \to x_i^*$. On the other hand, if $x_i \to 0$, it follows from the average stage equation and Statement 2 that $\bar s_i \to s_i^*$ for some constant $s_i^* \in (0,1]$. Furthermore, by definition,
\begin{equation*}
\bar s_i(t)=\frac{s_i(t)}{x_i(t)}.
\end{equation*}
Given the continuity and boundedness of $\bar s_i(t)$, if $x_i(t) \to 0$ and $\bar s_i(t) \to s_i^* \in (0,1]$, it must hold that $s_i(t) \to 0$ as well.
\end{proof}
\end{proposition}

%\bmhead
\section*{Acknowledgements}

This work was supported by the European Regional Development Fund (ERDF) within the funding programme 'Sachsen-Anhalt WISSENSCHAFT Forschung und Innovation (EFRE)' under project number ZS/2024/01/183368.

\bibliography{sn-bibliography}

\appendix

\section{Numerical methods} \label{num}

To ensure the manuscript remains self-contained, we provide supplemental material on the original SISYPHE model for follicular development.
Furthermore, we detail our numerical scheme, which is based on the finite-volume method and is employed in the main text.

\subsection{The SISYPHE model 
}\label{originalsisyphe}

In this section, we recall some details of the original model%
\footnote{The core components of the model were first introduced by \cite{ECHENIM200557}; see also the review by \cite{Monniaux2016-yt}.}
and its calibration as provided by \cite{aymard16}. 
The age-maturity structured population model is given by
\begin{equation}
\partial_t \tilde \phi_i
+ \partial_a(\tilde g_i \tilde \phi_i)
+ \partial_m(g_i \tilde \phi_i)
= \tilde \lambda_i \tilde \phi_i.
\label{eq:aymardeqn}
\end{equation}

The integration domain $(a,m)\in U\subset\mathbb R_+\times\mathbb R_+$ is divided into $2D+1$ rectangular sectors, where $D$ denotes the so-called cell cycle. The different transport and loss coefficients are of the form
\begin{equation}
\tilde g_i
= \tilde g_i\left(a,m,\sum_i s_i\right),
\qquad
g_i
= g_i\left(a,m,\sum_i s_i\right),
\qquad
\tilde\lambda_i
= \tilde\lambda_i\left(a,m,\sum_i s_i\right).
\end{equation}

At the boundaries of each sector, the aging function $\tilde g_i$ and the maturation function $g_i$ exhibit discontinuities, and the solutions $\tilde \phi_i$ satisfy different boundary conditions; for the precise definitions, we refer to \cite{aymard16}.

The sink term $\tilde\lambda_i$ is supported only in a thin strip $V=[m_-,m_+]$, where $0<m_-<m_+$. Following their choice, we assume $V=[0.45,0.55]$. The strip $V$ is referred to as the vulnerable zone and represents the maturation region in which cells undergo apoptosis. Moreover, their maturation function is given by
\begin{equation}
g_i\left(a,m,\sum_i s_i\right)
=\tau_i\left[-m^2+(c_{i,1}m+c_{i,2})\tilde\theta\left(\sum_i s_i\right)\right],
\qquad
\tilde\theta\left(\sum_i s_i\right)
=1-\exp\left[-\frac{u_i\left(\sum_i s_i\right)}{\bar u}\right],
\label{aymardcharfull}
\end{equation}
where $\tau_i$ sets the maturation time-scale of the $i$-th follicle and $\bar u$ is a global control parameter. The function $u_i$, which depends on $7$ parameters ($4$ parameters are related to the the global FSH level and $3$ are tuning the local FSH control), represents the locally bioavailable FSH level. In particular, the function $u_i$ is bounded by 
\begin{equation}
b_i U_m\leq u_i\left(\sum_i s_i\right)\leq U_M,
\label{boundlocalfsh}
\end{equation}
where $0<b_i<1$ is a follicle-dependent constant \citep{aymard16}, while $U_m$ and $U_M$ define the lower and upper bounds, respectively, for the FSH level.
The characteristic equation for the $m$-curves during the first cellular growth (typically identified with G1) and differentiation phase is given by
\begin{equation}
\begin{split}
\dot m^{(i)}_a&=\tilde g_i\left(a,m^{(i)}_a,\sum_i s_i\right)\\ &
=\tau_i\left\{-m^2+(c_{i,1}m+c_{i,2})\left[ 1- \exp\left(-\frac{u_i}{\bar u}\right)\right]\right\}.\label{aymardchar}
\end{split}
\end{equation}
At this point, one can choose $b_i$, $c_{i,1}$, and $c_{i,2}$ such that there exists a positive equilibrium point satisfying $\dot m^{(i)}_a=0$. As a function of the bioavailable FSH level $u_i$, the equilibrium of \eqref{aymardchar} is given by
\begin{equation} \label{equlibrium_maturity}
m^*= \frac{1}{2} \left\{c_{i,1}\left[1-\exp\left(-\frac{u_i}{\bar u}\right)\right]+\sqrt{c_{i,1}\left[1- \exp\left(-\frac{u_i}{\bar u}\right)\right]^2+4 c_{i,2}\left[1- \exp\left(-\frac{u_i}{\bar u}\right)\right]}\right\}.
\end{equation}
Using the parameters $p_{ov}$ and $p_{atr}$ reported in Table C.2 of the Supplemental Material of \cite{aymard16}, together with the bounds in \eqref{boundlocalfsh}, we find that the equilibria of the $m$-characteristics are bounded by
\begin{subequations}\label{eqmaturity}
\begin{align} 
&0.52\lesssim m^*_{ov}\lesssim0.62 \label{eqmaturity:ov},\\ &0.50\lesssim m^*_{atr}\lesssim0.5506, \label{eqmaturity:atr}
\end{align}
\end{subequations}
where the upper bounds correspond to the case of maximal locally bio-available FSH $u_i=U_M$.

From \eqref{equlibrium_maturity}, we observe how the choice of parameters $(c_{i,1},c_{i,2})$ determines the location at which the support of the solution $\tilde \phi_i$ contracts in the $m$-direction. For ovulatory parameter sets, this contraction may occur sufficiently far from the vulnerable zone $V$, see \eqref{eqmaturity:ov}. For atretic parameter sets, by contrast, it remains close to the upper boundary of $V$, even under maximal local FSH availability, see \eqref{eqmaturity:atr}. Consequently, the support of $\tilde{\phi}_i$ may persist within the vulnerable zone for a longer period of time.\footnote{Using the plotted values of $u_i$ in \cite{aymard16}, one can verify that $m^*{\mathrm{atr}}\in V$ for atretic follicles, whereas $m^*_{\mathrm{ov}}$ lies above $V$ during the final stages of development.} This suggests that, for follicles whose dynamics are governed by an atretic parameter set, reaching larger masses $x_i$ may be difficult unless their initial mass is sufficiently large.

As a consequence of the contraction of the support, the corresponding average stages tend to stagnate near their equilibrium values. Hence, the maximal average stage is determined by the upper bound in \eqref{equlibrium_maturity}, considered as a function of $u_i$, and depends on the parameter pair $(c_{i,1},c_{i,2})$.

To illustrate the behavior of the average stage, we numerically simulate two follicles using the model and numerical approach developed in \cite{aymard16}, together with the corresponding parameter sets reported in their Supplemental Material. The average stages of the simulation are shown in Figure~\ref{fig:aymardsoverx}. The two uppermost dashed lines indicate the upper bounds for the maturity levels given in \eqref{eqmaturity}. Both follicles are able to reach their respective maximal average maturity; in particular, the follicle with the atretic parameter set lies numerically close to the vulnerable zone. Since
\[
\tau_{\mathrm{atr}}=0.522 > 0.28=\tau_{\mathrm{ov}},
\]
the atretic follicle reaches its follicle-specific maximal stage first. 

\begin{figure}[t]
\centering
\includegraphics[scale=0.4]{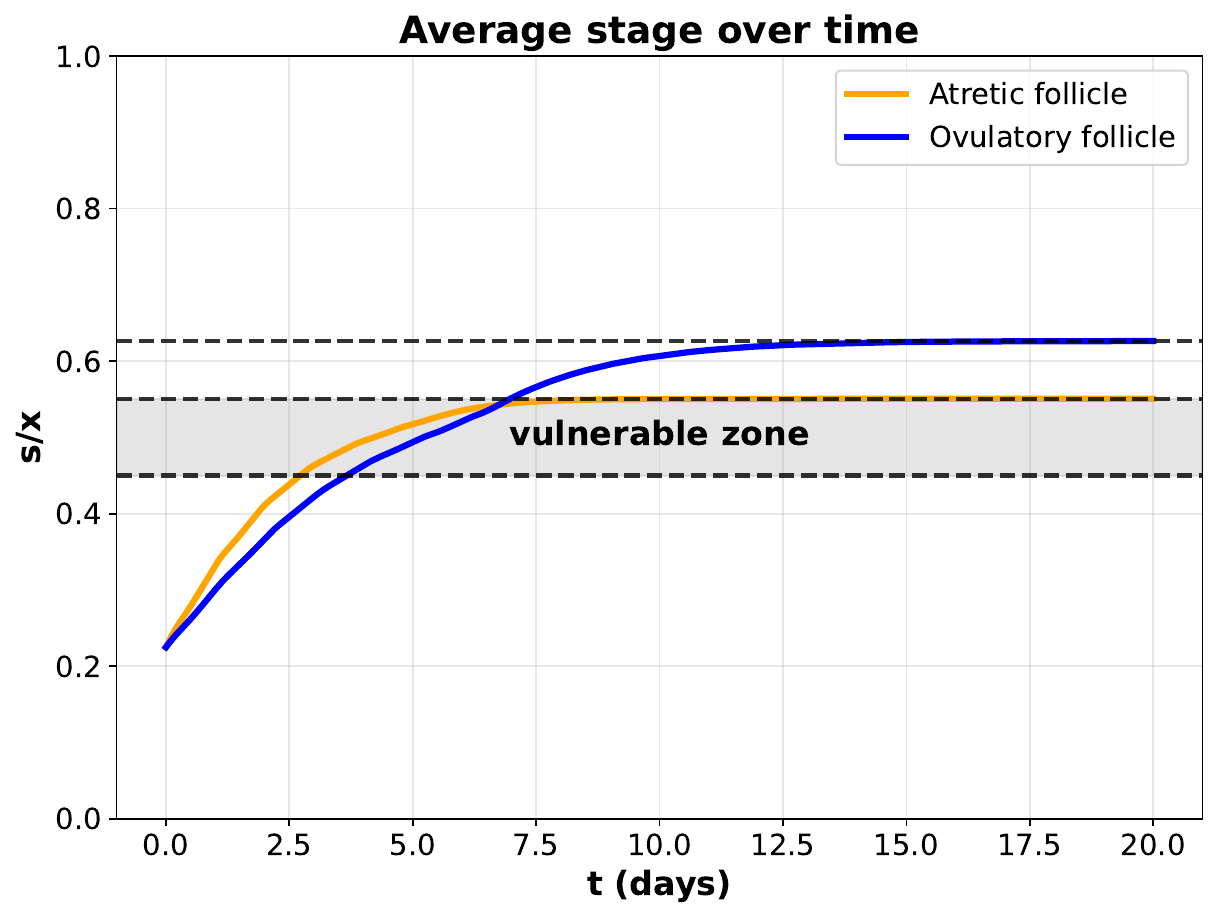}
\caption{Average maturity over time for one ovulatory and one atretic follicle as in ~\cite{aymard16}. Vulnerable zone is shown by the grey shaded area. Similar plots but with different parameters and asymptotics are obtained for the $1D$ model in \cite{dimred}. }
\label{fig:aymardsoverx}
\end{figure}

 We emphasize that, in the modeling framework of \cite{Monniaux2016-yt}, the ovulatory or atretic classification is determined by the follicle's macroscopic state at the time of ovulation $T_{\mathrm{ov}}$ (characterized by the overall follicular maturity). More precisely, the classification is based on whether the follicle's total mass $x_i$ and maturity $s_i$ fall within the empirically determined admissible range at the time $T_\mathrm{ov}$ defined by
\begin{equation}
M_{\mathrm{ov}}=\sum_i \int_0^{T_{\mathrm{ov}}} s_i(\tau)\ud \tau
\end{equation}
for a prescribed ovulatory threshold $M_{\mathrm{ov}}$. This numerical threshold (fixed at $M_{\mathrm{ov}}=15$ in \cite{aymard16}) can be interpreted as a qualitative measure of the cumulative maturation level, which can be coupled with the occurrence of the estradiol peak. As a consequence of this ovulatory threshold, although follicles are assigned ovulatory or atretic parameter sets, these parameters do not, by themselves, determine the follicle's fate.

Because total follicular maturity is not normalized in the SISYPHE model, the numerical value of the threshold $M_{\mathrm{ov}}$ triggering the ovulatory surge depends on the model parametrization and has no direct quantitative interpretation as a measure of follicular (biological) maturity. Accordingly, this threshold should be interpreted primarily as a qualitative descriptor of follicular behavior during ovulation rather than as a quantitatively calibrated measure of cellular and follicular maturity.

With this in mind, the biological interpretation of the auxiliary, non-normalized local maturity variable $m$, and consequently of the average maturity variable $\bar s_i$, is less direct. In particular, a follicle may be classified as ovulatory even when its average maturity is $\bar{s}_i=0.62$ or $\bar{s}_i=0.5506$, i.e. only about $12\%$ and $0.1\%$, resp.,  above the $0.55$ maturity threshold (upper boundary of the vulnerable zone). These values alone should not be interpreted as measures of the (local) biological maturity of the follicle. Rather, the ovulatory classification is determined by the follicle's macroscopic state, in particular its total mass $x_i$ and total maturity $s_i$, together with the endocrine environment experienced during follicular development and the time spent in the vulnerable zone. Therefore, an ovulatory follicle may have the cellular distribution $\tilde{\phi}_i$ partially supported within, or slightly above, the vulnerable zone. This follows from the fact that its average maturity,
\begin{equation}
\bar{s}_i \equiv \frac{s_i}{x_i},
\end{equation}
may remain relatively close to the vulnerable zone, for example $\bar{s}_i\approx 0.55$, while its total mass and total maturity still satisfy the admissibility conditions for ovulation. Note that the negative effect of the vulnerable zone on the follicular mass, depends also on the shape of the sink term $\tilde \lambda_i$ as function of $m$.  Consequently, the model could allow follicles with relatively low average cellular maturity to be classified as ovulatory and to proceed to ovulation. This reflects the fact that the follicular classification should depend not only on the local maturity level, but also on follicular mass, the time spent by the granulosa cells in the vulnerable zone, and the evolving endocrine environment. The distinction between ovulatory and atretic follicles therefore also highlights the importance of an appropriate calibration of the model parameters to ensure consistency between the mathematical classification of follicles and the intended biological interpretation.

At this point, we identify the smallness of the parameter $\bar u$ as the main obstruction to observing a clear competitive regime in the calibrated models \citep{aymard16,dimred}. Indeed, since $\bar u\ll U_M$ and, for most follicles, also $\bar u\ll u_i\left(\sum_i s_i\right)$ over the biologically relevant timescale \citep{aymard16}, we obtain
\begin{equation}
\tilde \theta\left(\sum_i s_i \right)=1-O\left(e^{-\frac{1}{\bar u}}\right).
\end{equation}
Thus, up to an exponentially small correction, the ``competitive'' term is effectively constant, and its dependence on the follicle states is negligible.

Conversely, if $u_i\left(\sum_i s_i\right)$ becomes comparable to  $\bar u$, the fate of a follicle with ovulatory parameters can be substantially influenced by competition through its dependence on \(u_i\). In this regime, the equilibrium of the $m$-characteristics may enter the vulnerable zone, as illustrated, e.g., in \eqref{eqmaturity:ov}. This provides a possible mechanism through which competition can alter the developmental fate of a follicle. Indeed, it is possible that increasing the initial follicular masses alone can shift a follicle from a declining-mass trajectory to an ovulatory one (``rescue mechanism''), as shown by the green dashed line in Fig. 6 of \citet{aymard16}.

Given the complexity of the computational domains and boundary conditions, solutions to \eqref{eq:aymardeqn} are obtained using the finite volume method combined with specialized limiter techniques developed in \cite{aymard16}.  In contrast, the models fitting in the framework \eqref{pdesystem} involve a simpler computational domain and boundary conditions. In the following sections, we adapt the numerical approach of \cite{aymard16} to this simplified setting.

\subsection{Finite‑volume discretization of the PSPM} \label{finite-volume}

In this section, we provide a concise description of the finite-volume scheme used to solve the physiologically structured population model (PSPM) underlying all simulations in this study. The corresponding C++ implementation is available in the source code repository \cite{github_follicular-dynamics}.

For the \(i\)-th follicle, where \(i=1,2,\dots,N_f\), the dynamics of
\(\phi_i\) are described by the following initial-boundary value problem
on the domain \([0,T_{\text{final}}]\times[0,1]\):
\begin{equation}
\begin{dcases}
\frac{\partial \phi_i}{\partial t}
+ \frac{\partial (g_i \phi_i)}{\partial m}
= (p_i-\lambda_i)\phi_i,
\qquad (t,m)\in[0,T_{\text{final}}]\times[0,1],
\label{eq:continuum}
\\
\phi_i(0,m)=\phi_i^{\text{initial}}(m),
\qquad m\in[0,1],
\\
\phi_i(t,0)=\phi_i(t,1)=0,
\qquad t\in[0,T_{\text{final}}].
\end{dcases}
\end{equation}

We assume that all functions are sufficiently smooth; specifically,
\[
g_i,p_i,\lambda_i\in C^{1}\bigl([0,T_{\text{final}}]\times[0,1]\times\Omega^{2N}\bigr),
\]
where $\Omega$ is as in Theorem~\ref{odethm}. We further assume that the solution of \eqref{eq:continuum} exists and is unique \citep{Perthame2007}.
 The initial distribution satisfies 
\(\operatorname{supp}\bigl(\phi_i^{\text{initial}}\bigr)\subset(0,1)\)
. In all simulations we choose the final horizon 
\(T_{\text{final}}\)
 such that the numerical support of 
\(\phi_i(t,\cdot)\)
 remains strictly inside the interval \((0,1)\) for every 
\(t\in[0,T_{\text{final}}]\). 

\begin{figure}
\centering
\begin{tikzpicture}[scale=1.0,decoration={brace,amplitude=6pt}]
% Draw axis
\draw[<->] (-2,0) -- (11,0) node[right] {$m$};
% Cell width
\def\dx{3}
% Cell interfaces (including boundaries)
% Left boundary: m_{1/2} = 0
%\draw (-3.0,0.2) -- (-3.0,-0.2);
% Interior interfaces
\draw (0*\dx,0.2) -- (0*\dx,-0.2);
\draw (1*\dx,0.2) -- (1*\dx,-0.2);
\draw (2*\dx,0.2) -- (2*\dx,-0.2);
\draw (3*\dx,0.2) -- (3*\dx,-0.2);
% Right boundary: m_{N_m+1/2} = 1 (schematic, not to scale)
% Labels for interfaces
\node[below] at (1*\dx, -0.3) {$m_{j-\frac12}$};
\node[below] at (2*\dx, -0.3) {$m_{j+\frac12}$};
% Cell centers (red dots)
\draw[red,fill=red!30] (0.5*\dx,0) circle (.5ex);
\draw[red,fill=red!30] (1.5*\dx,0) circle (.5ex);
\draw[red,fill=red!30] (2.5*\dx,0) circle (.5ex);
% Cell center labels
\node[below] at (0.5*\dx, -0.3) {$m_{j-1}$};
\node[below] at (1.5*\dx, -0.3) {$m_j$};
\node[below] at (2.5*\dx, -0.3) {$m_{j+1}$};
% Cell-centered values
\node at (0.5*\dx,0.8) {$\phi_{i,j-1}^n$};
\node at (1.5*\dx,0.8) {$\phi_{i,j}^n$};
\node at (2.5*\dx,0.8) {$\phi_{i,j+1}^n$};
% Flux labels at interfaces
\node (leftflux) at (1*\dx, 2.0) {$F_{i,j-\frac12}^n$};
\node (rightflux) at (2*\dx, 2.0) {$F_{i,j+\frac12}^n$};
% Arrows to interfaces
\draw[-{Latex[width=3mm]}] (leftflux.south) -- (1*\dx, 0.25);
\draw[-{Latex[width=3mm]}] (rightflux.south) -- (2*\dx, 0.25);
% Downward-facing brace for the central cell
\draw[decorate,decoration={brace,mirror,amplitude=6pt}]
(0*\dx,-1.2) -- (1*\dx-0.05,-1.2)
node[midway,below=8pt] {cell $j-1$};
\draw[decorate,decoration={brace,mirror,amplitude=6pt}]
(1*\dx+0.05,-1.2) -- (2*\dx-0.05,-1.2)
node[midway,below=8pt] {cell $j$};
\draw[decorate,decoration={brace,mirror,amplitude=6pt}]
(2*\dx+0.05,-1.2) -- (3*\dx,-1.2)
node[midway,below=8pt] {cell $j+1$};
\end{tikzpicture}
\caption{Cell-centred discretization of the maturity interval. The red dots denote the cell centres \(m_j\), and the small black vertical lines indicate the cell interfaces \(m_{j\pm\frac12}\). The black arrows indicate the numerical fluxes \(F_{i,j\pm\frac12}^{\,n}\) crossing the corresponding interfaces. The left and right physical boundaries, which are not shown, correspond to \(m_{1/2}=0\) and \(m_{N_m+1/2}=1\).}
\label{fig:cell-centered-grid}
\end{figure}

 The finite‑volume discretization follows the classic approach described in
\cite{LeVeque2022} and \cite{Toro2009}.
We partition the maturity interval \([0,1]\) into \(N_m\) uniform control volumes of width 
\(\Delta m = 1/N_m\). The cell interfaces are

\begin{equation*}
m_{j-\frac12} := (j-1)\Delta m,
\qquad
m_{j+\frac12} := j\Delta m,
\qquad \text{for } j=1,2,\dots,N_m.
\end{equation*}
The centre of each control volume is 
\begin{equation*}
m_j := \frac{m_{j-\frac12}+m_{j+\frac12}}{2}
= \left(j-\frac12\right)\Delta m,
\qquad \text{for } j=1,2,\dots,N_m.
\end{equation*}

For every follicle \(i\), we denote by \(\phi_{i,j}^{\,n}\) the
cell-average approximation of \(\phi_i\) over the \(j\)-th control
volume at time \(t^n\), associated with the cell centre \(m_j\).
The discrete transport velocity is evaluated at the cell interfaces, and the growth and decay rates are computed from the complete discrete population state defined in \eqref{eq:complete-discrete-population}. Thus,
we write
\[
\phi_{i,j}^{\,n}\approx\phi_i(t^n,m_j),
\qquad
g_{i,j+\frac12}^{\,n}
\approx
g_i\!\bigl(\phi_i^{\,n},m_{j+\frac12}\bigr),
\qquad
p_{i}^{\,n}\approx
p_i\!\bigl(\boldsymbol{\phi}^{\,n}\bigr),
\qquad
\lambda_{i}^{\,n}\approx
\lambda_i\!\bigl(\boldsymbol{\phi}^{\,n}\bigr),
\]
where
\begin{equation}
\boldsymbol{\phi}^{\,n} =\left\{ \phi_{i,j}^{\,n} \right\}_{i=1,\ldots,N_f;\,j=1,\ldots,N_m}
\label{eq:complete-discrete-population}
\end{equation}
denotes the complete discrete population state.

The temporal grid is defined recursively by
\begin{equation*}
t^{0}=0,
\qquad
t^{n+1}=t^n+\Delta t^{\,n},
\qquad
n=0,1,\ldots,N_t-1,
\label{eq:time-grid}
\end{equation*}
with $t^{N_t}=T_{\text{final}}$. The time steps \(\Delta t^{\,n}\) may vary in time according to the adaptive time-step criterion described below.
The time step is chosen adaptively according to
\begin{equation}
\Delta t^{\,n}
=
\min\left\{
\mathrm{CFL}\,
\frac{\Delta m}
{\displaystyle\max_{\substack{1\leq i\leq N_f\\
1\leq j\leq N_m}}
\left|g_{i,j+\frac12}^{\,n}\right|+\varepsilon},
\;
\frac{1}
{2\displaystyle\max_{1\leq i\leq N_f} \left|p_i^{\,n}-\lambda_i^{\,n}\right| + \varepsilon}
\right\}.
\label{eq:time-step}
\end{equation}
The Courant--Friedrichs--Lewy number $\mathrm{CFL}\in[0,1]$ controls the
transport time-step restriction. In all simulations we use
$\mathrm{CFL}=0.8$. 

The cell average of the continuous density over the \(j\)-th control
volume is
\begin{equation}
\bar{\phi}_{i,j}(t)
:=
\frac{1}{\Delta m}
\int_{m_{j-\frac12}}^{m_{j+\frac12}}
\phi_i(t,m)\,\mathrm{d}m.
\label{eq:cell-average}
\end{equation}
The discrete quantity \(\phi_{i,j}(t)\) is interpreted as an
approximation of this cell average. For sufficiently smooth solutions,
the midpoint value satisfies
\[
\bar{\phi}_{i,j}(t)
=
\phi_i(t,m_j)+\mathcal O(\Delta m^2),
\]
so that cell-centre evaluations provide a second-order approximation
of the cell averages.

Integrating \eqref{eq:continuum} over the \(j\)-th control volume and approximating the cell average of $\phi_i$ by its discrete value $\phi_{i,j}$ gives the semi-discrete finite-volume system
\begin{equation}
\frac{\mathrm{d}\phi_{i,j}}{\mathrm{d}t}
=
-\frac{1}{\Delta m}
\left(
F_{i,j+\frac12}-F_{i,j-\frac12}
\right)
+
\bigl(p_i-\lambda_i\bigr)\phi_{i,j},
\qquad
i=1,\ldots,N_f,\quad j=1,\ldots,N_m.
\label{eq:fvm-semidiscrete}
\end{equation}
Here \(F_{i,j+\frac12}\) denotes the numerical approximation of the
transport flux \(g_i\phi_i\) through the interface \(m_{j+\frac12}\).
At internal interfaces, it is defined by the first-order upwind rule
\begin{equation}
F_{i,j+\frac12}
=
\begin{cases}
g_{i,j+\frac12}\,\phi_{i,j},
& g_{i,j+\frac12}\ge 0,\\[4pt]
g_{i,j+\frac12}\,\phi_{i,j+1},
& g_{i,j+\frac12}<0,
\end{cases}
\qquad
j=1,\ldots,N_m-1.
\label{eq:upwind-flux}
\end{equation}

The homogeneous boundary conditions
are imposed through the upwind numerical fluxes at the physical boundaries. Since the prescribed boundary density is zero, the boundary fluxes are
\begin{equation}
F_{i,\frac12}
=
\begin{cases}
g_{i,\frac12}\,\phi_{i,1},
& g_{i,\frac12}<0,\\[4pt]
0,
& g_{i,\frac12}\geq 0,
\end{cases}
\qquad
F_{i,N_m+\frac12}
=
\begin{cases}
0,
& g_{i,N_m+\frac12}\leq 0,\\[4pt]
g_{i,N_m+\frac12}\,\phi_{i,N_m},
& g_{i,N_m+\frac12}>0.
\end{cases}
\label{eq:boundary-fluxes}
\end{equation}
Thus, an incoming flux is determined by the prescribed boundary value
\(\phi_i=0\), whereas an outgoing flux is determined by the adjacent
interior cell value. This is the standard first-order upwind treatment
of the homogeneous Dirichlet boundary conditions.

The semi-discrete system \eqref{eq:fvm-semidiscrete} can be written compactly as
\begin{equation}
\frac{\mathrm{d}\boldsymbol{\phi}}{\mathrm{d}t}
=
\mathcal{L}(\boldsymbol{\phi}),
\label{eq:semi-discrete}
\end{equation}
where the right-hand-side operator is defined by
\begin{equation}
\mathcal{L}_{i,j}(\boldsymbol{\phi})
=
-\frac{1}{\Delta m}
\left(
F_{i,j+\frac12}(\phi_i)
-
F_{i,j-\frac12}(\phi_i)
\right)
+
\bigl(
p_i(\boldsymbol{\phi})
-
\lambda_i(\boldsymbol{\phi})
\bigr)
\phi_{i,j}.
\label{eq:fv-operator}
\end{equation}
The dependence of the growth and decay rates on
$\boldsymbol{\phi}$ accounts for the nonlinear coupling between the
follicle sub-populations.

The classical fourth-order Runge--Kutta (RK4) method is applied to
\eqref{eq:semi-discrete}. For each time step, the stage derivatives given by
\begin{subequations}
\begin{align}
\boldsymbol{k}_1
&=
\mathcal{L}\left(\boldsymbol{\phi}^{\,n}\right),
\label{eq:k1}
\\
\boldsymbol{k}_2
&=
\mathcal{L}\left(
\boldsymbol{\phi}^{\,n}
+ \frac{\Delta t^{\,n}}{2}\boldsymbol{k}_1
\right),
\label{eq:k2}
\\
\boldsymbol{k}_3
&=
\mathcal{L}\left(
\boldsymbol{\phi}^{\,n}
+ \frac{\Delta t^{\,n}}{2}\boldsymbol{k}_2
\right),
\label{eq:k3}
\\
\boldsymbol{k}_4
&=
\mathcal{L}\left(
\boldsymbol{\phi}^{\,n}
+ \Delta t^{\,n}\boldsymbol{k}_3
\right).
\label{eq:k4}
\end{align}
\end{subequations}
Thus, the finite-volume operator, including the numerical fluxes and
state-dependent coefficients, is evaluated at the corresponding RK4
stage state.

The solution is then advanced according to
\begin{equation}
\boldsymbol{\phi}^{\,n+1}
=
\boldsymbol{\phi}^{\,n}
+
\frac{\Delta t^{\,n}}{6}
\left(
\boldsymbol{k}_1
+ 2\boldsymbol{k}_2
+ 2\boldsymbol{k}_3
+ \boldsymbol{k}_4
\right).
\label{eq:rk4-update}
\end{equation}

\paragraph{Errors}\label{error}
In order to validate the PDE solution, we compare the average stage $\bar s_i^{\mathrm{(PDE)}}$ computed from the PDE solution obtained in the previous section with the average stage $\bar s_i^{\mathrm{(ODE)}}$ computed from the ODE system using the numerical ODE solver implemented in Mathematica. Clearly, the two solutions should agree.

\begin{figure}[t]
\centering
\includegraphics[scale=1]{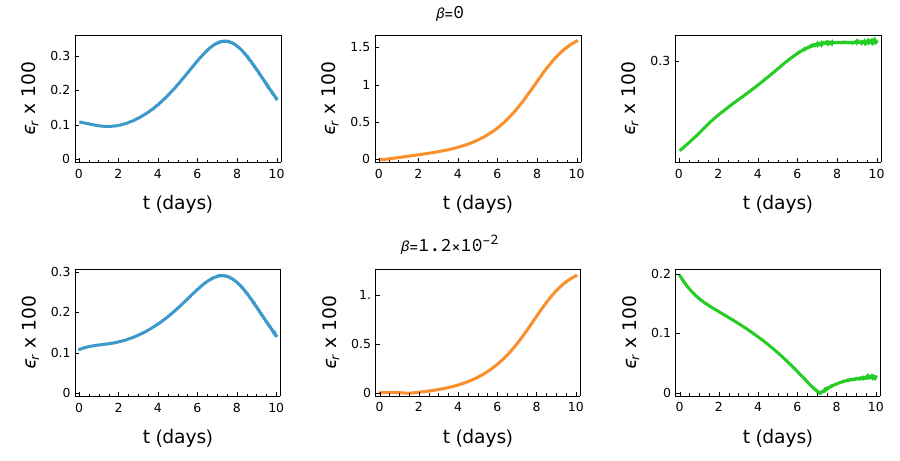}
\caption{Percentage error between the ODE and PDE solutions for different follicles, represented by different colors, and for $\beta=0$ (top row) and $\beta=1.2\times 10^{-2}$ (bottom row) for the model \eqref{langepde}, using $N_m=2500$.}
\label{fig:errors}
\end{figure}

In particular, we consider the model \eqref{langepde} with the parameters given in Fig. \ref{fig:bovine419} and the initial conditions in \eqref{init_pde_lange}. In Fig. \eqref{fig:errors}, we show the percentage error
\begin{equation}
100\epsilon_r
=100\frac{\left\vert \bar s_i^{\mathrm{(PDE)}}-\bar s_i^{\mathrm{(ODE)}}\right\vert}
{\bar s_i^{\mathrm{(ODE)}}}.
\end{equation}

The different colors indicate the follicles, following the color scheme in Fig. \ref{fig:lange}, while the different rows correspond to different values of $\beta$. In all cases, we observe that the error does not exceed $1.5\%$, which validates the PDE solution.

\section{Additional data}\label{additionalcow}

For comparison, we apply the methods of Section~\ref{langeconn} to data from the first follicular wave of another cow, shown in Figure \ref{fig:bovine3674}. In contrast to the cow presented in Figure \ref{fig:bovine419}, we observe larger discrepancies between model predictions and data, both in follicle size and estradiol concentration. Nevertheless, the model captures the overall qualitative trends of both quantities.

Two main differences between cow $419$ and cow $3674$ are noteworthy. First, the estradiol concentration data exhibit a decrease slightly before day $4$. Although the model does not accurately reproduce the concentration peak, it successfully captures this early drop.

Second, the atretic follicle (yellow) shows a slow decrease in size, accompanied by a gradual decline in stage. From a modeling perspective, both the competition coefficient $\kappa$ and the self-regulation coefficient $\eta$ are smaller for cow $3674$, leading to weaker competitive and regulatory effects during the later phases of development. As a consequence of the slow stage decay, the average stage approaches the maximal value $\bar{s} = 1$ even while the follicle is losing mass.

A possible biological interpretation is that, during the gradual degeneration process, the remaining viable cells and structures may continue to mature despite the overall reduction in follicle size. In the absence of histological data, however, this predicted behavior cannot be experimentally verified at this stage.

\begin{figure}[t]
\centering
\includegraphics[width=0.45 \textwidth]{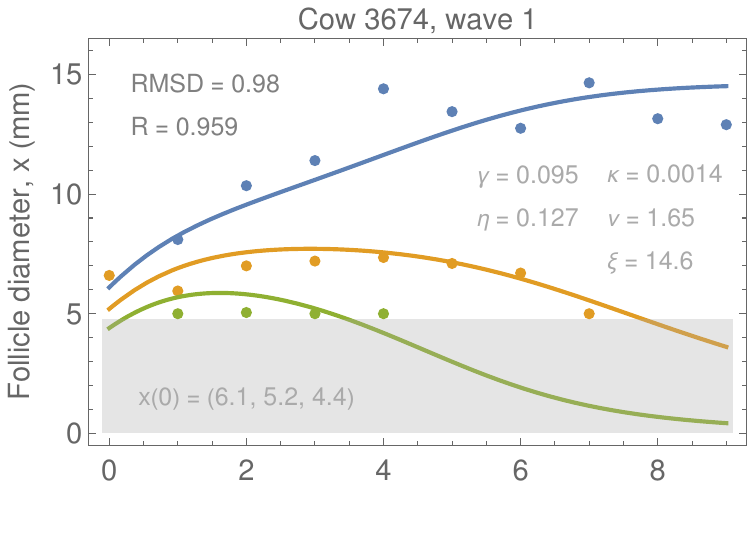} \quad
\includegraphics[width=0.45 \textwidth]{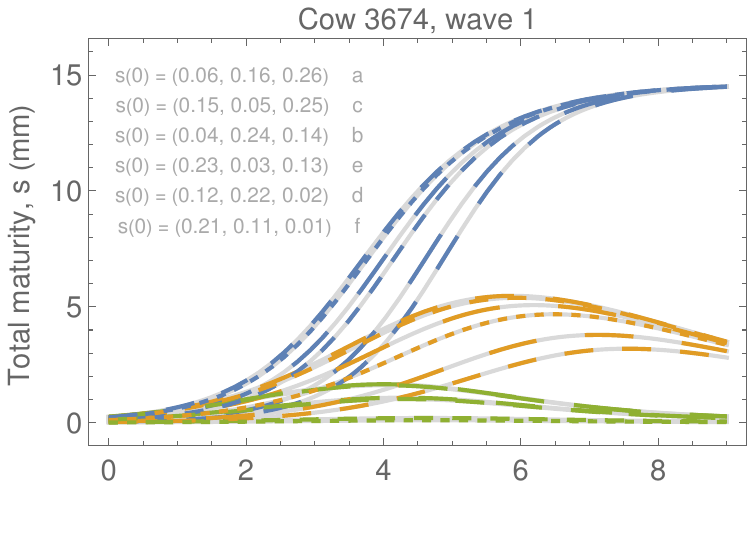}
\\[-4ex]
\includegraphics[width=0.45 \textwidth]{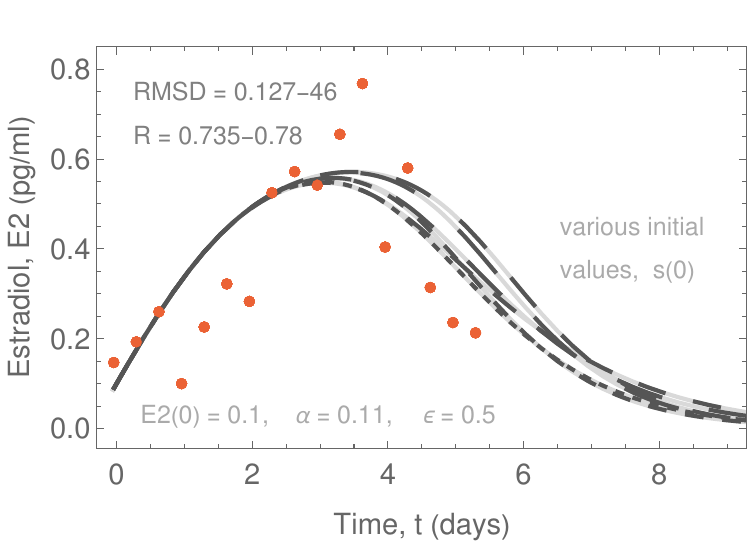} \quad
\includegraphics[width=0.45 \textwidth]{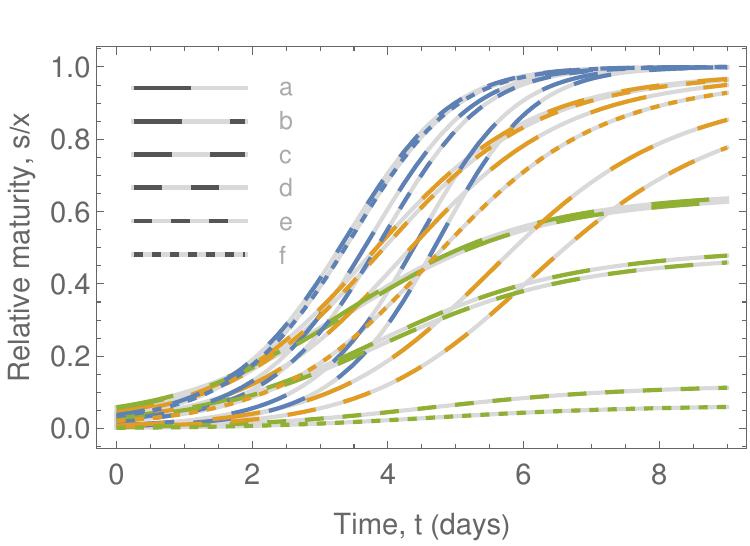} 
\\[2ex]
\caption{Fitting ovarian follicle sizes and blood estradiol concentrations to other cow data.
This is a copy of Fig.~\ref{fig:bovine419}
for cow no.~3674 in the data set of \cite{Cummins2012-oa}; this cow has also been studied in \cite{Lange2018-xf}.}
\label{fig:bovine3674}
\end{figure}    

\section{Characteristics of the PDE}
\label{app:characteristics}

For completeness, we briefly discuss the characteristic equations associated with the PDE \eqref{pdesystem}; cf.~\cite{evans2010partial,griffiths2015essential}. Expansion of the transport term by the product rule yields
\begin{equation}
\partial_t \phi_i
+ g_i \partial_m \phi_i
=
\left(-\partial_m g_i-\lambda_i+p_i\right)\phi_i,
\label{app:pdeexpanded}
\end{equation}
where $g_i=g_i(t,m,x,s)$, $\lambda_i=\lambda_i(t,m,x,s)$ and $p_i=p_i(t,m,x,s)$.
The characteristic curve $m_a^{(i)}(t)$ is defined as the solution to the characteristic equation
\begin{equation}
\dot{m}_a^{(i)}
=
g_i\left(t,m_a^{(i)},x,s\right),
\qquad
m_a^{(i)}(0)=f(a),
\label{app:explicitchar}
\end{equation}
where the supscript $a$ identifies the curve through its initial value via some bijective function $f$. Along a characteristic curve, the chain rule gives
\begin{equation}
\frac{d}{dt}
\phi_i\left(t,m_a^{(i)}(t)\right)
=
\partial_t\phi_i
+
\dot{m}_a^{(i)}\partial_m\phi_i.
\label{app:chainrule}
\end{equation}
Using \eqref{app:explicitchar} and \eqref{app:pdeexpanded}, we obtain
\begin{equation}
\frac{d}{dt}
\phi_i\left(t,m_a^{(i)}(t)\right)
=
\Big[
-\partial_m g_i
-\lambda_i
+p_i
\Big]\Big\vert_{m=m^{(i)}_a(t)}
\phi_i\left(t,m_a^{(i)}(t)\right).
\label{app:phichar}
\end{equation}
Thus, \eqref{app:explicitchar} describes the evolution of maturity along a characteristic curve, whereas \eqref{app:phichar} describes the evolution of the density function along that characteristic curve.
Finally, we observe that by substituting 
\eqref{a2}
in \eqref{app:explicitchar}, the characteristic equation becomes
\begin{equation}
\dot{m}_a^{(i)}
=
g_{i,0}\left(t,x,x\circ\bar{s}\right)
+
m_a^{(i)}
g_{i,1}\left(t,x,x\circ\bar{s}\right),
\label{app:explicit-characteristic}
\end{equation}
which is indeed the form used in the main body of this work.

\end{document}